\newtheorem*{theorem*}{Theorem}
\newtheorem*{lem}{Lemma}
\newtheorem*{claim*}{Claim}
\newtheorem*{corr*}{Corollary}
\newtheorem{definition}{Definition}
\numberwithin{definition}{section}
\newtheorem{remark}{Remark}
\numberwithin{remark}{section}
\newtheorem{theorem}{Theorem}
\numberwithin{theorem}{section}
\newtheorem{lemma}{Lemma}
\numberwithin{lemma}{section}
\newtheorem{corollary}{Corollary}
\numberwithin{corollary}{section}
\newtheorem{proposition}{Proposition}
\numberwithin{proposition}{section}
\newtheorem{example}{Example}
\numberwithin{example}{section}
\DeclareMathOperator*{\argmin}{arg\,min}
\newcommand{\diag}{\mathrm{diag}}
\newcommand{\R}{\mathbb{R}}
\newcommand{\dd}{\mathrm{d}}
\newcommand{\T}{T}
\newcommand{\Img}{\mathrm{Im}}
\newcommand{\Ker}{\mathrm{Ker}}
\newcommand{\RqT}{\left( \mathbb{R}^q \right)^*}
\newcommand{\RnT}{\left( \mathbb{R}^n \right)^*}
\newcommand{\xeq}{x_{\mathrm{eq}}}
\newcommand{\V}{\mathcal{V}}
\newcommand{\LT}{\mathfrak{l}}
\newcommand{\bX}{\overline{X}}
\newcommand{\bx}{\overline{x_0}}
\newcommand{\bU}{\overline{U}}
\newcommand{\A}{A}
\newcommand{\Z}{\mathcal{C}}
\newcommand{\Q}{Q}
\newcommand{\UU}{U}
\newcommand{\uu}{u}
\newcommand{\HH}{H}
\newcommand{\X}{\diag(x)}
\newcommand{\XX}{\diag \left( \frac{1}{x} \right)}
\newcommand{\ai}{\alpha_i } 
\newcommand{\gH}{g_H}
\newcommand{\gl}{g_{Q}}
\newcounter{savesection}
\newcounter{apdxsection}
\renewcommand\appendix{\par
  \setcounter{savesection}{\value{section}}%
  \setcounter{section}{\value{apdxsection}}%
  \setcounter{subsection}{0}%
  \gdef\thesection{\@Alph\c@section}}
\newcommand\unappendix{\par
  \setcounter{apdxsection}{\value{section}}%
  \setcounter{section}{\value{savesection}}%
  \setcounter{subsection}{0}%
  \gdef\thesection{\@arabic\c@section}}
\title{Information geometry of chemical reaction networks: Cramer-Rao bound and absolute sensitivity revisited}
\author[1]{Dimitri Loutchko\thanks{d.loutchko@edu.k.u-tokyo.ac.jp}}
\author[2]{Yuki Sughiyama}
\author[1,3]{Tetsuya J. Kobayashi\thanks{http://research.crmind.net}}
\affil[1]{{Institute of Industrial Science, The University of Tokyo, 4-6-1, Komaba, Meguro-ku, Tokyo 153-8505 Japan.}}
\affil[2]{Graduate School of Information Sciences, Tohoku University, Sendai 980-8579, Japan}
\affil[3]{Department of Mathematical Informatics,
Graduate School of Information Science and Technology,
The University of Tokyo, Tokyo 113-8654, Japan.}
\affil[4]{Universal Biology Institute, The University of Tokyo,
7-3-1, Hongo, Bunkyo-ku, 113-8654, Japan.}
\date{}
\begin{document}

\maketitle

\begin{abstract}
Information geometry is based on classical Legendre duality but allows to incorporate additional structure such as algebraic constraints and Bregman divergence functions.
It is naturally suited, and has been successfully used, to describe the thermodynamics of chemical reaction networks (CRNs) based on the Legendre duality between concentration and potential spaces, where algebraic constraints are enforced by the stoichiometry.
In this article, the Riemannian geometrical aspects of the theory are explored.
It is shown that duality between concentration and potential spaces and the natural parametrizations of equilibrium subspace are isometries, which leads to a multivariate Cramer-Rao bound through the comparison of two Riemannian metric tensors.

In the subsequent part, the theory is applied to the recently introduced concept of absolute sensitivity.
Using the Riemannian geometric tools, it is proven that the absolute sensitivity is a projection operator onto the tangent bundle of the equilibrium manifold.
A linear algebraic characterization and explicit results on first order corrections to the thermodynamics of ideal solutions are provided.
Finally, the theory is applied to the IDHKP-IDH glyoxylate bypass regulation system.

The novelty of the theory is that it is applicable to CRNs with non-ideal thermodynamical behavior, which are prevalent in highly crowded cellular environments due to various interactions between the chemicals.
Indeed, the analyzed example shows remarkable behavior ranging from hypersensitivity to negative-self regulations.
These are effects which usually require strongly nonlinear reaction kinetics.
However, here, they are obtained by tuning thermodynamical interactions providing a complementary, and physically well-founded, viewpoint on such phenomena.
\end{abstract}

\section{Introduction}

The theory of chemical reaction networks (CRNs) plays a central role in the understanding of biomolecular processes and mechanisms such as circadian clocks \cite{gonze2006,hatakeyama2012}, proofreading kinetics \cite{hopfield1974,banerjee2017}, absolute concentration robustness \cite{barkai1997robustness,alon1999robustness,anderson2014stochastic,joshi2023reaction,puente2024absolute,kaihnsa2024absolute,meshkat2022absolute}, perfect adaptation \cite{hirono2023,khammash2021perfect}, hypersensitivity \cite{acar2008stochastic,tyson2008biological,reyes2022numerical}, among many others \cite{mikhailov2017,alon2019}.

The success of CRN theory is rooted in its well-founded and constantly evolving mathematical basis.
Initiated by Horn, Jackson, and Feinberg \cite{horn1972,feinberg1972complex,horn1972necessary,feinberg2019foundations}, the contemporary mathematics of CRN is firmly rooted in the theory of differential equations \cite{feinberg1980chemical,mincheva2008multigraph}, graph theory \cite{craciun2006multiple,craciun2011graph} and algebraic geometry \cite{craciun2009toric,craciun2022disguised,craciun2023structure} but also spreads across various other disciplines such as homological algebra \cite{hirono2021} and, most recently, information geometry.
The information geometrical approach makes use of the analogy between the vector of chemical concentrations and probability distributions on finite spaces and allows to transfer techniques from statistics and information theory to study CRN.
This viewpoint has allowed to establish a geometrical understanding of the thermodynamics of CRN \cite{sughiyama2022hessian,kobayashi2022kinetic,kobayashi2023information} and provided new insights into thermodynamic uncertainty relations and speed limits \cite{loutchko2023geometry}, growing systems \cite{sughiyama2022chemical}, and finite time driving \cite{loutchko2022riemannian}.

In this article, this line of work is extended by exploring the Riemannian geometrical aspects of the information geometry of CRNs.
All spaces are equipped with strictly convex potential functions inducing Legendre duality, and the natural Riemannian metrics are provided by the Hessians of the respective convex functions.
It is then shown that Legendre transformations are isometries with respect to this structure.
In fact, the Riemannian metric tensor can be thought of as encoding the information on the derivative of the Legendre transformation at each point.
This compatibility of derivatives and metric tensors is used throughout the text.
With regard to CRN theory, the relationships between the concentration and potential spaces, the respective equilibrium subspaces and their natural parameter spaces are established and condensed into commutative diagrams.
It is then shown that these diagrams are diagrams of isometries which leads to formulation of a multivariate Cramer-Rao bound for CRNs as the comparison of two Riemannian metric tensors.

In the second part, the concept of {\it absolute sensitivity}, which was introduced in \cite{loutchko2024cramer}, is analyzed through the lens of the Riemannian geometric results.
The classical sensitivity matrix encodes how the steady state concentrations of a CRN change upon infinitesimal perturbations of the linear conserved quantities.
It contains valuable information on the robustness of the CRN but the numerical values of the matrix elements depend on the choice of a basis of $\Ker[S^*]$ (where $S$ is the stoichiometric matrix of the respective CRN) and are therefore difficult to interpret.
In \cite{loutchko2024cramer}, a construction was provided which retains the function of the sensitivity matrix but rectifies the dependence on the choice of a basis.
The essential idea is to trace how an infinitesimal perturbation $\delta x^i$ in the concentration of a chemical $X_i$ of a state $x$ changes the steady state and thus propagates to changes of the concentrations $A^j_i(x) \delta x^i$ of all other chemicals $X_j$ in the adjusted steady state.
The quantity $A^j_i(x)$ is called the absolute sensitivity of $X_j$ with respect to $X_i$.
Such ideas have also been explored from the viewpoint of computational algebra in \cite{feliu2019}.

In \cite{loutchko2024cramer}, a linear algebraic characterization of absolute sensitivity was provided for {\it quasi-thermostatic} CRNs via a linear algebraic Cramer-Rao bound.
This article puts this characterization into a proper information geometric context by promoting the local linear algebra to global tensor algebra and the matrix of absolute sensitivities to a projection operator, respectively.
Most importantly, it becomes possible to study the geometry induced by potential functions of non-ideal systems\footnote{Quasi-thermostatic CRNs are precisely the CRNs which have the geometry of equilibrium CRNs with the thermodynamics of ideal solutions.}.
In doing so, the scope of applicability is greatly enhanced as the effects of crowded environments with, e.g., volume exclusion effects as well as attractive and repulsive interactions between the chemicals can be analyzed.
This is important for biochemical applications, because cellular environments are far from ideal, and yet the CRNs of {\it in vivo} systems are usually described by mass action kinetics and ideal potential functions.
Using the example of the core module of the IDHKP-IDH glyoxylate bypass regulation system, it is demonstrated how kinetically unusual but biologically important effects can be the result of thermodynamical interactions in a crowded environment.

The article is structured as follows:
In the remainder of this introductory section, the mathematical notation is introduced and an outline of the main constructions and results is given.
In Sections \ref{sec:CRN} and \ref{sec:V_construction}, the basic notions of CRN theory and Legendre duality are introduced, and examples of equilibrium manifolds are given.
Section \ref{sec:information_geometry} is the main mathematical section of this article as it contains the analysis of Riemannian geometrical aspects in \ref{sec:Riemannian} and the proof of the Cramer-Rao bound in \ref{sec:Hessian_CRB}.
This is used to derive the main results on absolute sensitivity as a projection operator in Section \ref{sec:abs_sens}, followed by an analysis of first order perturbations of the geometry induced by ideal potential functions in Section \ref{sec:first_order}.
An example is presented in Section \ref{sec:IDH}, with a summary and outlook in Section \ref{sec:discussion}.

\subsection*{Notation}

\paragraph{Vector spaces and duality}
The canonical basis of $\R^n$ is denoted by $\{e_i\}_{i=1}^n$ and the respective dual basis of $(\R^n)^*$ is $\{e^i\}_{i=1}^n$.
A vector $x = \sum_{i=1}^n x^i e_i \in \R^n$ is represented by its coordinates as $x = (x^1,\dotsc,x^n)$, written shortly as $(x^i)$, and $y = \sum_{i=1}^n y_i e^i \in \RnT$ by $y = (y_1,\dotsc,y_n)$ and abbreviated by $(y_i)$.
The bilinear dual pairing $\langle.,.\rangle: \R^n \times \RnT \rightarrow \R$ is thus given by $\langle x,y \rangle = \sum_{i=1}^n x^iy_i$.
\paragraph{Linear maps}
A $n \times m$ matrix $\left[L^i_j\right]_{j=1,\dotsc,m}^{i=1,\dotsc,n}$ is identified with a linear map $L : \R^m \rightarrow \R^n$ as $\epsilon_j \mapsto \sum_{i=1}^n L_j^i e_i$, where $\{e_i\}_{i=1}^n$ and $\{ \epsilon_j \}_{j=1}^m$ are the respective canonical basis vectors of $\R^n$ and $\R^m$.
The adjoint map $L^*: \RnT \rightarrow (\R^m)^*$ to $L$, defined by the property $\langle L y, z \rangle = \langle y , L^* z\rangle$ for $y \in \R^m, z \in (\R^n)^*$, is represented by the transpose matrix $L^*$ via $e^i \mapsto \sum_{j=1}^m L_j^i \epsilon^j$.
\paragraph{Differential geometry}
All manifolds $M$ considered in this text are open submanifolds of some $\R^n$ and therefore the global coordinate systems of $\R^n$ are used.
Let $x = (x^1,\cdots,x^n)$ be a point of $\R^n$.
The tangent space $T_x \R^n$ is spanned by $\left\{ \frac{\partial}{\partial x^i}, \dotsc,  \frac{\partial}{\partial x^n}\right\}$ and its dual, the contangent space $T_x^* \R^n$, is spanned by $\{ \dd x^1, \dotsc,  \dd x^n\}$.
The tangent bundle is given by $T \R^n= \sqcup_{x \in \R^n} T_x \R^n$ with points $(x,v) \in T\R^n$ written as $x = \sum_{i=1}^n x^i e_i$ and $v = \sum_{i=1}^n v^i \frac{\partial}{\partial x^i}$, and analogously for the cotangent bundle $T^*\R^n= \sqcup_{x \in \R^n} T_x^* \R^n$.
For $M \subset \R^n$, the bundle $TM$ is restriction to $M$ of the subbundle of $T\R^n$ which is tangent to $M$, and $T^*M$ is its dual.

A (pseudo-)Riemannian metric on $\R^n$ is given by a symmetric positive (semi) definite tensor $g = \sum_{i,j = 1}^n g_{ij} \dd x^i \otimes \dd x^j$, which is equivalent to the bundle morphism $g: T\R^n \rightarrow T^*\R^n$, $(x,v) \mapsto \left(x, \sum_{i,j=1}^n v^i g_{ij}(x) \dd x^j \right)$.
Locally at a point $x \in \R^n$, this morphism is identified with the linear map $g(x): T_x \R^n \rightarrow T_x^* \R^n, \frac{\partial}{\partial x^i} \mapsto \sum_{j=1}^n g_{ij}(x) \dd x^j$ and is represented by the ($x$-dependent) matrix $\left[g_{ij}(x)\right]_{i,j=1}^n$.
The global object is denoted by $g$ and its local representation in $x$-coordinates by $g(x)$, with the exception of Sections \ref{sec:Hessian_CRB} and \ref{sec:first_order_comp} for notational convenience.

The same applies to the derivative and Jacobian matrices:
For a map $f: M \rightarrow N$ between manifolds, the derivative is the bundle morphism $Df:TM \rightarrow TN$, explicitly given by $(x,v) \mapsto \left(y=f(x), \sum_{i=1}^n \sum_{j=1}^m v^i \frac{\partial y^j}{\partial x^i} \frac{\partial}{\partial y^j} \right)$.
Locally at $x \in M$, the Jacobian is the linear map $D_xf: T_xM \rightarrow T_{y=f(x)}N$ represented by the matrix $\left[D_xf\right]^j_i = \frac{\partial y^j}{\partial x^i}$.
More generally, for any bundle morphism $A$, its local representation at $x$ is denoted by $A(x)$, with the exception of Sections \ref{sec:Hessian_CRB} and \ref{sec:first_order_comp}.

\subsection*{Mathematical Summary}

In this article, the focus is put on equilibrium CRNs, following \cite{sughiyama2022hessian}.
However, because the results are based on geometry, they are valid for any steady state manifold with the same geometry.
This turns out to be valid (at least locally) for a wide class of steady states \cite{agazzi2018geometry,snarski2021hamilton}.
For the thermodynamics of ideal solutions the equilibrium condition is equivalent to the Wegschneider condition in chemical kinetics, cf. \cite{kobayashi2022kinetic}.
For non-ideal solutions, more intricate geometries of the equilibrium states are obtained.

\paragraph{Geometrical setup}
Consider a CRN with $n$ chemicals $X_1,\dotsc,X_n$ and $r$ reactions, let $X$ be a convex subspace of $\R^n_{>0}$ of full dimension, called the {\it concentration space}, and $S$ the stoichiometric matrix of the CRN.
The dynamics of the CRN is given by $\frac{\dd x}{\dd t} = Sj(x)$ for some flux vector $j(x) \in \R^r$ which restricts the trajectories for the initial condition $x_0 \in X$ to the stoichiometric polytope
\begin{equation*}
    P(x_0) := (x_0 + \Img[S]) \cap X.
\end{equation*}
The stoichiometric polytopes give a fibration of the concentration space with the tautological base space $\overline{X}:= X/\Img[S]$,
which comes with a projection map $p: X \rightarrow \overline{X}$.
The elements of $\overline{X}$ are known as {\it linear conserved quantities} in the CRN literature because they are invariants of the CRN dynamics.

Another natural base of the fibration is given by the manifold of equilibrium points, called the {\it equilibrium manifold} which is constructed following \cite{sughiyama2022hessian}:
For an equilibrium CRN, there exists a strictly convex free energy function $\phi \in C^2(X,\R)$ which induces a one-to-one correspondence with the chemical potential space $Y \subset \RnT$ via Legendre duality $X \ni x \mapsto y(x) := \mathrm{argmax}_{y \in Y} [\langle x,y \rangle - \phi(x)] \in Y$.
This map is abbreviated as $\LT_X : X \rightarrow Y$.
The equilibrium points $y \in Y$ are characterized\footnote{
In thermodynamics, a more familiar characterization of equilibrium points is the variational principle of free energy minimization which assigns to each $x_0 \in X$ the equilibrium point $\xeq(x_0) := \argmin_{x \in P(x_0)} \mathrm{D}[x\| y^0]$, where $\mathrm{D}[. || .] : X \times Y \rightarrow \R$ is the Bregman divergence.
In \cite{sughiyama2022hessian}, it is shown that the variational characterization is equivalent to $\xeq (x_0)$ being the unique point in the intersection $P(x_0) \cap \V$.
\label{foot:variational}
} 
by the condition $y \in \Z := \left( y^0 +\Ker[S^*] \right) \cap Y$ for a base point $y^0 \in Y$, and the equilibrium manifold is therefore given by the inverse Legendre transform:
\begin{equation*}
    \V := \LT^{-1}_X (\Z).
\end{equation*}

Choosing a basis $\{u^j\}_{j=1}^q \subset \RnT$ of $\Ker[S^*]$ yields the map $U = (u^1,\dotsc,u^q)^T : X \rightarrow \R^q$ which descends to an injective map $\overline{U}: \overline{X} \rightarrow \R^q$ and which therefore gives the coordinate system $H:= \overline{U}\left(\overline{X}\right)$ for the space of conserved quantities $\overline{X}$. 
Moreover, using the space of coordinates $H$ for $\overline{X}$, the projection $X \rightarrow \overline{X}$ becomes the linear map $U: X \rightarrow H$.
This gives the adjoint map $U^*: \RqT \rightarrow \RnT$, and its shift $f: \RqT \rightarrow \RnT, \lambda \mapsto y^0 + U^*\lambda$ yields the global coordinate system $Q: = f^{-1} (\Z)$ for the space of equilibrium potentials $\Z$.
The following commutative diagram of spaces (Diagram (\ref{eq:diag_comm}) in the main text) summarizes the above discussion:
\begin{equation} \label{eq:diag_comm_intro}
    \begin{tikzcd}[nodes in empty cells]
    & & (X,\phi) \ar[rr,shift left=.55ex,"\LT_{X}"] \ar[ddll,swap,"p"] & & (Y,\phi^*) \ar[ll,shift left=.55ex,"\LT^{-1}_{X}"] \\
    & & (\mathcal{V},\left.\phi\right|_{\mathcal{V}}) \arrow[u, hook] \ar[d, shift right=.55ex, swap, "U"] \ar[rr,shift left=.55ex,"\LT_X"]  & &  
    (\Z,\left.\phi^*\right|_{\Z}) \arrow[u, hook] \ar[ll,shift left=.55ex,"\LT^{-1}_X"]\\
    \overline{X} \ar[rr, swap,"\overline{U}"] \ar[urr,dotted,swap,"\gamma "] & &(H,\psi)  \ar[rr,shift left=.55ex,"\LT_H"] \ar[u, shift right=.55ex, swap, dotted, "\beta"] & & (\Q, \psi^*) \ar[ll,shift left=.55ex,"\LT^{-1}_H"] \ar[u, "f"].
\end{tikzcd}
\end{equation}
\noindent The maps $\gamma: \overline{X} \rightarrow X $ and $\beta: H \rightarrow X$ are sections to the maps $p$ and $U$.
They are central objects in this article and are constructed as follows:
Pick a point $\overline{x_0} \in \overline{X}$, lift it to $x_0 \in X$ and let $\gamma(\overline{x_0})$ be the unique equilibrium point in the polytope $P(x_0)$, i.e.,
\begin{equation*}
    \gamma(\overline{x_0}) := \xeq(x_0) = \V \cap P(x_0).
\end{equation*}
This is independent of the choice of lifting.
Giving the coordinates $H$ to the space $\overline{X}$ makes $\gamma$ into a section $\beta: H \rightarrow X$.

Finally, the six spaces in the middle and right columns of the diagram are equipped with strictly convex functions which induce the Legendre dualities $\LT_X$ (discussed above) and $\LT_H$ (induced by the convex function $\psi^*(\lambda) := \phi^*(y^0 + U^*\lambda),\lambda \in Q$ and its Legendre dual $\psi$).
This is the content of Proposition \ref{proposition:Birch}.
This setup for equilibrium CRN is presented in detail in Sections \ref{sec:CRN} and \ref{sec:V_construction}.
As examples, the ideal solution, quadratic corrections and the solution with van der Waals type interactions are discussed in Examples \ref{ex:quasiTS}-\ref{ex:vdW}.

\paragraph{Riemannian structure and Cramer-Rao bound}

In Section \ref{sec:information_geometry}, the Riemannian metric structure of the spaces from Diagram (\ref{eq:diag_comm_intro}) is introduced via the Hessians of the respective potential functions, i.e, 
\begin{equation*}
    g_X := \sum_{i,j=1}^n \frac{\partial^2 \phi(x)}{\partial x^i \partial x^j} ~\dd x^i \otimes \dd x^j,
\end{equation*}
and analogously for the other spaces.
This metric is known as Ruppeiner or Weinhold metric \cite{weinhold1975metric,ruppeiner1979thermodynamics}. 
It is tightly connected to optimal finite time driving and is the thermodynamical analogue of the Fisher information metric \cite{salamon1983thermodynamic,crooks2007measuring,loutchko2022riemannian}.
This metric contains the same data as the linearization of the Legendre transform in the sense that the following diagram of bundle morphisms commutes:
\begin{equation} \label{eq:diag_compatibility_intro}
    \begin{tikzcd}[nodes in empty cells]
    TX \ar[rr,"D\LT"] \ar[dr,swap,"g_X"] & & TY \\
    & T^*X. \ar[ur,swap,"\dd x^i \mapsto \frac{\partial}{\partial y_i}"] & 
\end{tikzcd}
\end{equation}
This compatibility leads to Theorem \ref{theorem:Birch} which states:
\begin{theorem*}
    The Diagram (\ref{eq:diag_comm_intro}) is a commutative diagram of isometries.
\end{theorem*}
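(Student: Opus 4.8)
The plan is to take commutativity of the underlying diagram of maps in~(\ref{eq:diag_comm_intro}) as already established in the construction of Sections~\ref{sec:CRN} and~\ref{sec:V_construction}, and to verify separately that every arrow is an isometry. Proposition~\ref{proposition:Birch} supplies strictly convex potentials on all six labelled spaces realizing the dualities $\LT_X$ and $\LT_H$, so the Hessian metrics $g_X,g_Y,g_{\V},g_{\Z},g_H,g_Q$ are defined, with the understanding that $g_{\V}=\iota_{\V}^{*}g_X$ and $g_{\Z}=\iota_{\Z}^{*}g_Y$ are the pullbacks along the inclusions $\iota_{\V}:\V\hookrightarrow X$, $\iota_{\Z}:\Z\hookrightarrow Y$ (the diagram label ``$\left.\phi\right|_{\V}$'' being read in this sense, since $\V$ carries no intrinsic affine structure).

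First I would isolate the general fact that a Legendre transform is an isometry of Hessian metrics. For a Legendre pair $(\varphi,\varphi^{*})$, Diagram~(\ref{eq:diag_compatibility_intro}) says that $D_x\LT$ and $g_\varphi(x)$ have the same matrix (under $\dd x^{i}\mapsto\partial/\partial y_{i}$), and applying this also to $\LT^{-1}$ together with the inverse function theorem gives $g_{\varphi^{*}}(\LT(x))=g_\varphi(x)^{-1}$. Hence for tangent vectors $v,w$,
\begin{equation*}
 (\LT^{*}g_{\varphi^{*}})(v,w)=(g_\varphi v)^{\top}g_\varphi^{-1}(g_\varphi w)=v^{\top}g_\varphi w=g_\varphi(v,w),
\end{equation*}
using symmetry of $g_\varphi$; so $\LT$ and $\LT^{-1}$ are isometries. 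Applying this to $(\phi,\phi^{*})$ on $(X,Y)$, and — via Proposition~\ref{proposition:Birch} — to $(\psi,\psi^{*})$ on $(H,Q)$, settles the two horizontal Legendre arrows. The inclusions $\iota_{\V},\iota_{\Z}$ are isometric embeddings by definition, and since $\LT_X(\V)=\Z$, the identity $\iota_{\Z}\circ(\LT_X|_{\V})=\LT_X\circ\iota_{\V}$ gives $(\LT_X|_{\V})^{*}g_{\Z}=\iota_{\V}^{*}\LT_X^{*}g_Y=\iota_{\V}^{*}g_X=g_{\V}$, so $\LT_X|_{\V}$ is an isometry too.

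Next I would treat the affine map $f$, which is where the only genuine computation sits. Since $f(\lambda)=y^{0}+U^{*}\lambda$ is affine with constant derivative $Df=U^{*}$, and $\psi^{*}=\phi^{*}\circ f$ by construction, the chain rule for the Hessian of a composition with an affine map gives $g_Q=\mathrm{Hess}(\psi^{*})=(U^{*})^{\top}\mathrm{Hess}(\phi^{*})\,U^{*}=f^{*}g_Y$; factoring $f$ through $\iota_{\Z}$ this reads $f^{*}g_{\Z}=g_Q$, so $f$ and $f^{-1}$ are isometries. Finally, commutativity of the bottom square is $\LT_X|_{\V}=f\circ\LT_H\circ U$, so $U=\LT_H^{-1}\circ f^{-1}\circ(\LT_X|_{\V})$ is a composition of the isometries already produced, hence an isometry, and so is its inverse. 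Equipping $\bX$ with the metric $\gamma^{*}g_X$ (which equals $\bU^{*}g_H$ by the previous sentence, the two candidate metrics on $\bX$ thus agreeing) makes $\gamma$, $\bU$, and $\beta=\gamma\circ\bU^{-1}$ isometries as well, exhausting every arrow.

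The hard part, conceptually, is the step for $f$ and $U$: a priori $g_H$ is a flat Hessian metric on an open region of $\R^{q}$ while $g_{\V}$ is the induced metric on a nonlinear submanifold of $\R^{n}$, and the claim is that $U$ intertwines them. What unlocks it is that the Legendre pair $(\psi,\psi^{*})$ differs from $(\phi,\phi^{*})$ only by the \emph{affine} substitution $f$ in the potential variable, which is exactly the kind of change that preserves Hessian metrics; the curvature of $\V\subset X$ is then reconciled automatically by the two Legendre isometries. The remaining care is purely bookkeeping — matching the $U$ versus $U^{*}$ index placement to the paper's conventions, and keeping the pullback/restriction interpretation of $g_{\V}$ and $g_{\Z}$ consistent throughout.
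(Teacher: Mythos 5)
Your proposal is correct and follows essentially the same route as the paper: the Legendre arrows are handled by the mutual-inverse relation between the two Hessians (the paper's Lemma \ref{lemma:isometry} is exactly your matrix computation $(g_\varphi v)^{\top}g_\varphi^{-1}(g_\varphi w)=v^{\top}g_\varphi w$), the inclusions are isometries by definition of the restricted metrics, and the one genuine computation is $f^{*}g_Y=g_Q$ via the vanishing of $\partial^{2}y_k/\partial\lambda_i\partial\lambda_j$ for the affine map $f$, after which $U$ and $\beta$ are isometries by commutativity (this is the paper's Equation (\ref{eq:gl})). Your additional remark equipping $\overline{X}$ with the pullback metric so that $\gamma$ and $\overline{U}$ are also isometries is a harmless completion the paper leaves implicit.
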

Then, by applying the Theorem, after some diagram chasing one obtains an expression for the $g_X$-orthogonal restriction of the metic $g_X$ to $\V$ through the Hessian metric $g_H$ on $H$ in Lemma \ref{lemma:Hessian_metric}:
\begin{lem}
The matrix $U^* \gH(\eta) U$ defines a pseudo-Riemannian metric on $X$.
The metric agrees with $g_X$ on $T\mathcal{V}$ and vanishes on its $g_X$-orthogonal complement.
\end{lem}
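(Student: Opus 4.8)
\noindent\emph{Proof plan.} The plan is to establish the three assertions of the lemma — that $\UU^{*}\gH(\eta)\UU$ is a well-defined pseudo-Riemannian metric, that it restricts to $g_{X}$ on $T\V$, and that it annihilates the $g_{X}$-orthogonal complement of $T\V$ — more or less independently, appealing to Theorem~\ref{theorem:Birch} only for the second. I would begin with well-definedness. Writing $\eta = \UU(x)$, which depends linearly (hence smoothly) on $x$, the assignment $x \mapsto \UU^{*}\gH(\UU(x))\UU$ is a smooth field of symmetric bilinear forms on $TX$ (symmetry inherited from that of $\gH$), and for $v \in T_{x}X$ one has $\langle v, \UU^{*}\gH(\eta)\UU v\rangle = \gH(\eta)(\UU v, \UU v) \geq 0$ because $\gH$, the Hessian metric on $H$, is the Hessian of the strictly convex function $\psi$ of Proposition~\ref{proposition:Birch}; hence the form is positive semi-definite. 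Since $\UU$ has full row rank $q = n - \mathrm{rank}[S]$ — its rows being a basis of $\Ker[S^{*}]$ — the adjoint $\UU^{*}$ is injective and $\gH(\eta)$ is invertible, so the radical of $\UU^{*}\gH(\eta)\UU$ equals $\Ker[\UU]$; and since $\Ker[S^{*}]$ is the annihilator of $\Img[S]$, taking the annihilator once more gives $\Ker[\UU] = \Img[S]$.

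The crux is then to identify $\Img[S]$ with $(T_{x}\V)^{\perp_{g_{X}}}$. To describe $T_{x}\V$ I would use $\V = \LT_{X}^{-1}(\Z)$ together with $T_{y(x)}\Z = \Ker[S^{*}]$ and the compatibility Diagram~(\ref{eq:diag_compatibility_intro}), which exhibits $D_{x}\LT_{X}$ as $g_{X}(x)$ followed by the canonical identification $\dd x^{i}\mapsto\partial/\partial y_{i}$; this gives $T_{x}\V = g_{X}(x)^{-1}(\Ker[S^{*}])$, with $\Ker[S^{*}]$ viewed inside $T_{x}^{*}X$ in the obvious way. With this description, orthogonality is immediate: for $v\in\Img[S]$ and $w\in T_{x}\V$ one gets $g_{X}(x)(v,w) = \langle v, g_{X}(x)w\rangle = 0$, since $g_{X}(x)w\in\Ker[S^{*}] = (\Img[S])^{\circ}$. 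The same pairing shows $\Img[S]\cap T_{x}\V = 0$, because any $v$ in the intersection satisfies $g_{X}(x)(v,v) = 0$ and is therefore zero by strict convexity of $\phi$; and since $\dim\Img[S] + \dim T_{x}\V = \mathrm{rank}[S] + q = n$, the direct sum $T_{x}X = T_{x}\V\oplus\Img[S]$ is $g_{X}$-orthogonal, whence $(T_{x}\V)^{\perp_{g_{X}}} = \Img[S]$. Together with the first paragraph, this shows $\UU^{*}\gH(\eta)\UU$ vanishes exactly on $(T_{x}\V)^{\perp_{g_{X}}}$.

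For the agreement on $T\V$, I would invoke Theorem~\ref{theorem:Birch}: the restriction $\UU|_{\V}\colon\V\to H$ is an isometry onto $(H,\gH)$, where $\V$ carries the metric induced by restriction of $g_{X}$. As $\UU$ is linear, $D_{x}(\UU|_{\V}) = \UU|_{T_{x}\V}$, so for $v,w\in T_{x}\V$ one obtains $g_{X}(x)(v,w) = \gH(\eta)(\UU v,\UU w) = \langle v, \UU^{*}\gH(\eta)\UU w\rangle$, which is exactly the asserted agreement.

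I expect the only step going beyond routine bookkeeping to be the $g_{X}$-orthogonal splitting $T_{x}X = T_{x}\V\oplus\Img[S]$. Once $T_{x}\V$ is identified as $g_{X}(x)^{-1}(\Ker[S^{*}])$ via the compatibility Diagram~(\ref{eq:diag_compatibility_intro}), both the orthogonality and the triviality of the intersection reduce to the single elementary fact $\Ker[S^{*}] = (\Img[S])^{\circ}$, after which a dimension count finishes the argument; everything else is a direct consequence of the strict convexity of the potentials (Proposition~\ref{proposition:Birch}) and of the isometry property (Theorem~\ref{theorem:Birch}).
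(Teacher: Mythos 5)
Your proposal is correct and follows essentially the same route as the paper: agreement on $T\V$ via the isometry of Theorem~\ref{theorem:Birch}, and vanishing on the complement via the identification $(T_x\V)^{\perp_{g_X}} = \Ker[U] = \Img[S]$, which is exactly the content of the paper's Lemma~\ref{lem:comp} (your derivation through $T_x\V = g_X(x)^{-1}\bigl(\Ker[S^*]\bigr)$ and the annihilator duality is a mild repackaging of the paper's chain-rule computation, and your positive-semidefiniteness/radical discussion makes the ``pseudo-Riemannian'' claim more explicit than the paper does). The only detail the paper adds is the remark that for $x \notin \V$ one should read $\V$ as the leaf through $x$ of the foliation of $X$ obtained by varying the base point $y^0$, which your argument accommodates without change.
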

Note that the matrix $U^* \gH(\eta) U$ represents the pullback of the metric $g_H$ to $\V$ via the map $U$.
This leads to the final result of the section on Riemannian geometry in Theorem \ref{thm:Hessian_CRB}, which states the multivariate Cramer-Rao bound for CRN:
\begin{theorem*}(Hessian geometric Cramer-Rao bound)
For each $x \in X$, the matrix inequality
\begin{equation*}
    g_X(x) \geq U^* \gH(\eta) U
\end{equation*}
holds, i.e., the difference between the two matrices is positive semidefinite.
The matrix $\gH(\eta)$ is explicitly given by $\gH(\eta) = \left[\UU g_X(x)^{-1} \UU^*\right]^{-1}$.
\end{theorem*}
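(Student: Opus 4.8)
The plan is to split the claim into two parts: (i) the closed form $\gH(\eta) = \left[\UU\, g_X(x)^{-1}\, \UU^*\right]^{-1}$, and (ii) the matrix inequality $g_X(x) \succeq \UU^*\gH(\eta)\UU$; the second becomes a pointwise statement about positive definite matrices once (i) is available. Throughout I will use that $g_X(x) \succ 0$ (it is the Hessian of the strictly convex $\phi$) and that $\UU^* \colon \RqT \to \RnT$ is injective, its columns being the basis vectors $\{u^j\}$ of $\Ker[S^*]$; hence $\UU\, g_X(x)^{-1}\, \UU^* \succ 0$ and is invertible.

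For (i) I would read the definitions off Diagram~(\ref{eq:diag_comm_intro}). Since $\psi^*(\lambda) = \phi^*(y^0 + \UU^*\lambda)$, the chain rule gives $\mathrm{Hess}\,\psi^* = \UU\,(\mathrm{Hess}\,\phi^*)\,\UU^*$ at corresponding points; since $(\psi,\psi^*)$ are Legendre dual (Proposition~\ref{proposition:Birch}) their Hessians are mutually inverse, so $\gH(\eta) = (\mathrm{Hess}\,\psi^*)^{-1}$; and since $(\phi,\phi^*)$ are Legendre dual, $\mathrm{Hess}\,\phi^* = g_X(x)^{-1}$ at corresponding points. Composing the three identities yields the formula. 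Equivalently — the route I would present, since it recycles earlier work — I would invoke Lemma~\ref{lemma:Hessian_metric}: it asserts that the pullback $\UU^*\gH(\eta)\UU$ agrees with $g_X$ on $T\V$; combined with $\Img[\UU^*] = \Ker[S^*]$ and $T_x\V = g_X(x)^{-1}\Ker[S^*]$ (which follows from the compatibility Diagram~(\ref{eq:diag_compatibility_intro}), $D\LT_X = g_X$, and Theorem~\ref{theorem:Birch}), the requirement $\UU^* G\, \UU\, v = g_X(x)\,v$ for all $v \in T_x\V$ pins down $G$ uniquely, and solving the linear equation returns $G = \left[\UU\, g_X(x)^{-1}\, \UU^*\right]^{-1}$.

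For (ii) I would argue pointwise. Fix $x$, abbreviate $g := g_X(x)$, let $g^{1/2}$ be its positive square root, and set $B := \UU g^{-1/2}$, a surjective $q \times n$ matrix. Conjugating $g - \UU^*\left(\UU g^{-1}\UU^*\right)^{-1}\UU$ by $g^{-1/2}$ turns it into $\mathrm{Id} - B^*(BB^*)^{-1}B$, which is exactly the orthogonal projector onto $(\mathrm{row}\,B)^{\perp} = \Ker[B]$; every orthogonal projector lies between $0$ and $\mathrm{Id}$, so this matrix — hence $g_X(x) - \UU^*\gH(\eta)\UU$ — is positive semidefinite. (Equivalently this is the Schur-complement positivity of $\left(\begin{smallmatrix} g_X(x) & \UU^* \\ \UU & \UU g_X(x)^{-1}\UU^* \end{smallmatrix}\right)$, or the assertion that $\min_{v\,:\,\UU v = a} v^* g_X(x)\, v = a^* \gH(\eta)\, a$, which is the constrained-minimization form underlying the Cram\'er--Rao interpretation.) The same computation identifies the kernel of the difference as $g_X(x)^{-1}\Ker[S^*]$, equal to $T_x\V$ when $x \in \V$, so the bound is saturated precisely along the equilibrium manifold, recovering the content of Lemma~\ref{lemma:Hessian_metric}.

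The square-root conjugation and the projector estimate are routine; the step requiring care is securing (i). Because $\gH$ is intrinsic to $H$, one must thread the global structure correctly — the diagram of isometries (Theorem~\ref{theorem:Birch}) and Lemma~\ref{lemma:Hessian_metric}, evaluated at corresponding points along the section $\beta\colon H \to \V$ (so that $\eta$, $\lambda(\eta)$, $f(\lambda(\eta))$ and $x = \LT_X^{-1}(f(\lambda(\eta))) = \beta(\eta)$ are the related points of Diagram~(\ref{eq:diag_comm_intro})) — rather than rely on bare linear algebra. Once the closed form for $\gH(\eta)$ is in hand, applying the inequality of (ii) at each $x \in X$ completes the proof.
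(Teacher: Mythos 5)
Your proof is correct. Part (i) coincides with the paper's derivation: the chain-rule identity $\gl = \UU g_Y \UU^*$ is Equation (\ref{eq:gl}), the inversion $\gH = \gl^{-1}$ is the relation (\ref{eq:matrix_inverse}) applied to the Legendre pair $(H,\Q)$, and $g_Y = g_X^{-1}$ gives the closed form, so there is nothing different there. Where you genuinely diverge is the inequality itself. The paper offers no standalone argument: Theorem \ref{thm:Hessian_CRB} is presented as a reformulation of Lemma \ref{lemma:Hessian_metric}, i.e., one decomposes $T_xX$ $g_X$-orthogonally as $T_x\V \oplus \Ker[\UU]$ (Lemma \ref{lem:comp}), uses the isometry of Diagram (\ref{eq:diag_isom}) to see that $\UU^*\gH\UU$ agrees with $g_X$ on the first summand and vanishes on the second, and reads off that the quadratic form of the difference on $v = v_1 + v_2$ reduces to $\langle v_2, v_2\rangle_{g_X} \geq 0$. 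Your part (ii) — conjugating by $g_X(x)^{-1/2}$ so the difference becomes $\mathrm{Id} - B^*(BB^*)^{-1}B$ with $B = \UU g_X(x)^{-1/2}$, an orthogonal projector — exhibits the very same projector in orthonormalized coordinates, but it is entirely self-contained: it needs only the formula from (i), the positivity of $g_X$, and the surjectivity of $\UU$, none of the isometry machinery. What the paper's route buys is that the geometric content (agreement on $T\V$, vanishing on the $g_X$-orthogonal complement) is precisely what gets reused for Corollary \ref{corr:CRB} and the absolute-sensitivity results; what your route buys is an elementary, directly checkable proof of the pointwise matrix inequality, together with the Schur-complement and constrained-minimization readings that make the Cram\'er--Rao analogy transparent. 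Your closing identification of the kernel of the difference as $g_X(x)^{-1}\Ker[S^*] = T_x\V$ correctly recovers the saturation statement of Lemma \ref{lemma:Hessian_metric}.
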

The rest of the article establishes the link between the Riemannian geometry and absolute sensitivity.

\paragraph{Absolute sensitivity}

Sensitivity aims at evaluating how the equilibrium concentrations will change upon perturbations of the linear conserved quantities.
It is thus captured by the derivative $D\gamma: T\overline{X} \rightarrow TX$.
However, to get numerical values for the sensitivity at a point $\overline{x_0} \in \overline{X}$, one has to fix a coordinate system $H$ of $\overline{X}$, with coordinates $\eta$ for $\overline{x_0}$, and compute the Jacobian matrix $D_{\eta}\beta = \frac{\partial x}{\partial \eta}$.
But there is no canonical choice for the coordinate system $H$, and yet the physical interpretation of sensitivity depends on the numerical values of $D_{\eta}\beta$ in applications.
This is remedied by absolute sensitivity which is canonically defined and thus gives physical meaning to the respective numerical values.
The modification is that, instead of perturbing $\eta \in H$, one perturbs the initial condition $x_0$ and traces how it reflects in the change of the equilibrium concentrations of $\xeq(x_0)$.
In other words, absolute sensitivity $A$ is given by the composition of derivatives:
\begin{equation*}
    A:= D(\gamma \circ p) : TX \longrightarrow \overline{X} \longrightarrow TX.
\end{equation*}
Now, on $X$, there is a canonical choice of a coordinate system and thus, at each point $x \in X$, the Jacobian matrix $A(x) = D_x(\gamma \circ p)$ is well-defined.
This is the matrix of absolute sensitivities, and it is introduced in Section \ref{sec:abs_sens}.
Because $A(x)$ is identical for all $x$ within the stoichiometric polytope $P(x)$, it is enough to consider the restriction of $A$ to $\V$ inside $X$, equipped with the pullback bundle of $TX$ to $\V$ via the inclusion $\V \hookrightarrow X$.
Moreover, over points $x \in \V$, the morphism $A$ has a natural interpretation as a projection operator which is now discussed.

\paragraph{Application of Riemannian geometry to absolute sensitivity}

By chasing derivatives in Diagram (\ref{eq:diag_comm_intro}), together with the compatibility (\ref{eq:diag_compatibility_intro}) between the derivative of a Legendre transform and the respective Hessian metric, an explicit expression for the absolute sensitivity is derived as
\begin{equation*}
    A = g_Y U^* g_H U
\end{equation*}
One checks that this is an idempotent operator and thus corresponds to a projection.
Indeed, this is the content of Theorem \ref{thm:abs_sens_final} which states:
\begin{theorem*} \label{thm:abs_sens_final_intro}
    At a point $x \in \V$, the matrix of absolute sensitivities $A(x)$ represents the $g_X$-orthogonal projection $\pi: T_xX \rightarrow T_x \V$ and its matrix elements are
    \begin{equation*}
        A(x)^i_j = \left\langle \dd x^i , \pi\left( \frac{\partial}{\partial x^j} \right) \right\rangle.
    \end{equation*}    
\end{theorem*}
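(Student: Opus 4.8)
The plan is to reduce the statement to two facts already at hand: the closed form $A = g_Y U^* g_H U$ for the absolute sensitivity, and Lemma~\ref{lemma:Hessian_metric}, which identifies $U^* g_H U$ as a degenerate metric adapted to $T\V$.

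First I would establish the closed form by chasing derivatives through the commutative diagram~(\ref{eq:diag_comm_intro}). In the coordinates $H$ on $\overline{X}$ the projection $p:X\to\overline{X}$ is the linear map $U:X\to H$ and the section $\gamma$ is $\beta:H\to X$, with $U=\overline{U}\circ p$ and $\beta=\gamma\circ\overline{U}^{-1}$, so that $\gamma\circ p=\beta\circ U$ and hence $A=D(\gamma\circ p)=D\beta\circ U$ by linearity of $U$. The lower square of~(\ref{eq:diag_comm_intro}) factors $\beta=\LT_X^{-1}\circ f\circ\LT_H$; fixing $x\in\V$ and putting $\eta:=U(x)$, $\lambda:=\LT_H(\eta)$, $y:=f(\lambda)=\LT_X(x)$, the chain rule gives $D_\eta\beta=D_y\LT_X^{-1}\circ D_\lambda f\circ D_\eta\LT_H$. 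I would then invoke the compatibility~(\ref{eq:diag_compatibility_intro}) --- the Jacobian of a Legendre transform is the Hessian metric of its generating potential --- to replace $D_\eta\LT_H$ by $g_H(\eta)$ and $D_y\LT_X^{-1}=D_y\LT_Y$ by $g_Y(y)$, which equals $g_X(x)^{-1}$ by Theorem~\ref{theorem:Birch}; and $D_\lambda f=U^*$ since $f:\lambda\mapsto y^0+U^*\lambda$ is affine. Composing and post-composing with $U$ gives $A(x)=g_Y(y)\,U^*\,g_H(\eta)\,U$, with the Hessians evaluated at the points just listed.

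Next I would identify this operator with the orthogonal projector. Write $M:=U^*g_H(\eta)U$. By Lemma~\ref{lemma:Hessian_metric}, $M$ agrees with $g_X(x)$ as a bilinear form on $T_x\V$ and has $(T_x\V)^{\perp_{g_X}}$ in its radical, so $Mv^\perp=0$ for $v^\perp\in(T_x\V)^{\perp_{g_X}}$. Decomposing $v\in T_xX$ as $v=\pi(v)+v^\perp$, one gets $Mv=M\pi(v)$, and pairing against an arbitrary $w\in T_xX$ split as $w=\pi(w)+w^\perp$ and using symmetry of $M$ and $g_X$ together with $g_X$-orthogonality yields $\langle w,Mv\rangle=\langle w,g_X(x)\pi(v)\rangle$, hence $Mv=g_X(x)\pi(v)$; since $g_Y(y)=g_X(x)^{-1}$ this gives $A(x)v=g_Y(y)Mv=\pi(v)$, i.e.\ $A(x)$ represents the $g_X$-orthogonal projection $\pi:T_xX\to T_x\V$. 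Idempotency of $A(x)$ can also be checked directly by substituting $g_H(\eta)=[Ug_X(x)^{-1}U^*]^{-1}$ from Theorem~\ref{thm:Hessian_CRB}, which collapses $A^2=g_X^{-1}U^*g_H(Ug_X^{-1}U^*)g_HU$ to $A$; alternatively one may note that $\gamma\circ p$ is a retraction of $X$ onto $\V$ fixing each point of $\V$, so that $A(x)$ is idempotent with image $T_x\V$ and kernel $\Ker U=\Img[S]$, the latter coinciding with $(T_x\V)^{\perp_{g_X}}$ because $T_x\V=g_X(x)^{-1}(\Ker[S^*])$.

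Finally, the matrix-element formula is just the definition: $A(x)=D_x(\gamma\circ p)$ is the Jacobian matrix of $\gamma\circ p$ in the global coordinates of $\R^n$, i.e.\ the matrix of the linear operator $A(x):T_xX\to T_xX$ in the basis $\{\partial/\partial x^j\}$ of $T_xX$ with dual basis $\{\dd x^i\}$ of $T_x^*X$, whence $A(x)^i_j=\langle \dd x^i,A(x)(\partial/\partial x^j)\rangle=\langle \dd x^i,\pi(\partial/\partial x^j)\rangle$. I expect the only genuinely delicate step to be the derivative chase: keeping the base points straight (so that $g_H$ and $g_Y$ are evaluated at $U(x)$ and $\LT_X(x)$) and correctly reading off from~(\ref{eq:diag_compatibility_intro}) how $D\LT_X$, $D\LT_X^{-1}$, $D\LT_H$ become $g_X$, $g_X^{-1}$, $g_H$ once cotangent spaces are tacitly identified with tangent spaces via the Legendre duals; everything downstream is routine linear algebra.
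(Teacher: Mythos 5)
Your proposal is correct and follows essentially the same route as the paper: derive $A = g_Y U^* \gH U$ by chasing derivatives through Diagram (\ref{eq:diag_comm}) with the compatibility (\ref{eq:diag_compatibility}), then use Lemma \ref{lemma:Hessian_metric} (the paper routes this through Corollary \ref{corr:CRB}, which is the same content) together with $g_Y = g_X^{-1}$ to conclude $A(x)v = \pi(v)$. Your parenthetical alternative --- that $A$ is a projection with image $T_x\V$ and kernel $\Ker[U] = (T_x\V)^{\perp}$ by Lemma \ref{lem:comp}, hence the $g_X$-orthogonal projection --- is in fact an even shorter valid argument, though the paper does not use it.
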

The proof relies on the Cramer-Rao bound.
It is worth noting that, by definition of $A$, it is clear that it will map to the tangent bundle $T\V$ inside $TX$, and that it will be a projection as it restricts to the identity on $T\V$.
The difficulty lies in establishing the $g_X$-orthogonality of the projection which requires the geometrical setup from the previous sections.

For the potential function of the ideal solution, the results from \cite{loutchko2024cramer} on quasi-thermostatic CRNs are recovered, giving the metric $g^{\mathrm{id}}_X(x) = \XX$, and
\begin{equation*}
   \left[ A^{\mathrm{id}}(x) \right]^i_j = \left[ \pi\left(\frac{\partial}{\partial x^j}\right) \right]^i
\end{equation*}
as the resulting matrix of absolute sensitivities.
This yields the restriction $\left[ A^{\mathrm{id}}(x) \right]^i_i \in [0,1]$ on the diagonal terms as a corollary and implies that ideal equilibrium systems cannot exhibit hypersensitivity or negative feedback.

Subsequently, the effect of non-ideal behavior is taken into account by a correction to the metric $g^{\mathrm{id}}_X$ as
\begin{equation*}
    g_X = g_X^{\mathrm{id}}+ C
\end{equation*}
Then, up to first order in the components $C_{ij}$, the expression
\begin{equation*}
    A = \left[ I - (A^{\perp})^{\mathrm{id}} g_Y^{\mathrm{id}} C \right] A^{\mathrm{id}}
\end{equation*}
for the absolute sensitivity is derived.
Lemma \ref{lemma:A_vanishing_corr} gives a condition on the vanishing of the correction term which implies Corollary \ref{corr:first_order}:
\begin{corr*}
    If the total concentration $\sum_{i=1}^n x^i$ is conserved, i.e. if $$ \sum_{i=1}^n e^i \in \Ker[S^*],$$ and if $C_{ij}(x) = c(x)$ for some function $c(x)$, then the absolute sensitivity $A$ coincides with the absolute sensitivity for the corresponding quasi-thermostatic CRN, i.e.,
    \begin{equation*}
        A = A^{\mathrm{id}},
    \end{equation*}
    to first order in $c(x)$.
\end{corr*}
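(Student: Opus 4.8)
The plan is to combine the first-order formula
\begin{equation*}
    A = \left[ I - (A^{\perp})^{\mathrm{id}} g_Y^{\mathrm{id}} C \right] A^{\mathrm{id}}
\end{equation*}
with the hypotheses of the corollary and show that under them the correction term $(A^{\perp})^{\mathrm{id}} g_Y^{\mathrm{id}} C A^{\mathrm{id}}$ vanishes. First I would recall the explicit ideal quantities: $g_X^{\mathrm{id}}(x) = \XX$, hence $g_Y^{\mathrm{id}}(x) = \X$, and $(A^{\perp})^{\mathrm{id}} = I - A^{\mathrm{id}}$. Under the second hypothesis $C_{ij}(x) = c(x)$, the matrix $C$ is $c(x)$ times the all-ones matrix, i.e. $C = c(x)\, \mathbb{1} \mathbb{1}^T$ where $\mathbb{1} = (1,\dotsc,1)$ in the appropriate coordinates; equivalently, in the dual-index bookkeeping of the paper, $C = c(x) \sum_{i,j} \dd x^i \otimes \dd x^j$. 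The key observation is then that $g_Y^{\mathrm{id}} C = c(x)\, \X\, \mathbb{1}\mathbb{1}^T = c(x)\, x\, \mathbb{1}^T$, so that the whole correction operator acts by first projecting onto the span of the vector $x \in T_xX$ (up to the scalar and the row vector $\mathbb{1}^T$ applied to whatever $A^{\mathrm{id}}$ produces).

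Next I would bring in the first hypothesis, $\sum_{i=1}^n e^i \in \Ker[S^*]$, and use it in two ways. On the one hand, conservation of the total concentration means the vector $x$ (viewed in $T_xX$) is, at any equilibrium point $x \in \V$, actually tangent to $\V$: the Euler-type scaling direction $x^i \partial/\partial x^i$ preserves all the conserved quantities $\langle x, u^j\rangle$ only if these are homogeneous, but more simply, $\sum_i e^i \in \Ker[S^*]$ means $\Img[S] \subset \{v : \sum_i v^i \cdot (\text{something}) \}$... — the cleanest route is: since $e := \sum_i e^i$ annihilates $\Img[S]$, the affine space $P(x_0)$ lies in a hyperplane $\{x : \langle x, e\rangle = \text{const}\}$, and I claim $x$ itself (as a tangent vector) lies in $T_x\V$. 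For this I would appeal to the fact that $A^{\mathrm{id}}(x) = \pi$ is the $g_X^{\mathrm{id}}$-orthogonal projection onto $T_x\V$ (Theorem \ref{thm:abs_sens_final} in the ideal case), and verify directly that $A^{\mathrm{id}}(x)\, x = x$, i.e. that the scaling vector is fixed by the projection; equivalently that $x \in T_x\V$. This is where the homogeneity of the ideal equilibrium manifold under total-concentration scaling enters — $\V^{\mathrm{id}}$ is a cone precisely because the ideal potential $\phi^{\mathrm{id}}$ is (up to an affine term) $\sum_i x^i(\log x^i - 1)$ and $\Ker[S^*]$ contains $e$.

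Granting $A^{\mathrm{id}}(x)\, x = x$, the correction term collapses: its matrix is
\begin{equation*}
    (A^{\perp})^{\mathrm{id}} g_Y^{\mathrm{id}} C A^{\mathrm{id}} = c(x)\, (I - A^{\mathrm{id}})\, x\, \mathbb{1}^T A^{\mathrm{id}} = c(x)\, \big(x - A^{\mathrm{id}}(x)x\big)\, \mathbb{1}^T A^{\mathrm{id}} = 0,
\end{equation*}
since $x - A^{\mathrm{id}}(x)x = 0$. Hence $A = A^{\mathrm{id}}$ to first order in $c(x)$, which is the claim. I would also double-check the alternative contraction order — one could instead try to use $\mathbb{1}^T A^{\mathrm{id}} = \mathbb{1}^T$ (i.e. that the columns of $A^{\mathrm{id}}$ sum to $1$, which follows from conservation of total concentration being one of the conserved quantities) and then show $(I - A^{\mathrm{id}}) g_Y^{\mathrm{id}} \mathbb{1} c(x)$ involves $(I - A^{\mathrm{id}})x = 0$ again; either way the vanishing is driven by $x \in T_x\V^{\mathrm{id}}$. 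I expect the main obstacle to be the careful justification that $x$, as a tangent vector at an equilibrium point, lies in $T_x\V$ under the hypothesis $\sum_i e^i \in \Ker[S^*]$ — this is the one genuinely geometric input, and it rests on the cone structure of the ideal equilibrium manifold; the remaining manipulations are just bookkeeping with the rank-one matrix $C$ and the idempotent $A^{\mathrm{id}}$. A secondary subtlety is to confirm that working "to first order in $c(x)$" is consistent, i.e. that $c(x)$ small makes $C$ small in the sense required by the first-order formula, and that no higher-order terms in $c$ sneak back in through $g_Y = (g_X^{\mathrm{id}} + C)^{-1}$ — but since the correction term already vanishes identically at first order, this causes no trouble.
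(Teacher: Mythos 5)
Your argument is correct and is essentially the paper's: the paper derives the corollary from Lemma \ref{lemma:A_vanishing_corr} by observing that under the hypothesis $C_{ij}=c(x)$ the image of $C$ is spanned by the all-ones vector, which $\sum_i e^i\in\Ker[S^*]=\Img[U^*]$ places inside $\Img[U^*]$, so that $g_Y^{\mathrm{id}}C$ lands in $T\V^{\mathrm{id}}$ and is annihilated by $(A^{\perp})^{\mathrm{id}}$. Your identification $g_Y^{\mathrm{id}}\mathbb{1}=\X\,\mathbb{1}=x$ together with the verification $(I-A^{\mathrm{id}})x=0$, via $x\in \X\,\Ker[S^*]=T_x\V^{\mathrm{id}}$ (the cone/scaling direction), is exactly this mechanism made concrete, so only the packaging differs.
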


Finally, in Section \ref{sec:IDH}, the results are applied to the core module of the IDHKP-IDH (IDH = isocitrate dehydrogenase, KP = kinase-phosphatase) glyoxylate bypass regulation system as an example.

\section{Chemical reaction networks} \label{sec:CRN}

In this section, chemical reaction networks (CRNs) are introduced, with a particular focus on linear conserved quantities and the classical sensitivity matrix.

\subsection{Basic notions}

The datum of a chemical reaction network (CRN) is given by a set of $n$ chemicals $X_1,\dotsc,X_n$ and $r$ reactions $R_1,\dotsc,R_r$ as
\begin{equation*}
    R_j: \sum_{i=1}^n (S^-)^{i}_{j} X_i \rightarrow \sum_{i=1}^n (S^+)^i_{j} X_i,
\end{equation*}
whereby $(S^-)^{i}_{j}$ and $(S^+)^{i}_{j}$ are the nonnegative integer stoichiometric coefficients.
This is encoded in the $n \times r$ stoichiometric matrix $S$ as
\begin{align*}
    S^i_{j} = (S^+)^{i}_{j}  - (S^-)^{i}_{j}.
\end{align*}
The state space of the CRN is $X = \R^n_{>0}$, whereby a point $x = (x^1,\dotsc,x^n)^T \in X$ represents the concentrations $x^i$ of the chemicals $X_i$.

The dynamics is governed by the differential equation
\begin{equation} \label{eq:dynamics}
    \frac{\dd x}{\dd t} = Sj(x),
\end{equation}
where $j(x) \in \R^r$ is a flux vector.
Usually, it is given by mass action kinetics for elementary reactions \cite{craciun2009toric,feinberg2019foundations} while Michaelis-Menten and Hill type kinetics are viable choices to take into account enzymatically catalyzed reactions and cooperative effects \cite{marin2019kinetics}.
However, in this article, the geometry is derived from thermodynamical arguments and it is not necessary to specify the functional form of $j(x)$.
For any initial condition $x_0 \in X$, the stoichiometric polytope
\begin{equation} \label{eq:def_Px_0}
    P(x_0) := \{ x \in X | x - x_0 = \Img[S] \}
\end{equation}
is invariant under the dynamics (\ref{eq:dynamics}).
The vector space $\R^n$ has a projection to $\R^n / \Img[S]$.
Let $\overline{X}$ denote the image of $X$ under this projection, and let
\begin{equation} \label{eq:pX}
   p : X \longrightarrow \overline{X} :=  \faktor{X}{\Img[S]}
\end{equation}
be the projection map.
The space $\overline{X}$ called the space of {\it linear conserved quantities}.
The fibers of the projection $p: X \rightarrow \overline{X}$ are precisely the stoichiometric polytopes, i.e., for any $\overline{x_0} \in \overline{X}$ one has $p^{-1} (\overline{x_0}) = P(x_0)$, and thus $\bX$ is a base space for the fibration of $X$ by stoichiometric polytopes.
An alternative base of the fibration is provided by the equilibrium manifold in Section \ref{sec:V_construction}.

A coordinate system on $\bX$ is constructed based on the orthogonality between $\Img[S] \subset \R^n$ and $\Ker[S^*] \subset \RnT$:
Let $q$ be the dimension of $\Ker[S^*]$ and let $\{\uu^j\}_{j=1}^q$ be a basis of $\Ker[S^*]$.
This determines a map
\begin{equation} \label{eq:eta_def}
    U := (u^1,\dotsc,u^q)^T : \R^n \rightarrow \R^q.
\end{equation}
The coefficients of the $u^j$ with respect to the canonical basis of $\RnT$ are $u^j_i$, i.e., $u^j = \sum_{i=1}^n u^j_i e^i$.
Then $U$ is represented by the $q \times n$ matrix $(u^j_i)$ and the map (\ref{eq:eta_def}) is given explicitly by
\begin{equation} \label{eq:Ux_explicit}
    \sum_{i=1}^n x^i e_i \mapsto \sum_{i=1}^n \sum_{j=1}^q x^i u^j_i \epsilon_j,
\end{equation}
where $\{\epsilon_j\}_{j =1}^q$ is the canonical basis of $\R^q$.
The kernel of $U$ is $\Img[S]$ and therefore $U$ factors through $\R^n / \Img[S]$, i.e., the diagram
\begin{equation} \label{eq:U_factor}
    \begin{tikzcd}[nodes in empty cells]
    \R^n \ar[rr,"U"] \ar[dr,"p"] & & \R^q \\
    & \faktor{\R^n}{\Img[S]} \ar[ur,"\overline{U}"] & 
\end{tikzcd}
\end{equation}
commutes, and the linear map $\bU$ is a bijection.
The image $U(X)$ is an open submanifold of $\R^q$ of full dimension, denoted by
\begin{align} \label{eq:def_H}
\HH := \UU (X) \subset \R^q.
\end{align}
Using the commutativity of the diagram (\ref{eq:U_factor}) gives the diffeomorphism $\bU: \bX \rightarrow H$ and therefore the inverse $\bU^{-1}: H \rightarrow \bX$ is a global coordinate system for $\bX$.
A vector $\eta \in H$ is called a vector of {\it linear conserved quantities}.
For any $\eta \in \HH$, The stoichiometric polytope is
\begin{align} \label{eq:def_Peta}
    P(\eta) := \{ x \in X| \UU x = \eta \} \subset X.
\end{align}
It agrees with the coordinate free definition (\ref{eq:def_Px_0}):
For any $\overline{x_0} \in \bX$, the polytope $P(x_0)$ is identical to the polytope $P(\eta)$ with $\eta = \bU \overline{x_0}$.

\subsection{Sensitivity} \label{sec:sens_intro}

A general setup to treat sensitivity is to consider differentiable sections
\begin{equation*}
    \gamma : \bX \rightarrow X
\end{equation*}
to the projection $p : X \rightarrow \bX$.
This is due to the fact that all equilibrium manifolds and quasi-thermostatic steady states arise as the image of such a section \cite{sughiyama2022hessian}.

Fix a section $\gamma$ and denote
\begin{equation*}
    \V := \Img[\gamma] \subset X.
\end{equation*}
Equivalently, given such $\V \subset X$, the section $\gamma$ can be constructed as follows:
For any point $\bx \in \bX$, pick a lift $x_0$ to $X$.
Then the section is given by the intersection of $\V$ with the stoichiometric polytope of $x_0$, and this is independent of the choice of the lift:
\begin{equation} \label{eq:sec_gamma}
    \gamma(\bx) = \V \cap P(x_0).
\end{equation}
The {\it sensitivity} of the section $\gamma$ is given by the derivative
\begin{equation*}
    D\gamma : T\bX \longrightarrow TX.
\end{equation*}
Working with the coordinate system $H$ on $\bX$ one yields section
\begin{equation*}
    \beta: H \longrightarrow X
\end{equation*}
to the map $U: X \rightarrow H$, which is given by $\beta = \gamma \circ \bU^{-1}$.
With these coordinates, the {\it sensitivity matrix} at a point $x \in \V$ with $\eta = Ux$ is given by the Jacobian
\begin{equation*}
    D_{\eta} \beta = \frac{\partial x}{\partial \eta}.
\end{equation*}

The sensitivity matrix plays an important role in applications as it is used to evaluate how steady state concentrations change upon perturbations of the linear conserved quantities.
However, it is dependent on the choice of the basis $\{u^j\}_{j=1}^q$ and therefore the respective numerical values of the elements $\frac{\partial x^i}{\partial \eta^j}$ have no canonical significance.
For example, absolute concentration robustness in $X_i$ is equivalent to $\frac{\partial x^i}{\partial \eta^j} = 0$ for all $j = 1,\dotsc,q$ and this is independent of the the choice of basis.
But the condition $\frac{\partial x^i}{\partial \eta^j} \approx 0$ cannot be formulated to characterize approximate concentration robustness in $X_i$ because the $\eta^j$ can be scaled arbitrarily.
This drawback will be fixed in Section \ref{sec:abs_sens} by the introduction of {\it absolute sensitivity}.

In the following section, the focus is put on the construction of the section $\beta$ based on thermodynamical arguments.

\section{The geometry of non-ideal equilibrium manifolds} \label{sec:V_construction}

Often, CRN theory is concerned with studying the manifold $\V \subset X$ of steady states, which is defined by the kinetic condition as the vanishing locus of the vector field $Sj(x)$, cf. Equation (\ref{eq:dynamics}).
However, for equilibrium CRNs, the geometry of $\V$ can be derived from thermodynamical rather than kinetical considerations\footnote{
It should be kept in mind that analogous (local) descriptions of nonequilibrium steady states do exist \cite{anderson2015lyapunov}.
The details are not further discussed in this article as they are rather involved and not essential for the following geometrical considerations.
}.

Thermodynamics is be introduced through a strictly convex potential function $\phi \in C^2(X,\R)$ which induces Legendre duality between $X$ and the space of chemical potentials $Y \subset \RnT$.
In \cite{sughiyama2022hessian}, it has been shown that the equilibrium manifold $\V \subset X$ is the Legendre transform of an affine subspace $\Z \subset Y$.
Moreover, the $\V$ defined this way intersects each stoichiometric polytope $P(x_0)$ in a unique point, which is called the equilibrium point of the polytope\footnote{
It is a basic construction in information geometry known as dual foliation which ensures the existence and uniqueness of the intersection, cf. \cite{amari2016}.
The physically inclined reader might expect a variational characterization of equilibrium states as free energy minimizers instead of the geometrical definition given here.
Indeed, the equilibrium point $\gamma(\bx)$ can be constructed as $\gamma(\bx) = \argmin_{x \in P(x_0)} \mathrm{D}[x \| y^0]$, where $\mathrm{D}[.\|.] : X \times Y \rightarrow \R$ is the Bregman divergence defined as $\mathrm{D}[x \| y] := \phi(x) + \phi^*(y) - \langle x,y\rangle$, the function $\phi^* \in C^2(Y,\R)$ is the Legendre dual of $\phi$, and $y^0$ is any point in $\Z$.
It is again a standard fact in information geometry that the geometrical characterization is equivalent to the variational one.
See \cite{amari2016} for the mathematical background, and \cite{sughiyama2022hessian} for the connection to thermodynamics.
}.
This yields the section $\gamma: \bX \rightarrow X$ which maps each linear conserved quantity to the equilibrium point of the respective polytope, i.e., $\gamma(\bx) := \V \cap P(x_0)$, cf. Equation (\ref{eq:sec_gamma}).

If the CRN obeys the thermodynamics of an ideal solution, then $\phi$ is given by $\phi = \sum_{i =1}^n [\mu_i^0 x^i + x^i \log x^i - x^i]$, where the $\mu_i^0$ are the standard chemical potentials.
In this case, $\V$ is identical to the steady state manifold of a detailed balanced CRN with mass action kinetics\footnote{
Moreover, the steady state manifold of any quasi-thermostatic CRN, and {\it a fortiori} of any complex balanced CRN, has the same geometry as such $\V$.
}.
This $\V$ is an analytified toric variety and the existence and uniqueness of the intersection of $\V$ with any stoichiometric polytope $P(x_0)$ is known as Birch's theorem in algebraic statistics \cite{craciun2008identifiability,craciun2009toric,gross2020joining,craciun2022disguised}.
This is discussed in Example \ref{ex:quasiTS}.
The key point of the information geometrical setup is that this remains valid for equilibrium CRNs even if they are not ideal solutions, as well as for nonequilibrium CRNs whose steady state manifolds have the same geometry \cite{sughiyama2022hessian}.
Therefore, the information geometric setup allows to treat more general shapes than analytified toric varieties while retaining very similar mathematical properties.
One important class are CRNs with van der Waals interactions, presented in Example \ref{ex:vdW}.

\subsection{Legendre duality and equilibrium manifolds} \label{sec:Hessian_setup}

Let $X \subset \R^n_{>0}$ and $Y \subset \RnT$ be convex open subsets of full dimension, and let there be a pair of strictly convex functions $\phi \in C^2(X,\R) $ and $\phi^* \in C^2(Y,\R)$ which induce Legendre duality\footnote{The two convex functions are related by convex conjugation as $\phi(x) = \max_{y \in Y} \{ \langle x,y \rangle - \phi^*(y) \}$ and $\phi^*(y) = \max_{x \in X} \{ \langle x,y \rangle - \phi(x) \}$, and the Legendre transformation between the spaces $X$ and $Y$ is given by the variational formulae $x(y) = \mathrm{argmax}_{x \in X} [\langle x,y \rangle - \phi(x) ]$ and $y(x) = \mathrm{argmax}_{y \in Y} [\langle x,y \rangle - \phi^*(y)]$.} between $X$ and $Y$ (note that the global coordinate systems of $\R^n$ and $\RnT$ are used throughout the text).
The Legendre transformation $\LT_X: X \rightarrow Y$ is given by 
\begin{equation} \label{eq:LT_x}
    x \mapsto y(x) = \sum_{i=1}^n \frac{\partial \phi(x)}{\partial x^i } e^i    
\end{equation}
and its inverse is
\begin{equation} \label{eq:LT_y}
    y \mapsto x(y) = \sum_{i=1}^n \frac{\partial \phi^*(y)}{\partial y_i } e_i.
\end{equation}
The shorthand notation for these maps is
\begin{align}
    \nonumber 
    y &= \LT_X(x)\\
    \nonumber 
    x &= \LT^{-1}_X(y).
\end{align}
To construct the required geometrical setup, consider the linear space in $\RnT$ given by the image of the affine map $f:\RqT \rightarrow \RnT, \lambda \mapsto y^0+ \UU^* \lambda$, where $y^0\in Y$ is an arbitrary base point, $q = \dim \Ker[S^*]$, and $\UU^*$ is the adjoint of the map $U: X \rightarrow \R^q$ which was defined in (\ref{eq:eta_def}).
Define the convex set $\Z:=\Img[f] \cap Y$ and let $\Q:= f^{-1}(\Z) \subset \RqT$ be the corresponding coordinate space for $\Z$: 
\begin{align}
\label{eq:f_def}
    f: \Q &\longrightarrow \Z \subset \RnT \\
\nonumber
    \lambda &\mapsto y^0 + \UU^* (\lambda).
\end{align}
Denote the image of $\Z$ in $X$ via $\LT^{-1}_X$ by $\V$:
\begin{equation} \label{eq:def_V}
    \Q \xrightarrow{f} \Z \xrightarrow{\LT^{-1}_X} \V \subset X.
\end{equation}
This manifold is parametrized by $\Q$ via the composition map $\LT^{-1}_X \circ f$ and this can be used to construct a section $\beta: H \rightarrow \V$ as follows:
The convex function $\phi^*$ remains convex when restricted to $\Z$ and pulls back to a convex function $\psi^*$ on $\Q$, explicitly given by
\begin{equation*}
    \psi^*(\lambda) := f^*\phi^*(\lambda) = \phi^*(y^0+ \UU^* \lambda).
\end{equation*}
The induced Legendre transform $\LT^{-1}_H: \Q \rightarrow \R^q$ is calculated as\footnote{The notation $\LT^{-1}_H$ instead of $\LT_Q$ is used for compatibility with the Legendre duality between $X$ and $Y$ in Diagram (\ref{eq:diag_comm}).}
\begin{align}
\nonumber
    \eta:= \LT^{-1}_H(\lambda) &= \sum_{j=1}^q \frac{\partial \psi^*(\lambda)}{\partial \lambda_j} \epsilon_j =\sum_{j=1}^q \sum_{i=1}^n \frac{\partial \phi^*(y)}{\partial y_i} \frac{\partial[y^0+ \UU^* \lambda]_i}{\partial \lambda_j} \epsilon_j \\
\label{eq:eta_calc}
    &= \sum_{j=1}^q \sum_{i=1}^n x^i u^j_i \epsilon_j = Ux
\end{align}
where $x = \LT^{-1}_X \circ f(\lambda)$ and the formula (\ref{eq:Ux_explicit}) was used.
This leads to the following proposition.
\begin{proposition} \label{prop:H}
    The space of Legendre dual parameters to $\Q$ is $H = \UU(X)$, which is the space of vectors of conserved quantities, as defined in (\ref{eq:def_H}).
\end{proposition}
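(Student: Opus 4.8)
The plan is to read off the image of the Legendre transform $\LT_H^{-1}: \Q \rightarrow \R^q$ and identify it with $H$. The formula side of this has essentially been done already in Eq. (\ref{eq:eta_calc}): for every $\lambda \in \Q$ one has $\LT_H^{-1}(\lambda) = \UU x$, where $x := \LT_X^{-1}(f(\lambda))$. Since $\LT_X^{-1} \circ f$ is, by construction (Eq. (\ref{eq:def_V})), a bijective parametrization of $\V$ by $\Q$, letting $\lambda$ range over $\Q$ shows that the image of $\LT_H^{-1}$ is exactly $\UU(\V) = \{\UU x : x \in \V\}$. Thus the proposition reduces to the set equality $\UU(\V) = H = \UU(X)$.

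One inclusion is immediate: $\V \subseteq X$ gives $\UU(\V) \subseteq \UU(X) = H$. For the reverse inclusion I would invoke the defining property of the equilibrium manifold. Given $\eta \in H$, the fiber $P(\eta) = \{x \in X : \UU x = \eta\}$ (Eq. (\ref{eq:def_Peta})) is a nonempty stoichiometric polytope, coinciding with one of the $P(x_0)$ of Eq. (\ref{eq:def_Px_0}). By the dual-foliation property of the Legendre-dual pair $(\phi,\phi^*)$ recalled in Section \ref{sec:V_construction} — Birch's theorem in the ideal case and its information-geometric generalization from \cite{sughiyama2022hessian}, cf. also footnote \ref{foot:variational} — the manifold $\V$ meets $P(\eta)$, in fact in the single point $\gamma(\bU^{-1}\eta)$. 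Any $x' \in \V \cap P(\eta)$ satisfies $\UU x' = \eta$, so $\eta \in \UU(\V)$; hence $H \subseteq \UU(\V)$, and the two inclusions give $\UU(\V) = H$. Equivalently, and perhaps more transparently: once the section $\beta = \gamma \circ \bU^{-1}: H \rightarrow \V$ is known to be well-defined on all of $H$ (again the content of the dual-foliation result), the relation $\UU \circ \beta = \mathrm{id}_H$ together with $\beta(H) \subseteq \V$ forces $H = \UU(\beta(H)) \subseteq \UU(\V) \subseteq H$ directly.

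The only non-formal ingredient is the existence of the intersection $\V \cap P(\eta)$ for every $\eta \in H$, i.e. that $\V$ is a global section of the fibration $p: X \rightarrow \bX$, and this is where I expect the sole real obstacle to lie; it is handled by citing \cite{sughiyama2022hessian}. Everything else is bookkeeping: the gradient computation (\ref{eq:eta_calc}) is a routine chain-rule application combined with the explicit form (\ref{eq:Ux_explicit}) of $\UU$, and the strict convexity and positivity hypotheses enter only through the cited dual-foliation theorem.
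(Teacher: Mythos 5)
Your proposal is correct and follows essentially the same route as the paper: both reduce the claim to the surjectivity of $U|_{\V}$ onto $U(X)$, with the easy inclusion handled by the calculation (\ref{eq:eta_calc}), and both import the one nontrivial ingredient from the information-geometric literature. The paper cites it as the mixed-parametrization decomposition $x = x' + x''$ with $x' \in \V$ and $x'' \in \Ker[U] = \Img[S]$, which is exactly your statement that $\V$ meets every fiber $P(\eta)$ of the foliation, so the two formulations are interchangeable.
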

\begin{proof}
    By the above calculation, the inclusion $U(\V) \subseteq U(X)$ holds.
    It remains to show that, for each $x \in X$, there is a $x' \in \V$ such that $Ux = Ux'$.
    But this follows from the general properties of mixed parametrizations in information geometry, cf. \cite{amari2000methods}, Section 3.7., according to which each point $x \in X$ can be written as $x = x' + x''$, whereby $x' \in \V$ and $x'' \in \Ker[\UU]$.
\end{proof}
The space $H$ is a convex subset of $\R^q$ of full dimension and is equipped with the Legendre dual convex function of $\psi^*$, explicitly given by
\begin{equation*}
    \psi(\eta):= \max_{\lambda \in \Q} \left[ \langle \eta, \lambda \rangle -  \phi^*(y^0 + U^*\lambda) \right].
\end{equation*}
This gives the following commutative diagram of spaces equipped with the respective convex functions:
\begin{equation} \label{eq:diag_comm}
    \begin{tikzcd}[nodes in empty cells]
    (X,\phi) \ar[rr,shift left=.55ex,"\LT_{X}"] & & (Y,\phi^*) \ar[ll,shift left=.55ex,"\LT^{-1}_{X}"] \\
    (\mathcal{V},\left.\phi\right|_{\mathcal{V}}) \arrow[u, hook] \ar[d, shift right=.55ex, swap, "U"] \ar[rr,shift left=.55ex,"\LT_X"]  & &  
    (\Z,\left.\phi^*\right|_{\Z}) \arrow[u, hook] \ar[ll,shift left=.55ex,"\LT^{-1}_X"]\\
    (H,\psi)  \ar[rr,shift left=.55ex,"\LT_H"] \ar[u, shift right=.55ex, swap, dotted, "\beta"] & & (\Q, \psi^*) \ar[ll,shift left=.55ex,"\LT^{-1}_H"] \ar[u, "f"].
\end{tikzcd}
\end{equation}
The main result of this construction is the existence of the section $\beta$.
It is a generalization of Birch's theorem for quasi-thermostatic CRN, cf. \cite{horn1972,craciun2009toric}, to CRN whose steady-state manifold is given by $\V$.
\begin{proposition} \label{proposition:Birch}
The restriction of the map $U$ to $\V$ is a bijection and the inverse map $\beta: H \rightarrow \mathcal{V}$ is uniquely determined by the commutativity of the above diagram.
\end{proposition}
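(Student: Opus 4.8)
The plan is to exhibit $U|_\V$ as a composite of bijections, using the commutative triangle already contained in the computation \eqref{eq:eta_calc}. Recall from \eqref{eq:eta_calc} that for $\lambda \in \Q$, setting $x = \LT^{-1}_X(f(\lambda)) \in \V$, one has $Ux = \LT^{-1}_H(\lambda)$; in other words the diagram
\begin{equation*}
\begin{tikzcd}
& \Q \ar[dl,swap,"{\LT^{-1}_X \circ f}"] \ar[dr,"{\LT^{-1}_H}"] & \\
\V \ar[rr,swap,"U"] & & H
\end{tikzcd}
\end{equation*}
commutes. First I would check that both slanted arrows are bijections. The affine map $f$ of \eqref{eq:f_def} has injective linear part $\UU^*$, since $\UU : \R^n \to \R^q$ is surjective (its rows are the basis $\{u^j\}$ of $\Ker[S^*]$, hence linearly independent functionals); and $\Q := f^{-1}(\Z)$ by construction, so $f : \Q \to \Z$ is a bijection. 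Post-composing with the restriction to $\Z$ of the Legendre diffeomorphism $\LT^{-1}_X : Y \to X$, whose image is $\V$ by \eqref{eq:def_V}, shows $\LT^{-1}_X \circ f : \Q \to \V$ is a bijection. On the other side, $\psi^* = f^*\phi^*$ is strictly convex and $C^2$, being the pullback of the strictly convex $C^2$ function $\phi^*$ along the injective affine map $f$; hence its Legendre transform $\LT^{-1}_H$ is a diffeomorphism onto its image, and that image is precisely $H = \UU(X)$ by Proposition \ref{prop:H}.

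Granting this, commutativity of the triangle forces $U|_\V = \LT^{-1}_H \circ (\LT^{-1}_X \circ f)^{-1}$, a composite of bijections, so $U|_\V : \V \to H$ is a bijection. Its inverse is
\begin{equation*}
    \beta := \LT^{-1}_X \circ f \circ \LT_H : H \longrightarrow \V,
\end{equation*}
and indeed $U \circ \beta = \LT^{-1}_H \circ \LT_H = \mathrm{id}_H$, so $\beta$ is the section of $U$ with image $\V$ appearing in Diagram \eqref{eq:diag_comm}. That $\beta$ makes the whole of \eqref{eq:diag_comm} commute is then immediate: the lower square is the defining relation $\LT_X \circ \beta = f \circ \LT_H$, and the remaining cells commute because $\LT_X$, $\LT_H$ and $f$ do. Uniqueness of $\beta$ is automatic once $U|_\V$ is known to be a bijection, since its inverse is unique.

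The only point that is more than formal bookkeeping is the surjectivity of $U|_\V$ onto $H$, equivalently the claim that $\LT^{-1}_H$ maps $\Q$ onto all of $H = \UU(X)$ rather than a proper subset. This is exactly where Proposition \ref{prop:H} — and behind it the mixed-parametrization decomposition $x = x' + x''$ with $x' \in \V$, $x'' \in \Ker[\UU]$ — does the real work; the rest of the argument is transport along the already-established Legendre bijections of Section \ref{sec:Hessian_setup}. Implicit throughout is the standing assumption that $\phi$ and $\phi^*$, hence $\psi^*$, are of Legendre type so that their gradient maps are diffeomorphisms onto open convex sets; this is part of the duality setup and I would not dwell on it.
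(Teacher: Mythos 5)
Your proposal is correct and follows essentially the same route as the paper's own proof: it identifies $U|_{\V} = \LT^{-1}_H \circ f^{-1} \circ \LT_X$ via the calculation \eqref{eq:eta_calc}, checks that $f$ and the Legendre maps are bijections, invokes Proposition \ref{prop:H} to land on all of $H$, and reads off $\beta = \LT^{-1}_X \circ f \circ \LT_H$ with uniqueness following from bijectivity. The extra detail you supply (injectivity of $U^*$, strict convexity of $\psi^* = f^*\phi^*$) only makes explicit what the paper leaves implicit.
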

\begin{proof}
    The map $f = y^0+ U^*$ is invertible: It is onto by construction and injective because the the linear map $U^*$ has a trivial kernel.
    From the calculation (\ref{eq:eta_calc}) it follows that the restriction of $U$ to $\V$ is given by $U = \LT^{-1}_H \circ f^{-1} \circ \LT_X$, which is a bijection.
    This bijection is between $\V$ and $H$ according to Proposition \ref{prop:H}.
    Therefore, the section $\beta$ must be given by $\beta = \LT^{-1}_X \circ f \circ \LT_H$ and this determines it uniquely.
\end{proof}
The manifold $\V$ is of central importance to physics because all equilibrium manifolds of CRN have this shape, even for non-ideal thermodynamical behavior \cite{sughiyama2022hessian}.
However, the arguments in the following sections of this article do not rely on this connection to thermodynamics but only on the geometry of $\V$.
Hence, they remain valid for nonequilibrium CRN whose steady state manifolds have the same geometrical shape.
This motivates the following definition:
\begin{definition}
    A {\it quasi-equilibrium manifold} is a manifold $\V \subset X$ which is obtained as the Legendre transform of the affine space $\Z \subset Y$ defined in (\ref{eq:def_V}).
\end{definition}

\subsection{CRNs with ideal behavior, quadratic corrections, and van der Waals interactions} \label{sec:V_examples}

The following three examples illustrate the construction of $\V$.
Example \ref{ex:quasiTS} recovers the well-known geometry of quasi-thermostatic steady state manifold via the potential function $\phi^{\mathrm{id}}(x)$ for ideal solutions.
Example \ref{ex:quadratic} discusses quadratic corrections to $\phi^{\mathrm{id}}(x)$ as the simplest polynomial corrections with a nontrivial influence on the geometry of the CRN.
Example \ref{ex:vdW} provides the construction of $\V$ when the the chemicals exhibit van der Waals type interactions.
These interactions play an important role in crowded environments and therefore are of biological significance. 
\begin{example}[{\it Quasi-thermostatic CRN}] \label{ex:quasiTS}
    The convex potential function of the ideal solution is given by the sum
    \begin{equation} \label{eq:phi_ideal}
        \phi^{\mathrm{id}}(x) = \sum_{i =1}^n [\mu_i^0 x^i + x^i \log x^i - x^i].
    \end{equation}
    It induces the Legendre duality between the concentration space $X = \R^n_{>0}$ and the potential space $Y = \RnT$, given by $x(y) = \exp(y-\mu^0)$ and $y(x) = \mu^0 + \log x$ whereby the exponential and the logarithm are taken componentwise.
    This leads to the ideal equilibrium manifold $\V^{\mathrm{id}} := x_0 \circ \exp\left(U^*\lambda\right)$, where the symbol $\circ$ denotes the Hadamard product, i.e., $(x^i) \in \V^{\mathrm{id}}$ is given by $x^i = x^i_0 \exp\left( \sum_{j=1}^q u^j_i \lambda_j \right)$ with $x^i_0 = \exp (y_i^0-\mu^0_i)$.
    Equivalently, using $\Img[U^*] = \Ker[S^*]$, the manifold $\V^{\mathrm{id}}$ is characterized as
    \begin{equation*}
        \V^{\mathrm{id}} = \{ x \in X | \log x - \log x_0 \in \Ker[S^*] \}.
    \end{equation*}
    CRNs whose steady state manifolds have this form are known as {\it quasi-thermostatic CRNs}.
    The manifolds $\V^{\mathrm{id}}$ appear as steady state manifolds of complex balanced CRN, which also include equilibrium CRN for detailed balanced systems \cite{craciun2009toric,kobayashi2022kinetic}.
\end{example}

\begin{example}[{\it Quadratic correction}] \label{ex:quadratic}
    As a first approximation to more realistic potential functions, the potential function of the ideal solution can be modified by quadratic interaction terms as
    \begin{equation} \label{eq:phi_quad}
        \phi(x) = \phi^{\mathrm{id}}(x) + \sum_{i,j=1}^n a_{ij} x^i x^j.
    \end{equation}
    In physics, the terms $a_{ij} x^i x^j$ correspond to interactions between the chemicals $X_i$ and $X_j$.
    They can be either attractive or repulsive, which determines the sign of the $a_{ij}$.
    This physical interpretation enforces the symmetry condition $a_{ij} = a_{ji}$ for all $i,j$.
    In order to retain the bijectivity of the Legendre transformation, the concentration space $X$ is the restriction of $\R^n_{>0}$ to the submanifold where the Hessian of $\phi(x)$ is positive definite.
    The potential space is the image of $X$ under the map
    \begin{equation} \label{eq:y_quadratic}
        y_i = \frac{\partial \phi(x)}{\partial x^i} = y_i^{\mathrm{id}} + \sum_{j=1}^n [a_{ij} + a_{ji}] x^i.
    \end{equation}
       
\end{example}

\begin{example}[{\it Van der Waals interactions}]  \label{ex:vdW}
    In crowded environments, the CRN shows various deviations from ideal kinetic (mass action kinetics) and thermodynamical behavior.
    From the thermodynamical point of view, such deviations can be approximated by the van der Waals equation of state, which gives rise to the potential function
    \begin{equation} \label{eq:phi_vdW}
        \phi(x) = \phi^{\mathrm{id}}(x) - \sum_{i=1}^n x^i \log \left[ 1 - \sum_{j=1}^n b^i_{j} x^j \right] - \sum_{i,j =1}^n a_{ij}x^i x^j
    \end{equation}
    with $a_{ij}, b^i_{j} \geq 0$, as well as the symmetry $b^i_j = b^j_i$ \cite{vovchenko2017multicomponent}.
    The terms $b^i_{j} x^j$ capture the excluded volume due to the presence of the molecules $X_j$, as felt by the molecules $X_i$, whereas the terms $a_{ij}x^i x^j$ represent interactions between molecules of type $X_i$ and $X_j$ as in the previous example.

    The space $X$ cannot be all of $\R^n_{>0}$ but has to be confined by the conditions $\sum_{j=1}^n b^i_{j} x^j < 1$ as well as by the strict convexity of $\phi(x)$.  
    In the Supplementary Material, it is shown that there is always an open convex submanifold $X$ of $\R^n_{>0}$ satisfying these conditions.
    This manifold is given by
    \begin{equation*}
        X = \left\{ x \in \R^n_{>0}: x^i < \frac{1}{nC_i} \right\},
    \end{equation*}  
    where $C_i:= \max_{j} \left\{ a_{ij} + a_{ji} + 2B, 4B \right\}$ and $B := \max_{i,j} \left\{ b^i_j \right\}$.
    The Legendre dual coordinates are computed as
    \begin{equation} \label{eq:y_vdW}
        y_k = \frac{\partial \phi(x)}{\partial x^k} = y_k^{\mathrm{id}} + \sum_{i=1}^n x^i \frac{b^i_k}{1 - \sum_{j=1}^n b^i_j x^j} - \log \left[1 - \sum_{j=1}^n b^k_j x^j \right] - \sum_{i=1}^n [a_{ik} + a_{ki}]x^i,
    \end{equation}
    with $y_k^{\mathrm{id}} = \mu_k^0 + \log x^k$ as in Example \ref{ex:quasiTS}.
    As in the previous example a general analytic expression for $\V$ does not exist in general as it requires inverting the system of equations (\ref{eq:y_vdW}).
    However, for the computations of absolute sensitivity in Section \ref{sec:abs_sens} only the tangent space $T_x \V$ is of interest which can be computed explicitly as $T_x \V = \frac{\partial x}{\partial y}\Img[U^*] = \left[\frac{\partial^2 \phi(x)}{\partial x \partial x}\right]^{-1} \Img[U^*]$.
\end{example}

In the next section, the Riemannian geometry of the spaces from Diagram (\ref{eq:diag_comm}) is developed.

\section{Riemannian structure and Cramer-Rao bound} \label{sec:information_geometry}

Viewing the vector of chemical concentrations as a distribution analogous to a probability distribution makes information geometrical techniques available in CRN theory and has led to a geometrical understanding of the link between information theory and thermodynamics of CRN \cite{sughiyama2022hessian,kobayashi2022kinetic,kobayashi2023information}.
The information geometry, also termed Hessian geometry in the context of CRNs, is motivated by, and can be seen as an analytic extension of, the global algebro-geometric viewpoint of CRN theory \cite{craciun2009toric,craciun2022disguised}.
Whereas the focus in previous articles was put on the global structure of the equilibrium manifold, the focus of this article is the analysis of the local properties.
Section \ref{sec:Riemannian} describes the Riemannian structure which can be thought of as encoding the linearization of the Legendre transform in the Riemannian metric tensor.
This structure is used in Section \ref{sec:Hessian_CRB} to prove the information geometric Cramer-Rao bound for CRN, which is the main technical result of this article.

\subsection{Riemannian structure} \label{sec:Riemannian}

The Legendre duality between the spaces $X$ and $Y$ is compatible with a Riemannian metric structure given by the Hessians of the respective convex functions $\phi$ and $\phi^*$ in the following sense:
First, the Legendre transform is an isometry, and second, it is natural to identify the cotangent bundle $T^*X \rightarrow X$ with the tangent bundle $TY \rightarrow Y$.
With this identification, the Riemannian metrics become the derivatives of the Legendre transformation.
In this sense, the Riemannian structure can be thought of as locally encoding the Legendre duality.
This structure restricts to the pair of subspaces $\V$ and $\Z$ and descends to the pair of parameter spaces $H$ and $\Q$.
This is now made precise.\\

The aim of the first part of the section is to discuss the geometric properties of Legendre dual spaces in general.
Therefore, let $X \subset \R^n$ and $Y \subset \RnT$ be any pair of Legendre dual spaces for now.
The strictly convex functions $\phi$ and $\phi^*$ define Riemannian metric tensors $g_X$ on $X$ and $g_Y$ on $Y$ via their Hessians as 
\begin{align*}
    g_X = \sum_{i,j=1}^n g_{X_{ij}} ~\dd x^i \otimes \dd x^j &:= \sum_{i,j=1}^n \frac{\partial^2 \phi(x)}{\partial x^i \partial x^j} ~\dd x^i \otimes \dd x^j, \\
    g_Y = \sum_{i,j=1}^n g_Y^{ij} ~\dd y_i \otimes \dd y_j &:= \sum_{i,j=1}^n \frac{\partial^2 \phi^*(y)}{\partial y_i \partial y_j} ~\dd y_i \otimes \dd y_j.
\end{align*}
The tensor $g_X$ gives an isomorphism between the tangent bundle and cotangent bundle over $X$ as
\begin{align}
\nonumber
    g_X: \qquad TX &\longrightarrow T^*X\\
\label{eq:g_X_iso}
    \sum_{i=1}^n v^i \frac{\partial}{\partial x^i} &\xmapsto{\quad} \sum_{i,j=1}^n v^i\frac{\partial^2 \phi(x)}{\partial x^i \partial x^j} \dd x^j,
\end{align}
and analogously for the tensor $g_Y$.
The Legendre transform between $X$ and $Y$ induces the bundle morphism
\begin{equation} \label{eq:diag_tangent}
    \begin{tikzcd}[nodes in empty cells]
    TX \ar[rr,shift left=.55ex,"D\LT_X"] \ar[d] & & TY \ar[ll,shift left=.55ex,"D\LT^{-1}_X"] \ar[d] \\
    X  \ar[rr,shift left=.55ex,"\LT_X"]  & &  
    Y  \ar[ll,shift left=.55ex,"\LT^{-1}_X"],
\end{tikzcd}
\end{equation}
where the derivative $D_x\LT_X$ at a point $x \in X$ is
\begin{align}
\nonumber
    D_x\LT_X: \qquad T_x X &\longrightarrow T_{y} Y \\
    \label{eq:tangent_map}
    \sum_{i=1}^n v^i \frac{\partial}{\partial x^i} &\xmapsto{\quad} \sum_{i,j=1}^n v^i \frac{\partial y_j}{\partial x^i} \frac{\partial }{\partial y_j} = \sum_{i,j=1}^n v^i \frac{\partial^2 \phi(x)}{\partial x^i \partial x^j} \frac{\partial }{\partial y_j}.
\end{align}
In the calculation, $y = \LT_X(x)$ and the expression (\ref{eq:LT_y}) for the $y_j$ was used.
Comparing (\ref{eq:tangent_map}) to the isomorphism (\ref{eq:g_X_iso}) yields the compatibility of the metric tensor $g_X$ and the bundle morphism $D\LT_X$ via the commutativity of the following diagram\footnote{In this diagram, it is understood that the bundles $TX$ and $T^*X$ are over the base $X$ and $TY$ over $Y$, and that the bundle morphisms are given with respect to the identity map $X \rightarrow X$ and $X \xrightarrow{\LT_X} Y$, respectively.}
\begin{equation} \label{eq:diag_compatibility}
    \begin{tikzcd}[nodes in empty cells]
    TX \ar[rr,"D\LT_X"] \ar[dr,swap,"g_X"] & & TY \\
    & T^*X  \ar[ur,swap,"\dd x^i \mapsto \frac{\partial}{\partial y_i}"] & 
\end{tikzcd}
\end{equation}
In this sense, the Riemannian structure encodes the infinitesimal geometry of the Legendre transformation.
The analogous statements hold for the tensor $g_Y$ and therefore it follows that the composition of morphisms
\begin{equation*}
    TX \xrightarrow{\quad g_X \quad } T^*X  \xrightarrow{\dd x^i \mapsto \frac{\partial}{\partial y_i}} TY \xrightarrow{\quad g_Y \quad } T^*Y \xrightarrow{\dd y_i \mapsto \frac{\partial}{\partial x^i}} TX
\end{equation*}
is the identity because it is equal to the composition $TX \xrightarrow{D\LT_X} TY \xrightarrow{D\LT^{-1}_X} TX$.
This shows that, for fixed $x$ and $y = \LT_X(x)$, the matrices $g_X(x)$ and $g_Y(y)$ are mutually inverse, i.e.,
\begin{equation} \label{eq:matrix_inverse}
    \sum_{j=1}^n [g_X(x)]_{ij} [g_Y(y)]^{jk} = \delta_i^k,
\end{equation}
where $\delta_i^k =1$ if $i = k$ and else $\delta_i^k = 0$.\\

\noindent These considerations lead to the following lemma, which plays a central role in the geometry of Legendre dual spaces:
\begin{lemma} \label{lemma:isometry}
    The Legendre duality
\begin{equation*}
    \begin{tikzcd}[nodes in empty cells]
    X  \ar[r,shift left=.55ex,"\LT_X"]  & 
    Y  \ar[l,shift left=.55ex,"\LT^{-1}_X"]
\end{tikzcd}
\end{equation*}
is an isometry with respect to the metric tensors $g_X$ and $g_Y$.
\end{lemma}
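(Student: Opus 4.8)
The plan is to show directly that $\LT_X$ is an isometry by verifying that the pullback of $g_Y$ along $\LT_X$ equals $g_X$. Everything needed has in fact already been assembled in the lead-up to the lemma: the Jacobian $D_x\LT_X$ is represented by the Hessian matrix $g_X(x)$ (this is the computation (\ref{eq:tangent_map}), using (\ref{eq:LT_x})), and for $y = \LT_X(x)$ the matrices $g_X(x)$ and $g_Y(y)$ are mutually inverse (the identity (\ref{eq:matrix_inverse}), which itself follows from the compatibility diagram (\ref{eq:diag_compatibility})). So the lemma should come out as essentially a corollary of those two facts, and the main ``work'' is just keeping the index bookkeeping straight.

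Concretely, I would fix $x \in X$, set $y = \LT_X(x)$, and compute $\LT_X^* g_Y$ in the global coordinates $x^i$. Since $y_a = \partial \phi / \partial x^a$, one has $\dd y_a = \sum_k \frac{\partial y_a}{\partial x^k}\,\dd x^k = \sum_k [g_X(x)]_{ak}\,\dd x^k$, and substituting into $g_Y = \sum_{a,b} [g_Y(y)]^{ab}\,\dd y_a \otimes \dd y_b$ shows that the coefficient of $\dd x^k \otimes \dd x^l$ in the pullback is $\sum_{a,b} [g_X(x)]_{ak}\,[g_Y(y)]^{ab}\,[g_X(x)]_{bl}$. Using the symmetry of the Hessian $[g_X]_{ak} = [g_X]_{ka}$ together with (\ref{eq:matrix_inverse}) collapses the sum over $a$ to $\delta_k^b$, and then the sum over $b$ collapses to $[g_X(x)]_{kl}$. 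Hence $\LT_X^* g_Y = g_X$, which is precisely the statement that $\LT_X$ is an isometry; applying the same argument to $\LT_X^{-1}$ (or reading off the symmetric computation for $\phi^*$) gives that $\LT^{-1}_X$ is an isometry as well.

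An equivalent, more diagrammatic route I would mention: triangle (\ref{eq:diag_compatibility}) identifies the bundle map $g_X \colon TX \to T^*X$ with $D\LT_X$ postcomposed with the canonical identification $T^*X \cong TY$, $\dd x^i \mapsto \partial/\partial y_i$, and the analogous triangle holds for $g_Y$; composing the two triangles recovers exactly that $g_X(x)$ and $g_Y(y)$ are inverse, after which the pullback formula $J^{T} g_Y(y)\, J$ with $J = D_x\LT_X = g_X(x)$ symmetric immediately gives $g_X(x) g_X(x)^{-1} g_X(x) = g_X(x)$. The honest remark is that there is no real obstacle here — the only subtlety is making sure the transpose in $J^{T} g_Y J$ is a genuine transpose rather than a mismatched contraction, and this is handled automatically by the identification $\dd x^i \mapsto \partial/\partial y_i$ and by the symmetry of the Hessian, which forces $J^{T} = J$.
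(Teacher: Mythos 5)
Your proposal is correct and follows essentially the same route as the paper's proof: compute $\LT_X^* g_Y$ in coordinates using $\dd y_a = \sum_k [g_X(x)]_{ak}\,\dd x^k$ and collapse the resulting contraction via the mutual-inverse relation (\ref{eq:matrix_inverse}) and the symmetry of the Hessian. The diagrammatic reformulation you mention is a repackaging of the same computation, not a different argument.
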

\begin{proof}
    It suffices to compute the pullback of the tensor $g_Y$ via $\LT_X$ using $g_{X_{ik}} = \frac{\partial y_i}{\partial x^k}$ and the relation (\ref{eq:matrix_inverse}):
    \begin{align*}
        \LT^*_X g_Y &= \sum_{i,j=1}^n g_Y^{ij} \sum_{k,l=1}^n \frac{\partial y_i}{\partial x^k}\frac{\partial y_j}{\partial x^l} ~\dd x^k \otimes \dd x^l = \sum_{i,j,k,l=1}^n g_Y^{ij} g_{X_{ik}} g_{X_{jl}} ~\dd x^k \otimes \dd x^l\\
        &=  \sum_{j,k,l=1}^n \delta^j_k g_{X_{jl}} ~\dd x^k \otimes \dd x^l = \sum_{k,l=1}^n g_{X_{kl}} ~\dd x^k \otimes \dd x^l = g_X.
    \end{align*}
\end{proof}

This concludes the general considerations on the geometry of Legendre dual spaces.
The lemma is now applied to the Diagram (\ref{eq:diag_comm}) to yield the final result of this subsection:
\begin{theorem} \label{theorem:Birch}
For the spaces described in Section \ref{sec:Hessian_setup}, the following diagram
    \begin{equation} \label{eq:diag_isom}
    \begin{tikzcd}[nodes in empty cells]
    (X,g_X,\phi) \ar[rr,shift left=.55ex,"\LT_X"] & & (Y,g_Y,\phi^*) \ar[ll,shift left=.55ex,"\LT^{-1}_X"] \\
    (\mathcal{V},\left.g_X\right|_{\mathcal{V}},\left.\phi\right|_{\mathcal{V}}) \arrow[u, hook] \ar[d, shift right=.55ex, swap, "U"] \ar[rr,shift left=.55ex,"\LT_X"]  & &  
    (\Z,\left.g_Y\right|_{\Z},\left.\phi^*\right|_{\Z}) \arrow[u, hook] \ar[ll,shift left=.55ex,"\LT^{-1}_X"]\\
    (H,\gH,\psi)  \ar[rr,shift left=.55ex,"\LT_H"] \ar[u, shift right=.55ex, swap, dotted, "\beta"] & & (\Q,\gl,\psi^*) \ar[ll,shift left=.55ex,"\LT^{-1}_H"] \ar[u, "f"].
\end{tikzcd}
\end{equation}
is a commutative diagram of isometries for the Riemannian metric tensors generated by the Hessians of the respective convex functions.
\end{theorem}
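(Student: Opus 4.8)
The plan is to prove the statement in three stages: first the top square, then the bottom square, and finally the identification of the metrics that makes everything fit together.

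\textbf{Step 1: The top square.} The map $\LT_X : (X,g_X) \to (Y,g_Y)$ is an isometry by Lemma \ref{lemma:isometry}. To get the top square of Diagram (\ref{eq:diag_isom}), I would restrict this isometry to the submanifold $\V \subset X$. Since $\V = \LT_X^{-1}(\Z)$ by construction (\ref{eq:def_V}), the restriction $\left.\LT_X\right|_{\V} : \V \to \Z$ is a diffeomorphism, and because an isometry restricted to a submanifold (with the induced metric) onto its image (with the induced metric) is automatically an isometry, the square
\begin{equation*}
    \begin{tikzcd}[nodes in empty cells]
    (X,g_X) \ar[r,"\LT_X"] & (Y,g_Y) \\
    (\V,\left.g_X\right|_{\V}) \ar[u,hook] \ar[r,"\LT_X"] & (\Z,\left.g_Y\right|_{\Z}) \ar[u,hook]
\end{tikzcd}
\end{equation*}
commutes and consists of isometries. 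Commutativity here is just the statement that the inclusion-then-transform equals transform-then-inclusion, which is immediate from $\left.\LT_X\right|_{\V}$ being the literal restriction of $\LT_X$.

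\textbf{Step 2: The bottom square.} Here I need to show $\LT_H : (H,g_H) \to (\Q,g_Q)$ is an isometry. But $\LT_H$ (equivalently $\LT_H^{-1}$) is the Legendre duality induced by the strictly convex pair $\psi,\psi^*$ exactly as in the general setup of Section \ref{sec:Riemannian}, with $g_H,g_Q$ defined as the Hessians of $\psi,\psi^*$. So Lemma \ref{lemma:isometry} applies verbatim to this pair and gives the isometry. For the square
\begin{equation*}
    \begin{tikzcd}[nodes in empty cells]
    (\V,\left.g_X\right|_{\V}) \ar[r,"\LT_X"] & (\Z,\left.g_Y\right|_{\Z}) \\
    (H,g_H) \ar[u,"\beta"] \ar[r,"\LT_H"] & (\Q,g_Q) \ar[u,"f"]
\end{tikzcd}
\end{equation*}
commutativity $\LT_X \circ \beta = f \circ \LT_H$ is equivalent to $\beta = \LT_X^{-1} \circ f \circ \LT_H$, which is exactly the formula for $\beta$ established in the proof of Proposition \ref{proposition:Birch}. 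So commutativity is already in hand; what remains is that the vertical maps $\beta$ and $f$ are isometries. Combined with Step 1, this forces $\beta$ and $f$ to be isometries as soon as I know one of them is — and the cleanest route is to show $f : (\Q,g_Q) \to (\Z,\left.g_Y\right|_{\Z})$ is an isometry directly.

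\textbf{Step 3: The key metric identification — $f$ is an isometry.} This is the heart of the matter and the step I expect to be the main obstacle. The map $f : \Q \to \Z$ is the affine map $\lambda \mapsto y^0 + U^*\lambda$, so its derivative at every point is the linear map $U^* : \RqT \to \RnT$. I must show that pulling back $\left.g_Y\right|_{\Z}$ along $f$ gives $g_Q$, i.e. that for $y = f(\lambda) \in \Z$,
\begin{equation*}
    [g_Q(\lambda)]^{jk} = \sum_{i,l=1}^n u^j_i \, [g_Y(y)]^{il} \, u^k_l,
\end{equation*}
which in matrix form is $g_Q = U \, g_Y \, U^*$ (noting that $U^*$ has matrix $(u^j_i)$ read appropriately, and $U$ its transpose). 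On the other hand, $g_Q$ is by definition the Hessian of $\psi^*(\lambda) = \phi^*(y^0 + U^*\lambda)$, and differentiating twice via the chain rule gives precisely $\partial^2\psi^*/\partial\lambda_j\partial\lambda_k = \sum_{i,l} u^j_i \, (\partial^2\phi^*/\partial y_i\partial y_l) \, u^k_l$. These two expressions coincide, so $f$ is an isometry. (This is the same computation that underlies the formula $g_H(\eta) = [U g_X(x)^{-1} U^*]^{-1}$ quoted in Theorem \ref{thm:Hessian_CRB}, using (\ref{eq:matrix_inverse}).) Once $f$ is an isometry, commutativity of the bottom square together with $\LT_X$ and $\LT_H$ being isometries forces $\beta = \LT_X^{-1}\circ f \circ \LT_H$ to be an isometry as a composition of three isometries, and the whole diagram is assembled. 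The only genuine subtlety to be careful about is that $\left.g_X\right|_{\V}$ means the metric \emph{induced by restriction along the embedding} $\V \hookrightarrow X$ (i.e. the pullback to the tangent subbundle $T\V \subset TX$), and similarly for $\left.g_Y\right|_{\Z}$; with that convention fixed, each of the four isometry claims reduces to a pullback computation and everything is consistent.
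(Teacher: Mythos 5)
Your proposal is correct and follows essentially the same route as the paper's proof: commutativity via Proposition \ref{proposition:Birch}, the rows via Lemma \ref{lemma:isometry}, the inclusions being automatic, and the key step being the chain-rule computation showing $f^*g_Y = \gl$ (i.e.\ $\gl = U g_Y U^*$, with the second-derivative terms absent because $f$ is affine), after which $\beta$ is an isometry as a composition. No gaps.
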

\begin{proof}
    The commutativity of the diagram follows from Proposition \ref{proposition:Birch} and the isometry in the rows from Lemma \ref{lemma:isometry}.
    The inclusions $\V \hookrightarrow X$ and $\Z \hookrightarrow Y$ with the restricted metrics $\left.g_X\right|_{\mathcal{V}}$ and $\left.g_Y\right|_{\Z}$ are automatically isometries.
    It remains to check the isometry between the second and the third row.
    Due to the commutativity of the diagram it suffices to do this for the right hand side, i.e., for the map $f$.
    An explicit computation shows that the Hessian metric $\gl$ is equal to the pullback $f^*g_Y$:
\begin{multline} \label{eq:gl}
    \gl = \sum_{i,j=1}^q \frac{\partial^2 \psi^*(\lambda)}{\partial \lambda_i \partial \lambda_j} \dd \lambda_i \otimes \dd \lambda_j = \sum_{i,j=1}^q \left[ \sum_{k,l=1}^n \frac{\partial^2 \phi^*(y)}{\partial y_k \partial y_l} \frac{\partial y_k}{\partial \lambda_i} \frac{\partial y_l}{\partial \lambda_j} + \right. \\  \left. + \sum_{k=1}^n \frac{\partial^2 y_k}{\partial \lambda_i \partial \lambda_j} \right]  \dd \lambda_i \otimes \dd \lambda_j =  \sum_{i,j=1}^q \sum_{k,l=1}^n \frac{\partial^2 \phi^*(y)}{\partial y_k \partial y_l} \frac{\partial y_k}{\partial \lambda_i} \frac{\partial y_l}{\partial \lambda_j}  \dd \lambda_i \otimes \dd \lambda_j = f^* g_Y
\end{multline}
    as $y = y^0 + \UU^*\lambda$ is linear in $\lambda$ and thus the terms $\frac{\partial^2 y_k}{\partial \lambda_i \partial \lambda_j}$ vanish.
\end{proof}
The Diagram (\ref{eq:diag_isom}) is used in the following section to derive an alternative expression for the metric $\left.g_X\right|_{\mathcal{V}}$ and to establish a Cramer-Rao bound for CRN by comparing it to the metric $g_X$.

\subsection{The information-geometric Cramer-Rao bound} \label{sec:Hessian_CRB}

In this section, all arguments are local at a given point $x \in \V$.
Let $\eta \in H$ and $\lambda \in \Q$ be its coordinates and $y = \LT(x) \in \Z$ its Legendre transform.
The arguments are not coordinate free but use the coordinate systems $H$ and $\Q$ for the spaces $\V$ and $\Z$, respectively.
Therefore, the Riemannian tensors and derivatives are expressed by their components in the $H$ and $\Q$ coordinates and in the $X$ and $Y$ coordinates.
For example, the relation (\ref{eq:matrix_inverse}), $g_X(x) = [g_Y(y = \LT_X(x))]^{-1}$ is abbreviated as $g_X = g_Y^{-1}$ and the component matrix $\gl(\lambda)$ of the metric $\gl$, given by $\gl(\lambda) = Ug_Y(y = f(\lambda))U^* $ according to (\ref{eq:gl}), is abbreviated as $\gl = Ug_YU^*$.\\

As the spaces $\Q$ and $H$ are Legendre dual, the relation (\ref{eq:matrix_inverse}) applies and it follows that the metric $\gH$ can be represented as 
\begin{equation*}
    \gH = \gl^{-1} =  [Ug_X^{-1}U^*]^{-1}.
\end{equation*}
Theorem \ref{theorem:Birch} now states that $\left.g_X\right|_{\mathcal{V}}$ is the pullback of $\gH$ to $\mathcal{V}$.
Therefore, it can be written explicitly as
\begin{equation} \label{eq:isom}
    \left.g_X\right|_{\mathcal{V}} = U^* \gH U = U^* [U g_X^{-1} U^*]^{-1} U. 
\end{equation}
This equality of metrics holds on $T_x \mathcal{V}$ by construction.
However, the $n \times n$ matrix $U^* \gH U$ can be used to define a (pseudo)metric on the whole $T_xX$.
To this end, a characterization of the complement of $T_x \mathcal{V}$ in $T_xX$ is needed.
\begin{lemma} \label{lem:comp}
    The $g_X$-orthogonal complement $(T_x \mathcal{V})^{\perp} \subset T_xX$ is given by $\Ker[U]$.
\end{lemma}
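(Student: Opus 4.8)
The plan is to establish the two inclusions $\Ker[U] \subseteq (T_x\V)^\perp$ and $(T_x\V)^\perp \subseteq \Ker[U]$ separately, using the explicit description of $T_x\V$ coming from the construction of $\V$ as a Legendre transform. First I would note that, since $\V = \LT_X^{-1}(\Z)$ and $\Z = (y^0 + \Img[U^*]) \cap Y$, the tangent space $T_y\Z$ is precisely $\Img[U^*] \subset T_y^*\R^n \cong T_yY$, so by the compatibility diagram (\ref{eq:diag_compatibility}) — which identifies $D\LT_X^{-1}$ with $g_Y$ after the index-raising identification — we get $T_x\V = D_y\LT_X^{-1}(\Img[U^*]) = g_Y(y)\,\Img[U^*] = g_X(x)^{-1}\Img[U^*]$, viewing $g_X(x)^{-1} = g_Y(y)$ as a map $T_x^*\R^n \to T_x\R^n$. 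This is exactly the formula already used in Example \ref{ex:vdW}.

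With this description in hand, the orthogonality computation is short. For $v \in T_x\R^n$ and $w \in T_x\V$, write $w = g_X^{-1}U^*\mu$ for some $\mu \in \Rq$ (identifying $\Img[U^*]$ with covectors). Then the $g_X$-inner product is
\begin{equation*}
    g_X(v,w) = v^T g_X\, g_X^{-1} U^* \mu = v^T U^* \mu = \langle Uv, \mu \rangle.
\end{equation*}
Hence $g_X(v,w) = 0$ for all $w \in T_x\V$ if and only if $\langle Uv,\mu\rangle = 0$ for all $\mu \in \Rq$ (since $\mu$ ranges over all of $\Rq$, as $U^*$ is injective and $T_x\V$ has dimension $q$), which holds if and only if $Uv = 0$, i.e. $v \in \Ker[U]$. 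This proves both inclusions at once. A dimension check is a reassuring sanity step but not logically needed: $\dim\Ker[U] = n - q = n - \dim T_x\V$, which matches the dimension of an orthogonal complement with respect to the (positive definite, hence nondegenerate) metric $g_X$.

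I do not anticipate a genuine obstacle here; the only point requiring care is the identification of $T_x\V$ with $g_X(x)^{-1}\Img[U^*]$, which must be justified either by appealing to the compatibility diagram (\ref{eq:diag_compatibility}) applied to the inclusion $\Z \hookrightarrow Y$ and the Legendre transform $\LT_X^{-1}$, or directly: differentiating the parametrization $\lambda \mapsto \LT_X^{-1}(y^0 + U^*\lambda)$ of $\V$ and using $\frac{\partial x}{\partial y} = [\partial^2\phi/\partial x \partial x]^{-1} = g_X^{-1}$ from (\ref{eq:LT_y}). Once that identification is granted, the orthogonality is a one-line manipulation as above, with nondegeneracy of $g_X$ (strict convexity of $\phi$) ensuring that "$g_X(v,w)=0$ for all $w$" can be tested against the full image $\Img[U^*]$.
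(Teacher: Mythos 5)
Your proposal is correct and follows essentially the same route as the paper: both identify $T_x\V$ as the image of $\Img[U^*]$ under $g_X^{-1}=\frac{\partial x}{\partial y}$ and reduce the inner product to the identity $\langle v,w\rangle_{g_X}=\langle Uv,\mu\rangle$. The only (minor) difference is that you extract both inclusions directly from the nondegeneracy of this pairing, whereas the paper proves one inclusion and finishes with the same dimension count you relegate to a sanity check.
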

\begin{proof}
   Let $v \in T_xX$.
   A vector $w \in T_x\mathcal{V}$ can be expressed as the pushforward of some tangent vector $\overline{w} \in T_{\lambda} \Q$, i.e., $w = \frac{\partial x}{\partial y} \frac{\partial y}{\partial \lambda} \overline{w}$ with $\lambda = f^{-1} \circ \LT_X(x)$.
   The inner product between $v$ and $w$ is given by
\begin{align*}
    \langle v, w \rangle_{g_X} &= \sum_{i,j=1}^n v^i \frac{\partial^2 \phi(x)}{\partial x^i \partial x^j} w^j = \sum_{i,j=1}^n v^i \frac{\partial y_i}{\partial x^j} w^j = \sum_{i,j,k=1}^n \sum_{l=1}^q v^i \frac{\partial y_i}{\partial x^j} \frac{\partial x^j}{\partial y_k} \frac{\partial y_k}{\partial \lambda_l} \bar{w}_l \\
    &= \sum_{i=1}^n \sum_{l=1}^q v^i \frac{\partial y_i}{\partial \lambda_l} \bar{w}_l = \sum_{i=1}^n \sum_{l=1}^q v^i u_i^l \bar{w}_l = \langle Uv, \bar{w} \rangle,
\end{align*}
   where $g_X = \frac{\partial^2 \phi(x)}{\partial x \partial x} = \frac{\partial y}{\partial x}$ and $\frac{\partial y}{\partial \lambda} = U^*$ was used.
   This inner product vanishes if $v \in \Ker[U]$.
   Thus $\Ker[U] \subset (T_x \mathcal{V})^{\perp}$ and the equality follows by comparing the dimensions of both $\Ker[U]$ and $(T_x \mathcal{V})^{\perp}$, which are equal to $n - q$.
\end{proof}
This lemma, together with Theorem \ref{theorem:Birch}, yields a characterization of $U^* \gH U$.
\begin{lemma} \label{lemma:Hessian_metric}
The matrix $U^* \gH U$ defines a pseudo-Riemannian metric on $X$.
The metric agrees with $g_X$ on $T_x \mathcal{V}$ and vanishes on its $g_X$-orthogonal complement.
\end{lemma}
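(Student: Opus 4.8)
The plan is to read the lemma off from Theorem \ref{theorem:Birch} and Lemma \ref{lem:comp}, which already contain all the geometric content; what is left is elementary linear algebra on the bilinear form $U^*\gH U$ and on its behaviour with respect to the $g_X$-orthogonal splitting of $T_xX$.

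First I would verify that $U^*\gH U$ is indeed a (pseudo-)Riemannian metric on $X$. Since $U\colon X\to H$ is a fixed linear map, $U^*\gH U$ is nothing but the pullback $U^*\gH$ of the Hessian metric $\gH$ along $U$, written in the global coordinates. It is symmetric because $\gH$ is; it is positive semidefinite because the associated quadratic form evaluated on $v\in T_xX$ equals $\gH(Uv,Uv)\ge 0$, using that $\gH$ --- being the Hessian of the strictly convex function $\psi$ --- is positive definite; and it depends smoothly on $x$ because $\eta=Ux$ does. Its rank is the rank of $U$, namely $q<n$, so it is genuinely degenerate, in accordance with the term ``pseudo-Riemannian''.

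Next I would establish the agreement with $g_X$ on $T_x\mathcal V$ for $x\in\mathcal V$. By Theorem \ref{theorem:Birch} the restriction $U|_{\mathcal V}\colon(\mathcal V,g_X|_{\mathcal V})\to(H,\gH)$ is an isometry --- it is the composite $\LT_H^{-1}\circ f^{-1}\circ\LT_X$ restricted to $\mathcal V$, made up of the isometries in the right column and the rows of Diagram (\ref{eq:diag_isom}). Hence the pullback of $\gH$ along this isometry equals $g_X|_{\mathcal V}$; but that pullback is exactly the restriction to $\mathcal V$ of the tensor $U^*\gH U$, which is precisely Equation (\ref{eq:isom}), so $U^*\gH U$ and $g_X$ induce the same bilinear form on $T_x\mathcal V$.

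Finally I would treat the orthogonal complement. By Lemma \ref{lem:comp}, $(T_x\mathcal V)^\perp=\Ker[U]$ at every $x\in\mathcal V$. If $v\in\Ker[U]$ then $Uv=0$, so the bilinear form $U^*\gH U$ sends $(v,w)$ to $\gH(Uv,Uw)=0$ for all $w$; in particular its restriction to $(T_x\mathcal V)^\perp$ vanishes, and so do the mixed terms between $T_x\mathcal V$ and $(T_x\mathcal V)^\perp$. Combining this with the preceding paragraph and the $g_X$-orthogonal direct-sum decomposition $T_xX=T_x\mathcal V\oplus(T_x\mathcal V)^\perp$ (a genuine direct sum because $g_X$ is positive definite) pins down $U^*\gH U$ completely and proves the claim. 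I do not expect a real obstacle: the only points needing attention are to recognise $U^*\gH U$ as the pullback tensor $U^*\gH$ before invoking the isometry part of Theorem \ref{theorem:Birch}, and to remember that the statements about $T\mathcal V$ and its complement are made only over points of $\mathcal V$, where Lemma \ref{lem:comp} is valid.
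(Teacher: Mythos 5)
Your proof is correct and follows essentially the same route as the paper: the agreement on $T_x\mathcal{V}$ comes from the isometry of Diagram (\ref{eq:diag_isom}) in Theorem \ref{theorem:Birch} (i.e.\ Equation (\ref{eq:isom})), and the vanishing on the complement comes from Lemma \ref{lem:comp} together with the observation that $U^*\gH U$ kills $\Ker[U]$. The only ingredient the paper adds, which you sidestep by restricting to $x\in\mathcal{V}$, is that the statement is meaningful for arbitrary $x\in X$ because varying the base point $y^0$ foliates $X$ by equilibrium manifolds, so every $x$ lies on a unique $\mathcal{V}$ to which the argument applies.
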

\begin{proof}
    Note that it is required that $x \in \mathcal{V}$ for the construction given thus far to hold.
    However, the extension to the whole space $X$ is straightforward:
    By varying the base point $y^0$ in the definition of the map $f$, one obtains a foliation of $Y$ by affine spaces $\Z$ and a thereby a foliation of $X$ by spaces $\mathcal{V}$ from Legendre duality.
    Thus every point $x \in X$ lies on a unique $\mathcal{V}$, and this is the space to be considered in the statement for arbitrary $x \in X$.
    The agreement of $g_X$ and $U^* \gH U$ on $T_x \mathcal{V}$ is ensured by the isometry of Diagram \ref{eq:diag_isom}.
    By Lemma \ref{lem:comp}, the $g_X$-orthogonal complement of $T_x \mathcal{V}$ is $\Ker[U]$, where $U^* \gH U$ vanishes.
\end{proof}

Lemma \ref{lemma:Hessian_metric} can be reformulated as the Hessian geometric version of the Cramer-Rao bound for CRN:
\begin{theorem}[Hessian geometric Cramer-Rao bound] \label{thm:Hessian_CRB}
The matrix inequality
\begin{equation} \label{eq:CRB}
    g_X \geq U^* \gH U
\end{equation}
holds, i.e., the difference between the matrices $g_X$ and $U^* \gH U$ is positive semidefinite.
The matrix $\gH$ is explicitly given by $\gH = \gl^{-1} = [\UU g_X^{-1} \UU^*]^{-1}$.
\end{theorem}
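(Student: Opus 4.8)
The plan is to read off the matrix inequality (\ref{eq:CRB}) directly from Lemma \ref{lemma:Hessian_metric} by exploiting the $g_X$-orthogonal splitting of the tangent space. By the foliation argument already given in the proof of Lemma \ref{lemma:Hessian_metric}, it suffices to treat a point $x \in \V$. Lemma \ref{lem:comp} then provides the $g_X$-orthogonal decomposition $T_xX = T_x\V \oplus \Ker[U]$, so I would write an arbitrary tangent vector as $v = v_{\parallel} + v_{\perp}$ with $v_{\parallel} \in T_x\V$ and $v_{\perp} \in \Ker[U]$.

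First I would evaluate the quadratic form defined by $U^*\gH U$ on $v$. Since $Uv_{\perp} = 0$ we have $Uv = Uv_{\parallel}$, hence $v^T U^*\gH U v = (Uv_{\parallel})^T \gH (Uv_{\parallel}) = v_{\parallel}^T U^*\gH U v_{\parallel}$; and because $U^*\gH U$ restricts to $g_X$ on $T_x\V$ by Lemma \ref{lemma:Hessian_metric} (equivalently (\ref{eq:isom})), this equals $\langle v_{\parallel}, v_{\parallel}\rangle_{g_X}$. Next I would expand $\langle v, v\rangle_{g_X}$: orthogonality of the splitting kills the cross terms, so $\langle v, v\rangle_{g_X} = \langle v_{\parallel}, v_{\parallel}\rangle_{g_X} + \langle v_{\perp}, v_{\perp}\rangle_{g_X}$. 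Subtracting the two identities gives $v^T(g_X - U^*\gH U)v = \langle v_{\perp}, v_{\perp}\rangle_{g_X} \ge 0$, where positivity uses that $g_X$ is the Hessian of a strictly convex function and hence positive definite; this is exactly (\ref{eq:CRB}), with equality iff $v \in T_x\V$. The explicit formula is then just a collation of earlier relations: Legendre duality between $\Q$ and $H$ together with (\ref{eq:matrix_inverse}) gives $\gH = \gl^{-1}$, and (\ref{eq:gl}) combined with $g_Y = g_X^{-1}$ gives $\gl = Ug_YU^* = Ug_X^{-1}U^*$.

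I do not expect a genuine obstacle here: all of the substantive work has been front-loaded into Theorem \ref{theorem:Birch} (the isometry of Diagram (\ref{eq:diag_isom})) and Lemma \ref{lem:comp} (the identification $(T_x\V)^{\perp} = \Ker[U]$). The only delicate bookkeeping point is the first step, where one must be sure that $U^*\gH U$ restricts to $g_X$ on the specific subspace $T_x\V$ — not merely on some arbitrary complement of $\Ker[U]$ — which is precisely the content of the isometry statement (\ref{eq:isom}); invoking Lemma \ref{lemma:Hessian_metric} at that point closes the gap.
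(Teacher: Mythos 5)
Your proposal is correct and follows essentially the same route as the paper, which states Theorem \ref{thm:Hessian_CRB} as a direct reformulation of Lemma \ref{lemma:Hessian_metric} (agreement with $g_X$ on $T_x\V$, vanishing on the $g_X$-orthogonal complement $\Ker[U]$ from Lemma \ref{lem:comp}) and obtains the explicit formula for $\gH$ from the Legendre duality of $\Q$ and $H$ via (\ref{eq:matrix_inverse}) and (\ref{eq:gl}). You merely spell out the quadratic-form computation behind that reformulation, which is a harmless (and welcome) elaboration.
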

Finally, the matrix $U^* \gH U$ can be characterized as follows:
\begin{corollary} \label{corr:CRB}
Let
\begin{equation*}
    \pi: T_x X \rightarrow T_x \V
\end{equation*}
be the $g_X$-orthogonal projection.
Then the matrix elements of $U^* \gH U$ are given by
\begin{equation}
   [\UU^* \gH \UU]_{ij} = \left\langle \pi\left(\frac{\partial}{\partial x^i} \right), \pi\left( \frac{\partial}{\partial x^j} \right) \right\rangle_{g_X}.
\end{equation}
\end{corollary}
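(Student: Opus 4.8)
The plan is to read off Corollary \ref{corr:CRB} from Lemma \ref{lemma:Hessian_metric} and Lemma \ref{lem:comp} using only bilinearity. Since $\phi$ is strictly convex, $g_X(x)$ is positive definite, so the stated $g_X$-orthogonal projection $\pi : T_xX \to T_x\V$ exists and every $v \in T_xX$ decomposes uniquely as $v = \pi(v) + v^{\perp}$ with $v^{\perp} \in (T_x\V)^{\perp}$. By Lemma \ref{lem:comp} one has $(T_x\V)^{\perp} = \Ker[\UU]$, and since $\pi$ is the identity on $T_x\V$ this gives the elementary but crucial identity $\UU = \UU \circ \pi$ as linear maps on $T_xX$.

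First I would denote by $B$ the symmetric bilinear form on $T_xX$ represented by the matrix $\UU^*\gH\UU$, so that $[\UU^*\gH\UU]_{ij} = B\!\left(\frac{\partial}{\partial x^i}, \frac{\partial}{\partial x^j}\right)$. By adjointness, $B(v,w) = \left\langle \UU v, \gH\,\UU w\right\rangle$ depends on $v$ and $w$ only through $\UU v$ and $\UU w$; combined with $\UU = \UU \circ \pi$, this yields $B(v,w) = B(\pi v, \pi w)$ for all $v, w \in T_xX$ --- in particular $B$ vanishes identically whenever either argument lies in $\Ker[\UU]$, which is the ``vanishing on the $g_X$-orthogonal complement'' of Lemma \ref{lemma:Hessian_metric} in its strong, bilinear form. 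Specialising to the coordinate vectors gives $[\UU^*\gH\UU]_{ij} = B\!\left(\pi\!\left(\frac{\partial}{\partial x^i}\right), \pi\!\left(\frac{\partial}{\partial x^j}\right)\right)$.

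Finally I would invoke the first assertion of Lemma \ref{lemma:Hessian_metric}: on $T_x\V$ the form $B$ coincides with $g_X$. As $\pi\!\left(\frac{\partial}{\partial x^i}\right)$ and $\pi\!\left(\frac{\partial}{\partial x^j}\right)$ lie in $T_x\V$, the previous display becomes $[\UU^*\gH\UU]_{ij} = \left\langle \pi\!\left(\frac{\partial}{\partial x^i}\right), \pi\!\left(\frac{\partial}{\partial x^j}\right)\right\rangle_{g_X}$, which is the assertion. I do not expect a genuine obstacle here: the entire content sits inside Lemma \ref{lemma:Hessian_metric} and Lemma \ref{lem:comp}, and the only place to be slightly careful is to use the $\Ker[\UU]$-vanishing of $B$ in the form $B(v,\cdot)\equiv 0$ (so that all cross terms in the bilinear expansion drop), which is immediate once $B$ is seen to factor through $\UU$.
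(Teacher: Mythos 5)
Your proposal is correct and is essentially the paper's own argument: the paper's proof is just the one-line citation of Lemma \ref{lemma:Hessian_metric}, and your derivation (using $(T_x\V)^{\perp} = \Ker[\UU]$ from Lemma \ref{lem:comp} to get $\UU = \UU \circ \pi$, hence that the bilinear form factors through $\pi$, then invoking agreement with $g_X$ on $T_x\V$) is exactly the unpacking that citation presupposes. Your remark that the vanishing on $\Ker[\UU]$ is needed in its full bilinear form, not merely on the diagonal, is a worthwhile clarification of a point the paper leaves implicit.
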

\begin{proof}
    This follows directly from Lemma \ref{lemma:Hessian_metric}.
\end{proof}

In this section, the Riemannian geometry of quasi-equilibrium manifolds $\V$ has been discussed, with the Cramer-Rao bound as the main result.
This bound might be of interest for CRN theory in future applications by itself.
In fact, it is known that the metric $g_X$ controls the dissipation during the driving of the CRN \cite{loutchko2022riemannian} and is thus of thermodynamical importance.
It is worth noting that this multivariate Cramer-Rao bound is somehow orthogonal to the scalar bound derived in \cite{yoshimura2021}.
In the following section, the results are applied to the characterization of absolute sensitivity in quasi-equilibrium manifolds.

\section{Absolute Sensitivity} \label{sec:abs_sens}

\subsection{Motivation and definition}
As pointed out in Section \ref{sec:sens_intro}, classical sensitivity is defined as the derivative of the section $\gamma: \bX \rightarrow X$.
As such, the classical sensitivity matrix $D_{\eta} \beta = \frac{\partial x}{\partial \eta}$ at a point $x \in V$ with coordinates $\eta \in H$ depends on the choice of the coordinate system $H$, which is equivalent to choosing a basis for $\Ker[S^*]$.
In \cite{shinar2009}, Shinar, Alon, and Feinberg posed the question of how to define quantities that serve a similar purpose to the sensitivity matrix but are canonical, i.e., independent of the choice of a basis for $\Ker[S^*]$.
This is achieved by absolute sensitivity \cite{loutchko2024cramer}.

Instead of the derivative of $\gamma$, one can consider the derivative of the composition 
\begin{equation*}
    X \xlongrightarrow{p} \bX \xlongrightarrow{\gamma} X,
\end{equation*}
which, at any point $x \in X$, can be expressed in the canonical coordinates of the concentration space, and is independent of the choice of a coordinate system for $\bX$, i.e., independent of the choice of a basis for $\Ker[S^*]$.
Moreover, one verifies that the Jacobian matrix $D_x [\gamma \circ p]$ is the same for all points in the stoichiometric polytope $P(x)$.
Therefore, it is enough to understand this derivative at equilibrium points $x \in \V$.
This leads to the definition:

\begin{definition} \label{def:abs_sens}
    For a fixed base point $y^0 \in Y$ and the corresponding $\V$, absolute sensitivity $A$ is defined as the bundle morphism
\begin{equation} \label{eq:def_abs_sens_intro}
    \begin{tikzcd}[column sep=small]
    A \arrow[r,phantom,":" description]& TX \ar[rr,"Dp"] \ar[d] & & T\bX \ar[rr,"D\gamma"] \ar[d] & & TX \ar[d] \\
    & \V  \ar[rr,"p"]  & &  
    \bX \ar[rr,"\gamma"] & & \V,
\end{tikzcd}
\end{equation}
where $TX \rightarrow \V$ is the pullback of $TX \rightarrow X$ along the inclusion $\V \hookrightarrow X$. 
\end{definition}
Some remarks are in order to clarify and elaborate on the definition:
\begin{remark}
    As $\gamma$ maps to $\V$, its derivative $D\gamma$ will have its image in $T\V$ inside $TX$.
    Moreover, by definition, the restriction of $A$ to $T\V$ is the identity.
    Therefore, $A$ is a projection operator.
    In fact, it is the $g_X$-orthogonal projection operator which is the content of Theorem \ref{thm:abs_sens_final}.
\end{remark}
\begin{remark} \label{rmk:extend_to_X}
    Varying the basepoint $y^0$ in the definition of $\V$ gives a fibration of $X$ by equilibrium manifolds $\V$ and thus the definition of absolute sensitivity can be extended from $TX \rightarrow \V$ to $TX \rightarrow X$.
    This is the same argument as in the proof of Lemma \ref{lemma:Hessian_metric}.
\end{remark}
\begin{remark}
    When using the coordinate system $H$ for $\bX$, the absolute sensitivity is the bundle morphism $A:TX \xrightarrow{DU} TH \xrightarrow{D\beta} TX$.
    In this coordinate system, the absolute sensitivity $A(x)$ at a point $x \in \V$ is given by the matrix product
    \begin{equation*}
        A(x) = [D_{Ux} \beta] [D_x U],
    \end{equation*}
    with the matrix elements
    \begin{align*} \label{eq:aij}
    A(x)^i_j = \sum_{k=1}^q \frac{\partial x^j}{\partial \eta^k} u^k_{j}.
    \end{align*}
    This is the definition used in \cite{loutchko2024cramer}.
    In applications, the diagonal elements of $A(x)$ play an important role as they are numerical invariants for the sensitivity of the respective chemical.
    This motivates the definition:
\end{remark}
\begin{definition}
    The matrix element $A(x)^i_j$ is called the {\it absolute sensitivity of $X_i$ with respect to $X_j$} at a point $x \in \V$.
    
    The {\it absolute sensitivity $\alpha_i(x)$ of the chemical $X_i$ at a point $x \in \V$ is defined as} 
    \begin{equation*}
        \alpha_i := A(x)_i^i.
    \end{equation*}  
\end{definition}
\begin{remark}
    It is worth discussing the intuition behind the definition:
    Fix a point $x \in \V$ and let $\delta x^j \in T_xX$ be an infinitesimal perturbation in the $x^j$-direction, i.e., a change of the concentration of a chemical $X_j$.
    This changes the vector of conserved quantities by $D_x U \delta x^j$, leading to a new equilibrium state with the adjusted concentrations given by $D_{Ux} \beta D_x U \delta x^j$, which is just $A(x) \delta x^j$.
    This justifies calling the matrix element $A(x)^i_j$ the absolute sensitivity of $X_i$ with respect to $X_j$, as it quantifies the fraction of concentration change in $X_j$ which distributes to $X_i$.
    Further explanations can be found in \cite{loutchko2024cramer}.
\end{remark}

The following theorem shows the advantage of absolute sensitivity over the sensitivity matrix $\frac{\partial x}{\partial \eta}$:
\begin{theorem}[\cite{loutchko2024cramer}, Theorem 3.4.]
The matrix of absolute sensitivities $\A(x)$ is independent of the choice of a basis of $\Ker[S^*]$.
Moreover, the equality
\begin{align*}
   \mathrm{Tr}[\A(x)] = \sum_{i=1}^n \alpha_i(x) = q 
\end{align*}
holds, whereby $q = \dim \Ker[S^*]$.
\end{theorem}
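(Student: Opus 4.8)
The plan is to treat the two assertions separately: basis-independence follows by unwinding the coordinate-free definition of $A$, and the trace identity follows from a one-line manipulation using that $\beta$ is a section of $U$ together with cyclicity of the trace.

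\textbf{Basis independence.} By Definition~\ref{def:abs_sens}, absolute sensitivity is the bundle morphism $A = D(\gamma\circ p): TX \to TX$, and neither $p: X \to \bX$ nor $\gamma: \bX \to \V$ refers to a choice of basis of $\Ker[S^*]$: the map $p$ is the canonical quotient projection $X \to X/\Img[S]$, and $\gamma$ sends a linear conserved quantity to the unique equilibrium point of its stoichiometric polytope. Since $X$ is equipped with the canonical coordinates of $\R^n$, the Jacobian matrix $A(x) = D_x(\gamma\circ p)$ at a point $x \in \V$ is canonical as well. Equivalently, at the level of the formula $A(x) = [D_{Ux}\beta][D_xU]$, changing the basis of $\Ker[S^*]$ by an invertible $q\times q$ matrix $M$ replaces $U$ by $MU$ and hence $\beta$ by $\eta \mapsto \beta(M^{-1}\eta)$, so $D\beta$ is replaced by $[D\beta]M^{-1}$; the product $[D\beta]M^{-1}\cdot MU = [D\beta]U$ is therefore unchanged.

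\textbf{The trace.} Two facts are needed. First, $U: X \to H \subset \R^q$ is the restriction of a linear map, so its derivative at every point is $D_xU = U$. Second, $\beta: H \to \V$ is a section of $U$ by Proposition~\ref{proposition:Birch}, i.e. $U\circ\beta = \mathrm{id}_H$, so by the chain rule $[D_xU][D_{Ux}\beta] = \mathrm{id}_{\R^q}$. Writing $A(x) = [D_{Ux}\beta]\,U$ and using cyclicity of the trace,
\begin{equation*}
    \mathrm{Tr}[A(x)] = \mathrm{Tr}\left([D_{Ux}\beta]\,U\right) = \mathrm{Tr}\left(U\,[D_{Ux}\beta]\right) = \mathrm{Tr}[\mathrm{id}_{\R^q}] = q.
\end{equation*}
Since $\alpha_i(x) = A(x)^i_i$, this is exactly $\sum_{i=1}^n \alpha_i(x) = q$.

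\textbf{Alternative/sanity check and obstacle.} A more conceptual route: $p|_{\V} : \V \to \bX$ is a bijection with inverse $\gamma$, so $p\circ\gamma = \mathrm{id}_{\bX}$, whence $(\gamma\circ p)\circ(\gamma\circ p) = \gamma\circ p$; thus $A$ is idempotent, hence a projection whose rank equals $\dim T_x\V = \dim\V = q$, and the trace of a real idempotent matrix equals its rank. I do not expect a genuine obstacle in this proof; the only points demanding care are remembering that $D_xU = U$ because $U$ is linear, and keeping straight that it is $U\circ\beta$ --- not $\beta\circ U$ --- that equals the identity. Everything else is the chain rule and cyclicity of the trace.
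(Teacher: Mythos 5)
Your proof is correct. The paper does not actually prove this statement itself --- it is imported verbatim from the cited reference \cite{loutchko2024cramer} --- so there is no in-text proof to compare against; your two arguments (basis independence read off from the coordinate-free definition $A = D(\gamma\circ p)$, double-checked by the explicit $U\mapsto MU$, $D\beta\mapsto D\beta\,M^{-1}$ cancellation, and the trace identity from $U\circ\beta=\mathrm{id}_H$, linearity of $U$, and cyclicity of the trace) are exactly the natural ones, and your idempotency/rank sanity check is consistent with the paper's later observation that $A$ is a projection onto $T\V$, whose rank is $q$.
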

\noindent For quasi-thermostatic CRN (cf. Example \ref{ex:quasiTS} for the definition), a linear algebraic characterization of the matrix of absolute sensitivities has been derived in \cite{loutchko2024cramer} by using a Cramer-Rao bound.
In the next section a generalization to arbitrary quasi-equilibrium manifolds is provided.

\subsection{Absolute sensitivity of quasi-equilibrium manifolds} \label{sec:abs_sens_comp}

Even for quasi-thermostatic CRNs (cf. Example \ref{ex:quasiTS}), there is no analytical expression for the section $\beta$ and therefore direct computations are not possible based purely on the Definition \ref{def:abs_sens}.
This difficulty can be circumvented by using the Diagram (\ref{eq:diag_isom}) and the Cramer-Rao bound which results in a linear algebraic characterization of absolute sensitivity.
The main result is that $A$ is the $g_X$-orthogonal projection $\pi: T X \rightarrow T \V$.
Whereas the projection property follows from the definition, the $g_X$-orthogonality requires the geometrical results from Section \ref{sec:information_geometry}.
The linear algebraic characterization is needed for explicit analytical computations and thus for the understanding of concentration robustness and hypersensitivity phenomena in specific CRN (see Section \ref{sec:IDH} for an example).

The commutative Diagram (\ref{eq:diag_comm}) is used to represent $A$ explicitly as a product of known derivatives.
Then, by using the compatibility between Hessian metrics and derivatives via Diagram (\ref{eq:diag_compatibility}), this representation is related to the right hand side of the Cramer-Rao bound.\\
    
The absolute sensitivity is given by $TX \xrightarrow{DU} TH \xrightarrow{D\beta} TX$, which is resolved, using the commutativity of the Diagram (\ref{eq:diag_comm}), as
\begin{equation*}
    \underbrace{\T X \xrightarrow{\quad DU \quad}  TH \xrightarrow{\quad D\LT_H \quad} T\Q \xrightarrow{\quad Df \quad} T\Z}_{U^* \gH U} \xrightarrow{\quad D\LT^{-1}_X \quad} T\V \xhookrightarrow{\qquad} TX,
\end{equation*}
i.e., by the expression
\begin{equation*}
    A = D \LT_X^{-1} \circ Df \circ D\LT_H \circ DU.
\end{equation*}
Using the correspondence in Diagram (\ref{eq:diag_compatibility}), the concatenation of derivatives $Df \circ D\LT_H \circ DU$ corresponds to $U^* \gH U$ whereas $D \LT_X^{-1}$ corresponds to $g_Y$, giving
\begin{equation*}
    A = g_Y U^* \gH U.
\end{equation*}
Finally, using the relations (\ref{eq:gl}) and (\ref{eq:matrix_inverse}) to resolve $\gH(\eta)$ as $[U g_Y(y) U^*]^{-1}$ leads to the expression
\begin{equation} \label{eq:A_full}
    A(x) = g_Y(y)U^* [U g_Y(y) U^*]^{-1} U.
\end{equation}
The formula (\ref{eq:A_full}) shows that $A$ is idempotent:
\begin{equation*}
    A(x)^2 = g_Y(y)U^* [U g_Y(y) U^*]^{-1} Ug_Y(y)U^* [U g_Y(y) U^*]^{-1} U = A(x)
\end{equation*}
and that, therefore, $A$ is a projection operator from $TX$ to $T \V$:
\begin{theorem} \label{thm:abs_sens_final}
    The absolute sensitivity $A$ is the $g_X$-orthogonal projection $\pi: TX \rightarrow T\V$.
    For a fixed $x \in \V$, its components are given explicitly as
    \begin{equation} \label{eq:A_linear_alg}
        A(x)^i_j = \left\langle \dd x^i , \pi\left( \frac{\partial}{\partial x^j} \right) \right\rangle
    \end{equation}    
    in the canonical $X$ coordinates.
\end{theorem}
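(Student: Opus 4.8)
The plan is to start from the explicit formula $A(x) = g_Y(y)\,U^*\,[U g_Y(y) U^*]^{-1}\,U$ and the idempotency already established above, to determine the image and the kernel of this operator, and then to invoke the elementary fact that an idempotent endomorphism of $T_xX$ whose image is a subspace $W$ and whose kernel is $W^{\perp}$ is precisely the $g_X$-orthogonal projection onto $W$.

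First I would pin down the image. Writing $A(x) = g_Y(y)U^*M$ with $M = [Ug_Y(y)U^*]^{-1}U$, we get $\Img[A(x)] \subseteq g_Y(y)\,\Img[U^*]$, and this right-hand side is exactly $T_x\V$, since $T_x\V = \frac{\partial x}{\partial y}\,\Img[U^*] = g_Y(y)\,\Img[U^*]$ (the expression recorded in Example~\ref{ex:vdW}, using $g_Y = [\partial^2\phi/\partial x\partial x]^{-1}$). For the reverse inclusion I would check on a general $w = g_Y(y)U^*\xi \in T_x\V$ that
\begin{equation*}
    A(x)w = g_Y(y)U^*[Ug_Y(y)U^*]^{-1}\bigl(Ug_Y(y)U^*\bigr)\xi = g_Y(y)U^*\xi = w,
\end{equation*}
so $A(x)$ restricts to the identity on $T_x\V$ and hence $\Img[A(x)] = T_x\V$; this also re-proves the remark following Definition~\ref{def:abs_sens} that $A$ is a projection onto $T\V$.

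Next I would determine the kernel. Because $g_Y(y)$ and $[Ug_Y(y)U^*]^{-1}$ are invertible and $U^*$ is injective, $A(x)v = 0$ is equivalent to $Uv = 0$, so $\Ker[A(x)] = \Ker[U]$, which by Lemma~\ref{lem:comp} is precisely the $g_X$-orthogonal complement $(T_x\V)^{\perp}$ of $T_x\V$ in $T_xX$. Combining the two computations, $A(x)$ is the identity on $T_x\V$ and zero on $(T_x\V)^{\perp}$, hence it is the $g_X$-orthogonal projection $\pi : T_xX \to T_x\V$. As an equivalent route one may note $g_X(x)A(x) = U^*\gH U$ (using $g_X g_Y = I$), so $A(x) = g_Y(y)\,(U^*\gH U)$, and then read the same conclusion off Lemma~\ref{lemma:Hessian_metric}, which says $U^*\gH U$ agrees with $g_X$ on $T_x\V$ and vanishes on $(T_x\V)^{\perp}$.

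For the component formula, by the matrix conventions of the Notation section, saying that $A(x)$ represents $\pi$ in the canonical $X$ coordinates means $\pi\left(\frac{\partial}{\partial x^j}\right) = \sum_i A(x)^i_j \frac{\partial}{\partial x^i}$; pairing with $\dd x^i$ and using $\left\langle \dd x^i, \frac{\partial}{\partial x^k}\right\rangle = \delta^i_k$ gives $A(x)^i_j = \left\langle \dd x^i, \pi\left(\frac{\partial}{\partial x^j}\right)\right\rangle$, as claimed. The substantive content of the theorem is the $g_X$-orthogonality, and this is already absorbed into Lemma~\ref{lem:comp} (equivalently Lemma~\ref{lemma:Hessian_metric}), whose proof rests on the isometry Diagram~(\ref{eq:diag_isom}) of Theorem~\ref{theorem:Birch}; so the main obstacle lay in those earlier sections, and within this proof the only real task is the image/kernel bookkeeping above.
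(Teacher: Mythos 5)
Your proof is correct, and it takes a somewhat different route from the paper's. The paper fixes the $g_X$-orthogonal projection $\pi$ in advance and verifies the identity $A(x)^i_j=\pi^i_j$ by a component computation: it writes $A=g_Y[U^*\gH U]$, substitutes the Gram-matrix characterization $[U^*\gH U]_{kj}=\left\langle \pi(\partial/\partial x^k),\pi(\partial/\partial x^j)\right\rangle_{g_X}$ from Corollary \ref{corr:CRB}, uses the self-adjointness $\left\langle \pi(v),\pi(w)\right\rangle_{g_X}=\left\langle v,\pi(w)\right\rangle_{g_X}$, and cancels $g_Y g_X=I$. You instead characterize $A(x)$ operator-theoretically: from the explicit formula you compute $\Img[A(x)]=g_Y\Img[U^*]=T_x\V$ (with $A$ restricting to the identity there) and $\Ker[A(x)]=\Ker[U]$, then invoke Lemma \ref{lem:comp} to identify $\Ker[U]=(T_x\V)^{\perp}$ and conclude by the uniqueness of an idempotent with prescribed image and orthogonally complementary kernel. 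Both arguments draw the essential geometric input from the same place --- Lemma \ref{lem:comp}, which is where the isometry Theorem \ref{theorem:Birch} enters --- so neither is more powerful; but your version is slightly more self-contained (it re-derives the projection property rather than quoting the Corollary, and makes explicit where the injectivity of $U^*$ and the invertibility of $Ug_YU^*$ are used), while the paper's is a shorter index manipulation once Corollary \ref{corr:CRB} is in hand. Your treatment of the component formula as a tautology of the matrix conventions matches the paper's closing remark exactly.
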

\begin{proof}
    Let $\pi$ be given by $\pi \left( \sum_{j=1}^n v^j \frac{\partial }{\partial x^j} \right) = \sum_{j,l = 1}^n v^j \pi^l_j \frac{\partial}{\partial x^l}$.
    The representation $A = g_Y [U^* \gH U]$, together with Corollary \ref{corr:CRB} and $\left\langle \pi\left(\frac{\partial}{\partial x^k} \right), \pi\left( \frac{\partial}{\partial x^j} \right) \right\rangle_{g_X} = \left\langle \frac{\partial}{\partial x^k}, \pi\left( \frac{\partial}{\partial x^j} \right) \right\rangle_{g_X}$, yields 
\begin{multline*}
      A(x)^i_j = \sum_{k=1}^n g_Y^{ik}  \left\langle \pi\left(\frac{\partial}{\partial x^k} \right), \pi\left( \frac{\partial}{\partial x^j} \right) \right\rangle_{g_X} = \\
      = \sum_{k=1}^n g_Y^{ik}  \left\langle \frac{\partial}{\partial x^k} , \sum_{l=1}^n \pi_j^l  \frac{\partial}{\partial x^l} \right\rangle_{g_X}
      = \sum_{k,l=1}^n g_Y^{ik} g_{X_{kl}} \pi_j^l = \sum_{l=1}^n \delta^i_l \pi^l_j = \pi^i_j,
\end{multline*}
where (\ref{eq:matrix_inverse}) was used and which is the desired relation between $A$ and $\pi$.
Finally, the tautology $\pi^i_j = \left\langle \dd x^i , \pi\left( \frac{\partial}{\partial x^j} \right) \right\rangle$ yields the characterization (\ref{eq:A_linear_alg}) of the matrix elements of $A$.
\end{proof}

This interpretation of the matrix of absolute sensitivities allows for a global viewpoint over $X$ as follows:
\begin{definition}
    The global absolute sensitivity is the bundle morphism
\begin{equation*}
    A: TX \rightarrow TX
\end{equation*}
which is locally given by the the $g_X$-orthogonal projection to the subbundle 
\begin{equation*}
    T \V := D\LT^{-1}_X \Img[U^*],
\end{equation*}
where $\Img[U^*]$ is the corresponding subbundle of $TY$.
\end{definition}
The crucial point of this definition is that {\it a priori} the absolute sensitivity has been defined only at points $x \in \V$ but the above definition is given globally on $X$.
The reason for this is that, by varying the base point $y^0$ in the definition of the map $f$ in (\ref{eq:f_def}), one obtains a foliation of $X$ by spaces $\V$, cf. Remark \ref{rmk:extend_to_X}.\\

\noindent 
To close this section, observe, by using formula (\ref{eq:A_full}), that the matrix $A(x)g_Y(y)$ is symmetric.
Taking into account that $g_Y(y)$ is itself symmetric, this yields the equality $A(x)g_Y(y) = g_Y(y) A^*(x)$ and the following lemma.
\begin{lemma} \label{lemma:A_perp}
    The following diagram of bundle morphisms commutes
\begin{equation*}
    \begin{tikzcd}[nodes in empty cells]
    T^*X \ar[d, "A^*"] \ar[rr,"\dd x^i \mapsto \frac{\partial}{\partial y_i}"] & & TY \ar[rr, "g_Y"] & & TX \ar[d, "A"]\\
    T^*X  \ar[rr,"\dd x^i \mapsto \frac{\partial}{\partial y_i}"]  & & TY \ar[rr, "g_Y"] & & TX.
\end{tikzcd}
\end{equation*}
Moreover, let $A^{\perp}: TX \rightarrow TX$ be the bundle morphism given by the $g_X$-orthogonal projection
\begin{equation*}
    A^{\perp}: T X \rightarrow (T \V)^{\perp}.
\end{equation*}
Then $A^{\perp}$ and its adjoint $(A^{\perp})^*$ satisfy the analogous commutative diagram of bundle morphisms.
\end{lemma}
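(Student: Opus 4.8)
The plan is to strip the two commutative squares down to a single matrix identity and then read that identity off the explicit formula~(\ref{eq:A_full}). Concretely, the diagram commutes (for both $A$ and $A^{\perp}$) if and only if the square
\[
A(x)\, g_Y(y) \;=\; g_Y(y)\, A^*(x)
\]
holds at every $x\in\V$ with $y=\LT_X(x)$, and the $A^{\perp}$--version is then obtained for free by linearity.

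\textbf{Step 1: unwind the diagram into coordinates.} In the canonical coordinates, the identification $\dd x^i \mapsto \frac{\partial}{\partial y_i}$ is the identity matrix with indices relabelled, so the composite $T^*X \to TY \xrightarrow{g_Y} TX$ is simply multiplication by the matrix $g_Y(y)$. Tracing a covector $w=\sum_i w_i\,\dd x^i$ around the first square, the right-then-down path yields $A(x)\,g_Y(y)\,w$ and the down-then-right path yields $g_Y(y)\,A^*(x)\,w$, where $A^*(x)$ is the transpose matrix of $A(x)$ (adjoint with respect to the dual pairing, not the metric). Hence commutativity of the first square is equivalent to $A(x)g_Y(y)=g_Y(y)A^*(x)$.

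\textbf{Step 2: verify the matrix identity.} By~(\ref{eq:A_full}), $A(x)g_Y(y) = g_Y(y)U^*\big[U g_Y(y) U^*\big]^{-1} U g_Y(y)$. Since $g_Y(y)$ is a Hessian it is symmetric, and the inverse of the symmetric matrix $U g_Y(y) U^*$ is symmetric; therefore the right-hand side equals its own transpose. Writing $M:=A(x)g_Y(y)$, this gives $M = M^* = (A(x)g_Y(y))^* = g_Y(y)^* A^*(x) = g_Y(y)A^*(x)$, which is the desired identity. (This is precisely the symmetry of $A(x)g_Y(y)$ already noted before the lemma; I would just record the one-line computation.)

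\textbf{Step 3: the statement for $A^{\perp}$.} By Theorem~\ref{thm:abs_sens_final}, $A$ is the $g_X$-orthogonal projection onto $T\V$, so $A^{\perp} = I - A$, the complementary $g_X$-orthogonal projection onto $(T\V)^{\perp}$, and $(A^{\perp})^* = I - A^*$. Then
\[
A^{\perp} g_Y = (I-A)g_Y = g_Y - A g_Y = g_Y - g_Y A^* = g_Y(I - A^*) = g_Y (A^{\perp})^*,
\]
so the same coordinate unwinding as in Step 1 shows the analogous square for $A^{\perp}$ commutes.

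\textbf{Main obstacle.} There is no real difficulty here; the only point requiring care is bookkeeping: one must be sure that the ``identification'' arrows are genuinely the identity in the chosen coordinates, and that $A^*$ denotes the adjoint with respect to the bilinear pairing $\langle\cdot,\cdot\rangle$ (so that its matrix is literally the transpose of that of $A$), rather than a metric adjoint. Once those conventions are fixed, the whole lemma collapses to the symmetry of $g_Y$ together with the symmetry of $[U g_Y U^*]^{-1}$.
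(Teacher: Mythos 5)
Your proposal is correct and follows essentially the same route as the paper: reduce the diagram to the matrix identity $A(x)g_Y(y)=g_Y(y)A^*(x)$, obtain it from the symmetry of $A(x)g_Y(y)=g_Y U^*[Ug_YU^*]^{-1}Ug_Y$ via formula (\ref{eq:A_full}), and deduce the $A^{\perp}=I-A$ case by linearity. The only difference is presentational — the paper records the symmetry observation just before the lemma rather than inside the proof — so there is nothing to add.
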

\begin{proof}
    The commutativity of the shown diagram is equivalent to $A(x)g_Y(y) = g_Y(y) A^*(x)$.
    The morphism $A^{\perp}$ is given by $A^{\perp} = I - A$, where $I: TX \rightarrow TX$ is the identity.
    The commutativity $A^{\perp}g_Y = g_Y (A^{\perp})^*$ then follows from $g_Y I = g_Y (A + A^{\perp}) = I g_Y = (A + A^{\perp})^* g_Y$ by using $Ag_Y = g_Y A^*$.
\end{proof}

The orthogonal projection $A^{\perp}$ plays an important role for the first order corrections to the ideal thermodynamic behavior, as is shown in the following section.

\section{Beyond quasi-thermostatic CRNs: Deformations of the metric and first order correction results} \label{sec:first_order}

As a novel application of the information geometric framework, the effects of non-ideal thermodynamic behavior on the absolute sensitivity can be analyzed.
Geometrically, this corresponds to first order deformations of the steady state manifold $\V^{\mathrm{id}}$ from Example \ref{ex:quasiTS}.
Therefore, after recalling the theory for $\V^{\mathrm{id}}$, which was worked out in \cite{loutchko2024cramer}, the first order correction of a deformation of the metric $g_X^{\mathrm{id}}$ are analyzed. 

\subsection{Quasi-thermostatic CRN} 
The results from \cite{loutchko2024cramer} for quasi-thermostatic CRN can be recovered for the convex function $\phi^{\mathrm{id}}(x) = \sum_{i =1}^n [\mu_i^0 x^i + x^i \log x^i - x^i]$, cf. Example \ref{ex:quasiTS}.
The Hessian metric $g^{\mathrm{id}}_X$ is given by $g^{\mathrm{id}}_X = \sum_{i=1}^n \frac{1}{x^i} \dd x^i \otimes \dd x^i$, i.e., it is represented by the diagonal matrix $g^{\mathrm{id}}_X(x) = \XX$.
The inverse matrix is $g^{\mathrm{id}}_Y(y) = \X$, and the Cramer-Rao bound, which was derived by standard linear algebra in \cite{loutchko2024cramer}, reads 
\begin{equation*}
    \XX \geq U^* [U \X U^*]^{-1} U.
\end{equation*}
Then Theorem \ref{thm:abs_sens_final} specializes to the expression 
\begin{equation*}
    A(x)^i_j = x^i \left\langle \pi\left(\frac{\partial}{\partial x^i}\right),\pi\left(\frac{\partial}{\partial x^j}\right)\right\rangle_{g^{\mathrm{id}}_X },
\end{equation*}
which was proven in \cite{loutchko2024cramer}.
Moreover, this leads to the following corollary.
\begin{corollary}[\cite{loutchko2024cramer}, Corollary 4.5.] \label{corr:alpha_in_0_1}
For quasi-thermostatic CRN, the absolute sensitivities $\ai$ satisfy
\begin{equation*}
        \ai \in [0,1].
\end{equation*}    
\end{corollary}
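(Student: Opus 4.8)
The plan is to exploit the formula $A(x)^i_j = x^i \langle \pi(\partial/\partial x^i), \pi(\partial/\partial x^j)\rangle_{g^{\mathrm{id}}_X}$ established just above, specializing to the diagonal entries $\alpha_i = A(x)^i_i = x^i \|\pi(\partial/\partial x^i)\|^2_{g^{\mathrm{id}}_X}$. Since $g^{\mathrm{id}}_X$ is positive definite, the squared norm is non-negative, which immediately gives $\alpha_i \geq 0$; the only real content is the upper bound $\alpha_i \leq 1$.

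For the upper bound, the key observation is that $\pi$ is the $g^{\mathrm{id}}_X$-orthogonal projection onto $T_x\mathcal{V}$, so it is an orthogonal projection with respect to the inner product $\langle \cdot,\cdot\rangle_{g^{\mathrm{id}}_X}$. Consequently $\|\pi(v)\|_{g^{\mathrm{id}}_X} \leq \|v\|_{g^{\mathrm{id}}_X}$ for every $v \in T_xX$ — projections are norm-nonincreasing. Applying this with $v = \partial/\partial x^i$ and using that $g^{\mathrm{id}}_X(x) = \mathrm{diag}(1/x)$ gives $\|\partial/\partial x^i\|^2_{g^{\mathrm{id}}_X} = g^{\mathrm{id}}_{X_{ii}} = 1/x^i$. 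Therefore
\begin{equation*}
    \alpha_i = x^i \left\|\pi\left(\frac{\partial}{\partial x^i}\right)\right\|^2_{g^{\mathrm{id}}_X} \leq x^i \left\|\frac{\partial}{\partial x^i}\right\|^2_{g^{\mathrm{id}}_X} = x^i \cdot \frac{1}{x^i} = 1,
\end{equation*}
which is exactly the claim. Combining, $\alpha_i \in [0,1]$.

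I do not expect any serious obstacle here: the statement is essentially a restatement of the elementary fact that an orthogonal projection has operator norm at most $1$, transported through the diagonal form of the ideal Hessian metric. The only point requiring a little care is making sure the relevant inner product is the one in which $\pi$ is genuinely orthogonal — namely $g^{\mathrm{id}}_X$, not the Euclidean one — which is precisely how $\pi$ was defined in Theorem \ref{thm:abs_sens_final}. One could alternatively argue via the Cramer-Rao bound $\mathrm{diag}(1/x) \geq U^*[U\,\mathrm{diag}(x)\,U^*]^{-1}U$: sandwiching both sides between $e_i$ (the $i$-th basis covector/vector) and reading off diagonal entries gives $1/x^i \geq [U^* g_H U]_{ii}$, and by Corollary \ref{corr:CRB} the right side equals $\|\pi(\partial/\partial x^i)\|^2_{g^{\mathrm{id}}_X}$, so multiplying by $x^i > 0$ yields $\alpha_i \leq 1$ again; the non-negativity is clear since the matrix $U^* g_H U$ is positive semidefinite. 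Either route is a one-liner once the earlier machinery is in place.
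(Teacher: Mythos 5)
Your proof is correct and follows exactly the route the paper intends: the paper states the corollary as an immediate consequence of the formula $A(x)^i_j = x^i \langle \pi(\partial/\partial x^i),\pi(\partial/\partial x^j)\rangle_{g^{\mathrm{id}}_X}$, and your argument (non-negativity of the squared norm for the lower bound, norm-non-increase of the $g^{\mathrm{id}}_X$-orthogonal projection together with $\|\partial/\partial x^i\|^2_{g^{\mathrm{id}}_X} = 1/x^i$ for the upper bound) is precisely the intended derivation. The alternative Cramer--Rao route you sketch is also valid and is just the same inequality read off entrywise via Corollary \ref{corr:CRB}.
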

However, the analogous statement of the corollary in the more general setup of this article is not expected to hold, which can lead to interesting behavior of the respective CRN.

\subsection{First order corrections} \label{sec:first_order_comp}

The geometry of first order corrections to the metric $g_X$ can be analyzed algebraically and the results have a geometrical interpretation.
Let the metric tensor $g_X$ be given by
\begin{equation} \label{eq:gX_deform}
    g_{X_{kl}} = \frac{\partial^2 \phi(x)}{\partial x_l \partial x_k} = {g_X^{\mathrm{id}}}_{kl} + c_{kl},
\end{equation}
where $g^{\mathrm{id}}_X(x) = \XX$ is the metric corresponding to the ideal behavior of quasi-thermostatic CRN and the $c_{kl}$ are correction terms.
The correction terms are required to be symmetric, i.e., $c_{kl} = c_{lk}$ for all $l,k$, and the matrix $g_X(x)$ must be positive definite.
The potential with quadratic correction terms (Example \ref{ex:quadratic}) and the van der Waals potential (Example \ref{ex:vdW}) give this type of modification of the metric.

From now on, the calculations are carried out to first order in the $c_{kl}$\footnote{To work in first order in the $c_{kl}$ means to work in the ring
\begin{equation*}
     \faktor{\mathbb{R}(x^1,\dotsc,x^n)[c_{11},c_{12},\dotsc,c_{nn}]}{\left(c_{ij}c_{kl}\right)_{i,j,k,l=1,\dotsc,n}}.
\end{equation*}
This is the ring generated, over the field $\mathbb{R}(x^1,\dotsc,x^n)$ of rational functions in $n$ indeterminates, by $c_{11},c_{12},\dotsc,c_{nn}$ such that $c_{ij}c_{kl} = 0$.
In this ring, for $A \in \mathrm{GL}_n (\mathbb{R}(x^1,\dotsc,x^n))$ and $B \in \mathrm{M}_n ((c_{11},c_{12},\dotsc,c_{nn}) \cdot \mathbb{R}(x^1,\cdots,x^n))$, the matrix inversion formula $[A + B]^{-1} = A^{-1} - A^{-1}B A^{-1}$ holds true.
This is used for the symbolic computations in SageMath in the Supplementary Material.
}.
Let $C$ be the $n \times n$ matrix of correction terms, i.e., $C_{ij} = c_{ij}$ and $g_X = g_X^{\mathrm{id}} + C$.
In the following calculations, it is understood that one works with the local representations $g_X(x)$, $g_Y(y)$, etc. but for notational convenience the tensor symbols $g_X$, $g_Y$, etc. are written, cf. Section \ref{sec:Hessian_CRB}.
The metric $g_Y$ is given by
\begin{equation*}
    g_Y = \left[g_X^{\mathrm{id}} + C \right]^{-1} =  g_Y^{\mathrm{id}} - g_Y^{\mathrm{id}} C g_Y^{\mathrm{id}}
\end{equation*}
and one also obtains 
\begin{align*}
    \gH &= [Ug_YU^*]^{-1} = [Ug_Y^{\mathrm{id}}U^*]^{-1} + [Ug_Y^{\mathrm{id}}U^*]^{-1} U g_Y^{\mathrm{id}} C g_Y^{\mathrm{id}} U^* [Ug_Y^{\mathrm{id}}U^*]^{-1} \\
        &= \gH^{\mathrm{id}}+ \gH^{\mathrm{id}} U g_Y^{\mathrm{id}} C g_Y^{\mathrm{id}} U^* \gH^{\mathrm{id}},
\end{align*}
where $\gH^{\mathrm{id}} = [Ug_Y^{\mathrm{id}}U^*]^{-1}$.
This yields, using the expression (\ref{eq:A_full}), for the matrix of absolute sensitivities
\begin{align*}
    A &= g_Y U^* \gH U = \left[g_Y^{\mathrm{id}} - g_Y^{\mathrm{id}} C g_Y^{\mathrm{id}}\right] U^* \left[ \gH^{\mathrm{id}}+ \gH^{\mathrm{id}} U g_Y^{\mathrm{id}} C g_Y^{\mathrm{id}} U^* \gH^{\mathrm{id}} \right] U \\
    &= A^{\mathrm{id}} + A^{\mathrm{id}} g_Y^{\mathrm{id}} C A^{\mathrm{id}} - g_Y^{\mathrm{id}} C A^{\mathrm{id}} \\
    &= \left[ I - [I - A^{\mathrm{id}}]  g_Y^{\mathrm{id}} C \right] A^{\mathrm{id}} \\
    &= \left[ I - (A^{\perp})^{\mathrm{id}} g_Y^{\mathrm{id}} C \right] A^{\mathrm{id}} ,
\end{align*}
where $A^{\mathrm{id}} = g_Y^{\mathrm{id}} U^* \gH^{\mathrm{id}} U$ is the absolute sensitivity for quasi-thermostatic CRN and $(A^{\perp})^{\mathrm{id}}$ represents the $g_X^{\mathrm{id}}$-orthogonal projection to $(T_x \mathcal{V}^{\mathrm{id}})^{\perp}$, cf. Lemma \ref{lemma:A_perp}.
The first order correction to $A$ is thus given by $(A^{\perp})^{\mathrm{id}} g_Y^{\mathrm{id}} C A^{\mathrm{id}}$.
This has the geometrical interpretation as a concatenation of bundle morphisms over $\V$ as
\begin{equation*}
    \T X \xrightarrow{\quad A^{\mathrm{id}}  \quad}  TX \xrightarrow{\quad C \quad} TY \xrightarrow{\quad g_Y^{\mathrm{id}} \quad} TX \xrightarrow{\quad (A^{\perp})^{\mathrm{id}}  \quad} TX,
\end{equation*}
where the tensor $C = g_X - g_X^{\mathrm{id}}$ is interpreted as a bundle morphism between $TX$ and $TY$ according to (\ref{eq:diag_compatibility}).
The projection $(A^{\perp})^{\mathrm{id}}$ kills the space $T \V^{\mathrm{id}}$, which has the preimage $\Img[U^*] \subset TY$.
Thus, geometrically, the impact of the correction from $A^{\mathrm{id}}$ to $A$ is determined by how far $C$ takes $\Img[A^{\mathrm{id}} ] = T \V^{\mathrm{id}}$ away from $\Img[U^*] \subset TY$.
Using this geometrical interpretation, and taking into account the requirement for $C$ to be symmetric, one can derive criteria for the vanishing of the first order corrections:
\begin{lemma} \label{lemma:A_vanishing_corr}
    If, for each $x \in \V$, there exists a $n \times m$ matrix $M(x)$ with columns in $\Img[U^*]$ such that $C(x) = M(x)M^T(x)$, then the matrix of absolute sensitivities $A$ coincides with the matrix of absolute sensitivities for the corresponding quasi-thermostatic CRN, i.e.,
    \begin{equation*}
        A = A^{\mathrm{id}},
    \end{equation*}
    to first order in the $C(x)_{kl}$.
\end{lemma}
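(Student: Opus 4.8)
The plan is to show that, under the hypothesis, the first-order correction term isolated in the computation just above the lemma vanishes identically, so that the formula $A = \bigl[I - (A^{\perp})^{\mathrm{id}} g_Y^{\mathrm{id}} C\bigr]A^{\mathrm{id}}$ collapses to $A = A^{\mathrm{id}}$.

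First I would recast the hypothesis in a form matching the formula for $A$. Since the columns of $M(x)$ lie in $\Img[\UU^*]$ and $\UU^*$ is injective, one may write $M(x) = \UU^* N(x)$ for a unique matrix $N(x)$, whence, using $(\UU^*)^T = \UU$,
\begin{equation*}
    C(x) = M(x)M(x)^T = \UU^* N(x) N(x)^T \UU = \UU^* K(x)\, \UU, \qquad K(x) := N(x)N(x)^T .
\end{equation*}
Thus the column space of $C(x)$ sits inside $\Img[\UU^*]$; this (together with the automatic symmetry of $C$) is the only consequence of the hypothesis I would actually use.

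The key observation is then that $(A^{\perp})^{\mathrm{id}} g_Y^{\mathrm{id}} \UU^* = 0$. Indeed, viewing $\UU^*$ as a map into $T^*X$ and using the identification $\dd x^i \mapsto \partial/\partial y_i$, the image of $g_Y^{\mathrm{id}} \UU^*$ lies in $g_Y^{\mathrm{id}}\Img[\UU^*]$, which — since $g_Y^{\mathrm{id}}$ is the local representation of $D\LT^{-1}_X$ — equals $T\V^{\mathrm{id}} = D\LT^{-1}_X\Img[\UU^*]$; and $(A^{\perp})^{\mathrm{id}}$, being the $g_X^{\mathrm{id}}$-orthogonal projection onto $(T\V^{\mathrm{id}})^{\perp}$ by Lemma \ref{lemma:A_perp}, annihilates $T\V^{\mathrm{id}}$. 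Feeding the factorization of $C$ into this gives
\begin{equation*}
    (A^{\perp})^{\mathrm{id}} g_Y^{\mathrm{id}} C = (A^{\perp})^{\mathrm{id}} g_Y^{\mathrm{id}} \UU^* K \UU = 0 ,
\end{equation*}
so the bracket in the formula for $A$ is the identity and $A = A^{\mathrm{id}}$ to first order in the $C(x)_{kl}$.

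I do not expect a genuine obstacle here: the argument is precisely the formalization of the heuristic stated before the lemma — the correction is governed by how far $C$ pushes $\Img[A^{\mathrm{id}}] = T\V^{\mathrm{id}}$ away from $\Img[\UU^*]\subset TY$, and the hypothesis forces $\Img[C]\subseteq\Img[\UU^*]$ outright, so the worst case cannot occur. The only place demanding care is bookkeeping of the identifications — keeping $\UU^*$ as a map into $T^*X$ distinct from its image regarded inside $TY$ — and making sure that the first-order expansion $g_Y = g_Y^{\mathrm{id}} - g_Y^{\mathrm{id}} C g_Y^{\mathrm{id}}$, and hence the formula $A = [I - (A^{\perp})^{\mathrm{id}} g_Y^{\mathrm{id}} C]A^{\mathrm{id}}$, is reused verbatim from the preceding paragraph. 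An equivalent route, if preferred, is to use $\UU A^{\mathrm{id}} = \UU$ together with the symmetric factorization $C = W^T \UU$ to obtain $C A^{\mathrm{id}} = C$, reducing the correction term to $(A^{\perp})^{\mathrm{id}} g_Y^{\mathrm{id}} C$ and then finishing with the same identity $(A^{\perp})^{\mathrm{id}} g_Y^{\mathrm{id}} \UU^* = 0$.
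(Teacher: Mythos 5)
Your proposal is correct and follows essentially the same route as the paper: both reduce the lemma to the vanishing of the correction term $(A^{\perp})^{\mathrm{id}} g_Y^{\mathrm{id}} C A^{\mathrm{id}}$ by noting that the hypothesis forces $\Img[C]\subseteq\Img[\UU^*]$, that $g_Y^{\mathrm{id}}\Img[\UU^*]=T\V^{\mathrm{id}}$, and that $(A^{\perp})^{\mathrm{id}}$ annihilates $T\V^{\mathrm{id}}$. Your explicit factorization $C=\UU^*K\UU$ just makes the paper's one-line observation slightly more concrete.
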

\begin{proof}
This follows from the preceding discussion.
The inclusion $\Img[C] \subset \Img[U^*]$ holds by assumption.
From $\Ker[S^*] = \Img[U^*]$, it follows that $(A^{\perp})^{\mathrm{id}}$ kills $g_Y^{\mathrm{id}} C$ because it is contained in $T \V^{\mathrm{id}}$, i.e, $(A^{\perp})^{\mathrm{id}} g_Y^{\mathrm{id}} C = 0$.
\end{proof}
\noindent One assumption one can make in calculations is that all $C_{ij}(x)$ are equal to some function $c(x)$.
In this case, $\Img[C]$ is spanned by the vector $\mathbb{1}_{TY} := \sum_{i=1}^n \frac{\partial }{\partial y_i}$ and the impact of the correction is governed by the position of $\mathbb{1}_{TY}$ relative to $\Img[U^*]$.
Lemma \ref{lemma:A_vanishing_corr} implies the following Corollary:
\begin{corollary} \label{corr:first_order}
If the total concentration $\sum_{i=1}^n x^i$ is conserved, i.e., if $$\sum_{i=1}^n e^i \in \Ker[S^*],$$ and if $C_{ij}(x) = c(x)$ for some function $c(x)$, then the matrix of absolute sensitivities $A$ coincides with the matrix of absolute sensitivities for the corresponding quasi-thermostatic CRN, i.e.,
    \begin{equation*}
        A = A^{\mathrm{id}},
    \end{equation*}
    to first order in $c(x)$.
\end{corollary}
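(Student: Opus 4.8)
The plan is to deduce the corollary directly from Lemma \ref{lemma:A_vanishing_corr} by exhibiting the required structure of the correction matrix $C(x)$. The hypothesis $C_{ij}(x) = c(x)$ for all $i,j$ means precisely that $C(x) = c(x)\,\mathbb{1}\mathbb{1}^T$, where $\mathbb{1} = (1,\dotsc,1)^T$ is the column vector all of whose entries equal $1$. Under the identification $\dd x^i \mapsto \frac{\partial}{\partial y_i}$ from Diagram (\ref{eq:diag_compatibility}), the image of $C$, viewed as the bundle morphism $TX \rightarrow TY$, is the line spanned by $\mathbb{1}_{TY} = \sum_{i=1}^n \frac{\partial}{\partial y_i}$. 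Hence the whole argument reduces to locating this line relative to $\Img[U^*] \subset TY$, exactly as in the discussion preceding Lemma \ref{lemma:A_vanishing_corr}.

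First I would translate the conservation hypothesis: the assumption $\sum_{i=1}^n e^i \in \Ker[S^*]$ together with $\Ker[S^*] = \Img[U^*]$ (the orthogonality of $\Img[S]$ and $\Ker[S^*]$, cf. Section \ref{sec:CRN}) gives $\mathbb{1}_{TY} \in \Img[U^*]$, so that $\Img[C] \subseteq \mathrm{span}\!\left(\mathbb{1}_{TY}\right) \subseteq \Img[U^*]$. Second, if $c(x) \geq 0$, I would take $M(x) := \sqrt{c(x)}\,\mathbb{1}$, a single-column $n \times 1$ matrix whose column lies in $\Img[U^*]$ and which satisfies $C(x) = M(x)M(x)^T$; Lemma \ref{lemma:A_vanishing_corr} then applies verbatim and yields $A = A^{\mathrm{id}}$ to first order in $c(x)$. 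For $c(x)$ of arbitrary sign, I would instead re-run the one-line computation in the proof of Lemma \ref{lemma:A_vanishing_corr} using only the inclusion $\Img[C] \subseteq \Img[U^*]$: since $g_Y^{\mathrm{id}}$ corresponds to $D\LT^{-1}_X$ and $T\mathcal{V}^{\mathrm{id}} = D\LT^{-1}_X \Img[U^*]$, one has $g_Y^{\mathrm{id}}\,\Img[C] \subseteq T\mathcal{V}^{\mathrm{id}}$, which is annihilated by the complementary projection $(A^{\perp})^{\mathrm{id}}$. Therefore the first-order correction term $(A^{\perp})^{\mathrm{id}} g_Y^{\mathrm{id}} C A^{\mathrm{id}}$ in the expansion $A = \left[ I - (A^{\perp})^{\mathrm{id}} g_Y^{\mathrm{id}} C \right] A^{\mathrm{id}}$ vanishes, giving $A = A^{\mathrm{id}}$.

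Since the corollary is essentially a specialization of Lemma \ref{lemma:A_vanishing_corr}, there is no serious obstacle; the only points needing care are bookkeeping. One must track the variance of the tensor $C$ so that its image is correctly identified with the span of $\mathbb{1}_{TY}$ inside $TY$ (rather than a subspace of $T^*X$), and one must not conflate $\sum_i e^i \in \RnT$ with the all-ones vector in $\R^n$. The sign of $c(x)$ is the only genuine subtlety: the literal hypothesis of Lemma \ref{lemma:A_vanishing_corr} presupposes a factorization $C = MM^T$ and hence $c(x) \geq 0$, so to cover general symmetric corrections one falls back on the inclusion $\Img[C] \subseteq \Img[U^*]$, which is all that the proof of the lemma actually uses.
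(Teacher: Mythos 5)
Your proposal is correct and follows essentially the same route as the paper, which likewise derives the corollary from Lemma \ref{lemma:A_vanishing_corr} via the observation that $\Img[C]$ is spanned by $\mathbb{1}_{TY}$ and that the conservation hypothesis places this vector inside $\Img[U^*] = \Ker[S^*]$. Your remark about the sign of $c(x)$ is a worthwhile refinement the paper glosses over: the literal hypothesis $C = MM^T$ of the lemma only covers $c(x) \geq 0$, and your fallback to the inclusion $\Img[C] \subseteq \Img[U^*]$ — which is all the lemma's proof actually uses to conclude $(A^{\perp})^{\mathrm{id}} g_Y^{\mathrm{id}} C = 0$ — correctly extends the argument to $c(x)$ of arbitrary sign.
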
 

\subsection{Quadratic corrections and the van der Walls potential} \label{subsec:quadratic_vdW}

The potential with quadratic correction terms (\ref{eq:phi_quad}) from Example \ref{ex:quadratic} gives the Riemannian metric $g_{X_{kl}} =  g^{\mathrm{id}}_{X_{kl}} + c_{kl}$ with $c_{kl} = a_{kl} + a_{lk}$.
For weak interactions (i.e., when the $c_{kl}$ are small compared to the terms $\frac{1}{x_i}$ and to 1), the above results on first order corrections hold.\\

An explicit calculation with the van der Waals potential from formula (\ref{eq:phi_vdW}) yields the correction terms $c_{kl}$ as
\begin{equation*}
    c_{kl} = - \sum_{i=1}^n \frac{x^i b^i_k b^i_l}{\left(1 - \sum_{j=1}^n b^i_j x^j \right)^2} + \frac{b^l_k}{1 - \sum_{j=1}^n b^l_j x^j } + \frac{b^k_l}{1 - \sum_{j=1}^n b^k_j x^j }  - [a_{lk} + a_{kl}].
\end{equation*}
Indeed, for small constants $a_{ij}$ and $b^i_j$, the $c_{kl}$ can be approximated by $c_{kl} \approx [b^l_k + b^k_l] - [a_{lk} + a_{kl}]$ and the first order calculations are valid.\\

When the correction terms $c_{kl}$ are not small, it is difficult to derive closed formulae for the absolute sensitivity.
However, numerical results are easily accessible and provide interesting insights into the behavior of CRNs with non-ideal thermodynamics.
This is explored in the next section.

\section{Example: The core module of the IDHKP-IDH glyoxylate bypass regulation system} \label{sec:IDH}

To give a concrete example application, the effects of non-ideal behavior are analyzed for core module of the IDHKP-IDH (IDH = isocitrate dehydrogenase, KP = kinase-phosphatase) glyoxylate bypass regulation system.    This CRN is of interest as it is known to obey approximate concentration robustness in the IDH enzyme \cite{laporte1985compensatory}, and has been mathematically proven to posses absolute concentration robustness under mass action kinetics under certain irreversibility conditions \cite{shinar2009robustness,shinar2010structural}.
Detailed calculations and the SageMath code \cite{sagemath} is available in the Supplementary Material.

The corresponding CRN is shown in Scheme (\ref{eq:ACR}), whereby I is the IDH enzyme, $\textrm{I}_{\textrm{p}}$ is its phosphorylated form, and E is the bifunctional enzyme IDH kinase-phosphatase.
\begin{equation} \label{eq:ACR}
\begin{tikzcd}
    \textrm{E} + \textrm{I}_{\textrm{p}}  \ar[r, rightharpoonup, shift left=.35ex,"k_1^+"] & \textrm{EI}_{\textrm{p}} \ar[l,shift left=.35ex, rightharpoonup,"k_1^-"] \ar[r, rightharpoonup, shift left=.35ex,"k_2^+"] & \textrm{E} + \textrm{I} \ar[l,shift left=.35ex, rightharpoonup,"k_2^-"] \\
    \textrm{EI}_{\textrm{p}} + \textrm{I} \ar[r, rightharpoonup, shift left=.35ex,"k_3^+"] & \textrm{EI}_{\textrm{p}}\textrm{I} \ar[l,shift left=.35ex, rightharpoonup,"k_3^-"] \ar[r, rightharpoonup, shift left=.35ex,"k_4^+"] & \textrm{EI}_{\textrm{p}} + \textrm{I}_{\textrm{p}}. \ar[l,shift left=.35ex, rightharpoonup,"k_4^-"]
\end{tikzcd}
\end{equation}
In following, the notation $X_1 = \textrm{E}, X_2 = \textrm{I}_{\textrm{p}}, X_3 = \textrm{EI}_{\textrm{p}}, X_4 = \textrm{I}, X_5 = \textrm{EI}_{\textrm{p}}\textrm{I}$ for the chemicals and $x^i, i =1,\dotsc,5$ for the respective concentration values will be used.
In \cite{shinar2009robustness,shinar2010structural} it was shown that if the CRN obeys mass action kinetics and the rate constants $k_2^-$ and $k_4^-$ are zero then the concentration robustness in the IDH enzyme $X_4$ holds exactly.
In \cite{loutchko2024cramer}, Section 5, the absolute sensitivity $\alpha_4^{\mathrm{id}}$ was calculated for the quasi-thermostatic case, i.e., for $\V$ resulting from the convex function $\phi^{\mathrm{id}}$, which corresponds to mass action kinetics with detailed balancing and requires nonvanishing $k_2^-$ and $k_4^-$:
\begin{equation} \label{eq:a4}
    \alpha_4^{\mathrm{id}} = \frac{1}{1+r},
\end{equation}
where $r$ is the ratio
\begin{equation*}
    r = \frac{\left(x_2 + x_5 \right) \left(x_1 + x_3 \right) + x_1 \left( x_3 + 3x_5\right) +x_2x_5}{x_4 \left(x_1 + x_3 + x_5 \right)}.
\end{equation*}
This expression allows to read off regions of high and low sensitivity, cf. \cite{loutchko2024cramer}, Section 5.

\subsection{First order corrections}
Assuming that $C(x)_{ij} = c(x)$ for some function $c(x)$, one obtains
\begin{equation} \label{eq:alpha_4}
    \alpha_4 = [1 - c(x) \beta]\alpha_4^{\mathrm{id}}
\end{equation}
for the first order correction, with
\begin{equation*}
    \beta = \frac{(x_2 + x_4) (x_1x_3 + 4 x_1x_5 + x_3 x_5) }{(x_1+x_3+x_5) (x_1 + x_2 + x_4 + x_5)- (x_5-x_1)^2}.
\end{equation*}
By expanding the denominator, one verifies that $\beta >0$.
This allows to discuss the influence of the interaction terms from Section \ref{subsec:quadratic_vdW} on the absolute sensitivity:

For the case of quadratic interaction terms $a_{ij}x^ix^j$, the condition $C(x)_{ij} = c(x)$ requires all $a_{ij}$ to be equal, i.e., $a_{ij} = a$ for some $a$.
Then $c(x) = 2a$ and repulsive interactions, i.e., $a >0$, decrease the sensitivity $\alpha_4$, whereas attractive interactions, i.e., $a <0$, increase the sensitivity $\alpha_4$.

For the van der Waals potential, the first order correction terms are given by $c_{kl} \approx [b^l_k + b^k_l] - [a_{lk} + a_{kl}]$, and the condition $C(x)_{ij} = c(x)$ requires all $a_{ij}$ and $b^i_j$ to be equal.
The interpretation of the attractive interaction terms is as in the previous paragraph.
The volume exclusion terms $b^i_j$ lead to a decrease in $\alpha_4$, just as repulsive interactions (this is to be expected as volume exclusion is a kind of repulsive interaction).

Notice that from the formula (\ref{eq:alpha_4}), one might conjecture that the sensitivity $\alpha_4$ vanishes for $c(x) = \beta^{-1}$ and that it becomes negative for $c(x) >  \beta^{-1}$, irrespective of the value of $\alpha_4^{\mathrm{id}}$.
However, in such cases the hypothesis of $c(x)$ being small breaks down, and the analysis is no longer valid.
Going beyond the first order approximation is required.
However, even for the given example CRN with only 5 chemicals and under the hypothesis that $C(x)_{ij} = c(x)$, the symbolic computation of $A(x)$ based on the formula $A(x) = g_Y(y)U^* [U g_Y(y) U^*]^{-1} U$ or based on the expression $A(x)^i_j = \left\langle \dd x^i , \pi\left( \frac{\partial}{\partial x^j} \right) \right\rangle$ becomes rather intractable.
To explore the general situation, we therefore resort to numerical experiments.

\subsection{Numerical results}

The effects of a deformation of the metric $g^{\mathrm{id}}_{X}$ by $C$ as $g_X = g^{\mathrm{id}}_{X} + C$ beyond the first order in $C$ can be analyzed numerically based on the expression (\ref{eq:A_full}), $A(x) = g_Y(y)U^* [U g_Y(y) U^*]^{-1} U$, for the matrix of absolute sensitivities.
In order to get an intuition for the effects of the deformation, the state space $X$ is sampled randomly\footnote{
Each of the sampled points is assumed to be an equilibrium point.
This is possible due to the foliation of $X$ by equilibrium manifolds $\V$.

} in the range $\log x^i \in [-6,2]$, drawn from a uniform distribution.
The tensor components $C_{ij}$ with $i \geq j$ are drawn from a uniform distribution $C_{ij} \in [-c_{\textrm{max}},c_{\textrm{max}}]$ with variable $c_{\textrm{max}}$, and $C_{ij} = C_{ji}$ for $i <j$.
After numerically confirming that $g_X(x)$ is positive definite, the resulting $\alpha_4$ is recorded, otherwise it is discarded.
The distributions for $\alpha_4$ obtained from 100,000 data points each are shown in Figure \ref{fig:numerics}, with more Figures to be found in the Supplementary Material, Section \ref{SMsec:Figs}.
The respective mean values, standard deviations, fractions of negative values, and fractions of values $>1$ are recorded in Table \ref{table:numerics}.
\begin{figure}[ht]
    \centering
    \includegraphics[scale=0.4]{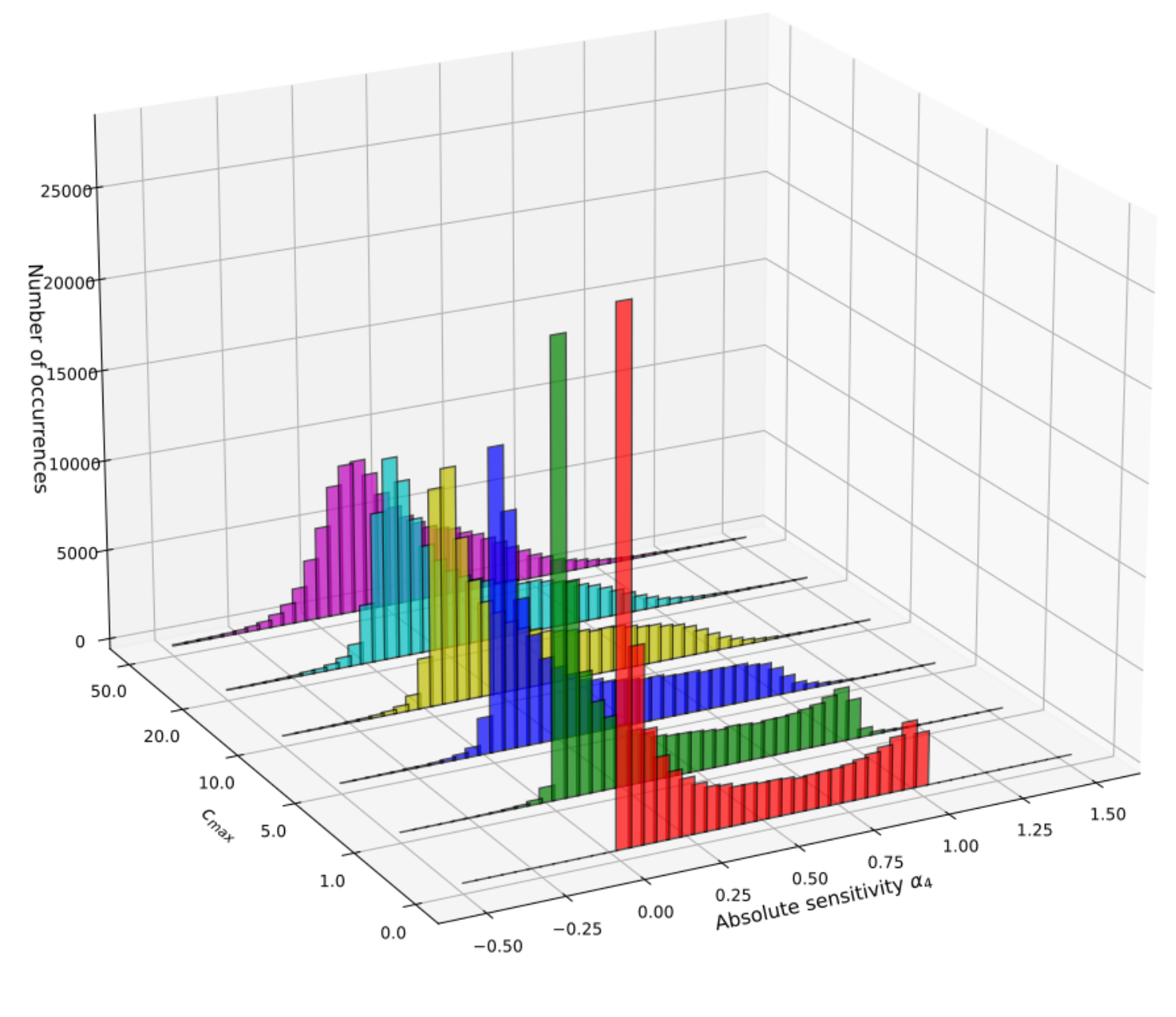}
    \caption{Distributions of the absolute sensitivity $\alpha_4$ for randomly sampled concentration vectors in $\log x^i \in [-6,2]$, and random deformation tensor components $C_{ij} \in [-c_{\textrm{max}},c_{\textrm{max}}]$ for $i \geq j$ and $C_{ij} = C_{ji}$ for $i <j$.
    Each histogram consists of 100,000 data points.
    }
    \label{fig:numerics}
\end{figure}
\begin{table}[htbp]
\footnotesize
  \caption{Mean values $\mu (\alpha_4)$, standard deviations $\sigma (\alpha_4)$, fraction of negative values, and fraction of values $>1$ for the distributions shown in Figure \ref{fig:numerics}.
    }\label{table:numerics}
    \begin{center}
    \begin{tabular}{|c||c|c|c|c|}
        \hline
        $c_{\textrm{max}}$ & $\mu (\alpha_4)$ & $\sigma (\alpha_4)$ & Fraction of $\alpha_4 < 0$ & Fraction of $\alpha_4 > 1$\\
        \hline
        \hline
        0.0 & 0.314 & 0.334 & 0.000 & 0.000 \\
        \hline
        1.0 & 0.309 & 0.331 & 0.034 & 0.012 \\
        \hline
        5.0 & 0.301 & 0.313 & 0.056 & 0.015 \\
        \hline
        10.0 & 0.295 & 0.301 & 0.066 & 0.015 \\
        \hline
        20.0 & 0.288 & 0.289 & 0.079 & 0.014 \\
        \hline
        50.0 & 0.276 & 0.274 & 0.099 & 0.012 \\
        \hline
    \end{tabular}
    \end{center}
\end{table}

It is interesting to note that the distribution for $c_{\textrm{max}} = 0$, i.e., for the ideal geometry resulting from $\phi^{\mathrm{id}}(x)$, is skewed towards $\alpha_4 = 0$, with a mean value of $\mu(\alpha_4)=0.314$ and a standard deviation of $\sigma(\alpha_4) = 0.334$.
With increasing interaction strength, both the mean values and the standard deviations are slightly decreasing.
\paragraph{Negative sensitivity}
The fraction of negative values for $\alpha_4$ is zero for $c_{\textrm{max}} = 0$, which follows from the general theory.
With increasing interaction strength this fraction is increasing, with a value of $0.099$ for $c_{\textrm{max}} = 50.0$.
This is highly unusual behavior as it corresponds to negative self-regulation, and yet it seems to occur with a rather high chance under non-ideal thermodynamical conditions.
\paragraph{Hypersensitivity} 
The other scenario which is not found under ideal conditions is $\alpha_4 > 1$. 
It corresponds to hypersensitivity.
The fraction of such behavior is considerably less than for negative sensitivity for the IDH enzyme for all examined values of $c_{\textrm{max}}$.
This underlines the tendency of the CRN for the IDH enzyme to be rather robust in its concentration than sensitive.
Moreover, the fraction is the largest for $c_{\textrm{max}} = 5.0$ and $c_{\textrm{max}} = 10.0$ with a value of $0.015$ and decreases for higher values of $c_{\textrm{max}}$.
\paragraph{Extreme behavior}
Figure \ref{fig:numerics} does not show the tails of the distributions but with increasing interaction strength, extremely large magnitudes of the sensitivity values $\alpha_4$ become increasingly likely.
Without analyzing these effects in this theoretical article, in the Supplementary Material, Section \ref{SMsec:extreme}, we give examples of concentration vectors $x$ and the interaction tensors $C$ which produce values $\alpha_4 <-4$ and $\alpha_4 > 4$
Such values correspond to very strong negative self-regulation and clearly pronounced hypersensitivity, respectively.

\subsection{Upshot}
The important result of this section is that the ideal behavior $\ai \in [0,1]$ can be violated as a result of thermodynamical interactions.
The numerical example provides strong evidence that this is a common occurrence rather than exotic behavior.
This suggests the possibility that biologically relevant effects such as negative feedback ($\ai < 0$) and hypersensitivity ($\ai >1$) might have a thermodynamical origin.
We are looking forward to conduct more thorough numerical studies in the future.

\section{Discussion} \label{sec:discussion}

In this article, the Riemannian geometric aspects of the information geometry of CRN have been developed.
The central result is that the commonly used parametrizations of the respective equilibrium spaces and the Legendre transformation between spaces on the concentration side and the chemical potential side are isometries with respect to the Riemannian metrics given by the Hessians of convex potential functions.
Moreover, it was made precise that the metrics encode the infinitesimal information for the Legendre transformations and are therefore equivalent to the respective derivatives.
Using this setup allowed to prove an informational geometrical Cramer-Rao bound for CRNs by comparing two Riemannian metric tensors.

In the second part of the paper, the geometrical setup was applied to get a better understanding of absolute sensitivity, generalizing the theory from \cite{loutchko2024cramer}.
There are two central results:
First, a linear algebraic formula for the matrix of absolute sensitivities was proven.
This is useful for applications such as numerical experiments and data analysis.
Second, absolute sensitivity was defined in a global and coordinate-free manner as a bundle morphism.
By using the Riemannian geometric tools, it was proven that absolute sensitivity is the $g_X$-orthogonal projection operator from the tangent bundle $TX$ of the concentration space to the tangent bundle $T\V$ of the equilibrium manifold.
From this geometric characterization, a closed formula for the first order corrections to the ideal, i.e., quasi-thermostatic case, was derived, and a vanishing criterion for the correction terms was formulated.

The results were illustrated by using the reaction network for the core module of the IDHKP-IDH glyoxylate bypass regulation system as an example.
This system is known to exhibit concentration robustness for the IDH enzyme, which becomes absolute under certain irreversibility conditions.
Thus the focus was put on the absolute sensitivity of the IDH enzyme.
For the quasi-thermostatic case it has been computed and analyzed in \cite{loutchko2024cramer}, and here, the first order correction was derived explicitly so that the effects of attractive and repulsive interactions could be discussed.
Beyond the first order approximation, numerical experiments were performed, for random values of the deformation tensor and for random concentration values in a physiological range.
It was found that interactions decrease the mean values and the standard deviations of the absolute sensitivity for the IDH enzyme on average.
Moreover, with increasing interaction strength, the occurrence of negative self-feedback of the IDH enzyme was found to be increasing.
Cases of hypersensitivity were detected but they were less prevalent than negative self-feedback by a factor of about 10.

These results show how non-ideal thermodynamical behavior can lead to effects which usually require highly nonlinear kinetics in the ideal case to occur.
As cellular environments are highly crowded and interactive, the understanding of non-ideal behavior is very important in the modeling of biochemical CRNs.
In this regard, this article demonstrates how such effects can be analyzed by using the information geometrical framework, and how geometrical insight can lead to a clear understanding of the effects captured by deformations of the metric.
This in turn can lead to efficient numerical algorithms, or even vanishing criteria without the need for computation.
It is worth stressing that keeping the global geometrical viewpoint as long as possible is crucially important to this approach.
This is because brute-force symbolic computations in CRN theory can get out of hand very quickly, whereas numerical studies do not provide enough rigorous insights.

Studying the deformations beyond the first order is rather difficult and will require more technical tools from geometry.
Moreover, the compatibility of non-ideal equilibrium states with kinetics is an important open question for future research.
In this regard, it is worth noting that the large deviation rate functions which control the geometry of the steady states in nonequilibirum CRNs have been found \cite{anderson2015lyapunov} to have the same local structure as the deformed geometry studied in the present article.
This opens up the possibility to extend the theory developed here to the nonequilibrium situation which will be explored in future work.

\section*{Acknowledgments}

This research is supported by JST (JPMJCR2011, JPMJCR1927) and JSPS (19H05799).
Y. S. receives financial support from the Pub-lic\verb|\|Private R\&D Investment Strategic Expansion PrograM (PRISM) and programs for Bridging the gap between R\&D and the IDeal society (society 5.0) and Generating Economic and social value (BRIDGE) from Cabinet Office.
We thank Praful Gagrani and Atsushi Kamimura for fruitful discussions.

\bibliographystyle{siamplain}
\bibliography{references}

\unappendix
\pagebreak
\begin{center}
\textbf{\large Supplementary Materials: Information geometry of chemical reaction networks: Cramer-Rao bound and absolute sensitivity revisited}
\end{center}
\setcounter{equation}{0}
\setcounter{figure}{0}
\setcounter{table}{0}
\setcounter{section}{0}
\setcounter{page}{1}
\makeatletter
\renewcommand{\theequation}{S\arabic{equation}}
\renewcommand{\thefigure}{S\arabic{figure}}
\renewcommand{\thesection}{S\arabic{section}}

\section*{Contents}

Section \ref{SMsec:X} elaborates on the Example \ref{ex:quasiTS} by proving that the van der Waals potential $\phi$ is strictly convex on the given domain $X$.

Section \ref{SMsec:example} is an addendum to the Section \ref{sec:IDH} in the main text, which deals with the core module of the IDHKP-IDH glyoxylate bypass regulation system.

\section{The convex manifold $X$ for the van der Waals gas} \label{SMsec:X}

In Example \ref{ex:quasiTS} in the main text, it is claimed that the domain 
\begin{equation*}
        X = \left\{ x \in \R^n_{>0}: x^k < \frac{1}{nC_k} \right\},
\end{equation*}  
with $B := \max_{k,l} \left\{ b^k_l \right\}$ and $C_k:= \max_{l} \left\{ a_{kl} + a_{lk} + 2B, 4B \right\}$, equipped with the function
\begin{equation*}
        \phi(x) = \phi^{\mathrm{id}}(x) - \sum_{i=1}^n x^i \log \left[ 1 - \sum_{j=1}^n b^i_{j} x^j \right] - \sum_{i,j =1}^n a_{ij}x^i x^j,
\end{equation*}
satisfies the conditions of a concentration space from Section \ref{sec:Hessian_setup}.
This requires to show that
\begin{enumerate}
\label{cond:X}
    \item $X$ is a convex open subset of $\R^n_{>0}$.
\label{cond:def}
    \item The function $\phi(x)$ is defined on $X$.
\label{cond:phi}
    \item The function $\phi(x)$ is strictly convex on $X$.
\end{enumerate}
The first condition is obvious.
The second condition requires that $1 - t^i > 0$ for all $i$, where
\begin{equation*}
    t^i:= \sum_{j=1}^n b^i_{j} x^j.
\end{equation*}
Let $t : = \sum_{j=1}^n x^j $.
Then 
\begin{equation*}
    t < \frac{1}{n} \sum_{k=1}^n \frac{1}{C_k} < \frac{1}{n} \sum_{k=1}^n \frac{1}{2B} = \frac{1}{2B}
\end{equation*}
and thus $Bt < \frac{1}{2}$.
But
\begin{equation*}
    t^i < B \sum_{j=1}^n x^j = Bt
\end{equation*}
and therefore $\frac{1}{2} < 1 - t^i < 1$.

The third condition is verified by showing that the Hessian matrix of $\phi$ is positive semidefinite.
As it is symmetric by definition, it remains to show that all its eigenvalues are positive, which is achieved by showing that it is strictly diagonally dominant with positive diagonal entries.
The Hessian of $\phi$ is given by
\begin{equation*}
    {g_X}_{kl} = \frac{\partial^2 \phi(x)}{\partial x_l \partial x_k} = {g_X^{\mathrm{id}}}_{kl} + c_{kl}
\end{equation*}
with $g^{\mathrm{id}}_X = \XX$ and 
\begin{equation*}
    c_{kl} = - \sum_{i=1}^n \frac{x^i b^i_k b^i_l}{\left(1 - t^i \right)^2} + \frac{b^l_k}{1 - t^l} + \frac{b^k_l}{1 - t^k }  - [a_{lk} + a_{kl}].
\end{equation*}
The following claim yield the desired diagonal dominance
\begin{claim*}
For all $k,l$, the inequality
\begin{equation*}
    \frac{1}{n} \frac{1}{x^k} > |c_{kl}|
\end{equation*}
holds.
\end{claim*}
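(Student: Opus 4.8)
The plan is to prove the slightly stronger bound $|c_{kl}| \le C_k$; since the constraint defining $X$ reads $x^k < \tfrac1{nC_k}$, i.e. $\tfrac1n\tfrac1{x^k} > C_k$ with a strict inequality, this immediately yields $\tfrac1n\tfrac1{x^k} > |c_{kl}|$. First I would collect the elementary estimates the domain supplies. From $x^j < \tfrac1{nC_j}$ and the fact that $C_j \ge 4B$ by definition, one gets $x^j < \tfrac1{4nB}$, hence $t := \sum_j x^j < \tfrac1{4B}$, and therefore $t^i = \sum_j b^i_j x^j \le Bt < \tfrac14$; this refines the bound $1 - t^i > \tfrac12$ derived above to $1 - t^i > \tfrac34$, giving $\tfrac1{1-t^i} < \tfrac43$ and $\tfrac1{(1-t^i)^2} < \tfrac{16}{9}$. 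The other inputs are $b^i_j \le B$, the symmetry $b^i_j = b^j_i$ (so that $\sum_i x^i b^i_l = t^l$), $a_{ij} \ge 0$, and the defining inequality $C_k \ge a_{kl} + a_{lk} + 2B$.

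Next I would split $c_{kl} = d_{kl} - (a_{lk}+a_{kl})$, where $d_{kl}$ gathers the three excluded-volume terms. A triangle-inequality estimate gives $|d_{kl}| \le \tfrac{b^l_k}{1-t^l} + \tfrac{b^k_l}{1-t^k} + \sum_i \tfrac{x^i b^i_k b^i_l}{(1-t^i)^2} < \tfrac43 B + \tfrac43 B + \tfrac{16}{9} B\, t^l < \tfrac43 B + \tfrac43 B + \tfrac49 B = \tfrac{28}{9} B < 4B$, using the domain estimates, $b^i_k \le B$, $\sum_i x^i b^i_l = t^l$, and $t^l < \tfrac14$. Moreover $d_{kl} \ge -\tfrac49 B$, since the first two terms of $d_{kl}$ are nonnegative and the last sum is at most $\tfrac{16}{9}B\,t^l < \tfrac49 B$ in absolute value. (This step is insensitive to whether the $\sum_i$-term enters $c_{kl}$ with a plus or a minus sign.)

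Finally I would combine these. If $d_{kl} \ge 0$, then $c_{kl} = d_{kl} - (a_{lk}+a_{kl})$ lies in $[-(a_{lk}+a_{kl}),\, d_{kl}]$, so $|c_{kl}| \le \max\{d_{kl},\, a_{lk}+a_{kl}\} \le C_k$, since $d_{kl} < 4B \le C_k$ and $a_{lk}+a_{kl} \le C_k$. If $d_{kl} < 0$, then $|c_{kl}| = |d_{kl}| + (a_{lk}+a_{kl}) < \tfrac49 B + (a_{lk}+a_{kl})$; in the subcase $a_{lk}+a_{kl} \ge 2B$ one has $C_k = a_{lk}+a_{kl}+2B$, so this is $< \tfrac49 B + C_k - 2B < C_k$, and in the subcase $a_{lk}+a_{kl} < 2B$ it is $< \tfrac49 B + 2B < 4B \le C_k$. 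In every case $|c_{kl}| \le C_k < \tfrac1n\tfrac1{x^k}$, which is the claim; plugging it into $\sum_l |c_{kl}| < n\cdot\tfrac1{nx^k} = \tfrac1{x^k}$ then gives the strict diagonal dominance of the Hessian, as required.

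The part I expect to require the most care is the constant bookkeeping: one must keep the excluded-volume contribution $d_{kl}$ strictly below the universal lower bound $4B$ for $C_k$, which is exactly why the crude bound $1 - t^i > \tfrac12$ has to be sharpened to $1 - t^i > \tfrac34$ (i.e. one must exploit $C_j \ge 4B$, not merely $C_j \ge 2B$); and one must estimate $|c_{kl}| = |d_{kl} - (a_{lk}+a_{kl})|$ through $|u - v| \le \max\{u,v\}$ for $u,v \ge 0$ wherever possible rather than by a blanket triangle inequality, which would be too lossy once $a_{lk}+a_{kl}$ is large.
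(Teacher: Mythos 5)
Your proof is correct, and it reaches the paper's target inequality $|c_{kl}|\le C_k < \tfrac{1}{n}\tfrac{1}{x^k}$ by a genuinely different organization of the estimates. The paper splits on the sign of $c_{kl}$ itself: when $c_{kl}>0$ it discards the two nonpositive contributions (the $\sum_i$-term and $-[a_{lk}+a_{kl}]$) wholesale and bounds only $\tfrac{b^l_k}{1-t^l}+\tfrac{b^k_l}{1-t^k}\le \tfrac{2B}{1-Bt}<4B$, while when $c_{kl}<0$ it discards the two positive fractions and bounds $[a_{lk}+a_{kl}]+\tfrac{B\cdot Bt}{(1-Bt)^2}<[a_{lk}+a_{kl}]+2B$. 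Because each branch only ever keeps terms of one sign, the crude estimate $Bt<\tfrac12$, i.e.\ $1-t^i>\tfrac12$, suffices throughout. You instead isolate the excluded-volume block $d_{kl}$, bound $|d_{kl}|$ by a full triangle inequality, and then split on the sign of $d_{kl}$; this forces you to sharpen the domain estimate to $1-t^i>\tfrac34$ (via $C_j\ge 4B$) to keep $|d_{kl}|<4B$, and to run a further subcase analysis on the size of $a_{lk}+a_{kl}$. Both arguments are sound; the paper's sign-split is shorter and needs weaker constants, whereas yours gives the extra uniform bounds $|d_{kl}|<\tfrac{28}{9}B$ and $d_{kl}>-\tfrac49 B$, which quantify the excluded-volume contribution on its own. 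One cosmetic point: in your second subcase you write $C_k=a_{lk}+a_{kl}+2B$, but $C_k$ is a maximum over $j$, so only $C_k\ge a_{kl}+a_{lk}+2B$ is available; that is all your inequality actually uses, so the argument stands.
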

\begin{proof}
    If $c_{kl} > 0$, then
\begin{equation*}
|c_{kl}| < \frac{b^l_k}{1 - t^l} + \frac{b^k_l}{1 - t^k } \leq \frac{2B}{1 - Bt} < 4B \leq C_k < \frac{1}{n} \frac{1}{x^k},
\end{equation*}
where the last inequality follows from the condition on the domain $X$.

If $c_{kl} < 0$, then
\begin{align*}
|c_{kl}| &< [a_{lk} + a_{kl}] + \sum_{i=1}^n \frac{x^i b^i_k b^i_l}{\left(1 - t^i \right)^2} \leq [a_{lk} + a_{kl}] + B \frac{Bt}{(1-Bt)^2}  \\
&< [a_{lk} + a_{kl}] + 2B \leq  C_k < \frac{1}{n} \frac{1}{x^k}.
\end{align*}
This proves the claim.
\end{proof}
Now, if $c_{kk} < 0$, then, by summing over $l=1,\dotsc,n$, the claim gives
\begin{equation*}
    \frac{1}{x^k} > \sum_{l} |c_{kl}| =  \sum_{l \neq k} |c_{kl}| - c_{kk},
\end{equation*}
which yields
\begin{equation*}
    \frac{1}{x^k} + c_{kk} > \sum_{l \neq k} |c_{kl}|.
\end{equation*}
If $c_{kk} \geq 0$, then the claim leads to
\begin{equation*}
    \frac{1}{x^k} + c_{kk} \geq  \frac{1}{x^k} > \sum_{l \neq k} |c_{kl}|.
\end{equation*}
This shows the validity of condition (3).

\section{The core module of the IDHKP-IDH glyoxylate bypass regulation system} \label{SMsec:example}

The CRN of the core module of the IDHKP-IDH (IDH = isocitrate dehydrogenase, KP = kinase-phosphatase) glyoxylate bypass regulation system is shown in Scheme (\ref{SMeq:ACR}).
Hereby, I is the IDH enzyme, $\textrm{I}_{\textrm{p}}$ is its phosphorylated form, and E is the bifunctional enzyme IDH kinase-phosphatase.
It is known to obey approximate concentration robustness in the IDH enzyme \cite{laporte1985compensatory}.
\begin{equation} \label{SMeq:ACR}
\begin{tikzcd}
    \textrm{E} + \textrm{I}_{\textrm{p}}  \ar[r, rightharpoonup, shift left=.35ex,"k_1^+"] & \textrm{EI}_{\textrm{p}} \ar[l,shift left=.35ex, rightharpoonup,"k_1^-"] \ar[r, rightharpoonup, shift left=.35ex,"k_2^+"] & \textrm{E} + \textrm{I} \ar[l,shift left=.35ex, rightharpoonup,"k_2^-"] \\
    \textrm{EI}_{\textrm{p}} + \textrm{I} \ar[r, rightharpoonup, shift left=.35ex,"k_3^+"] & \textrm{EI}_{\textrm{p}}\textrm{I} \ar[l,shift left=.35ex, rightharpoonup,"k_3^-"] \ar[r, rightharpoonup, shift left=.35ex,"k_4^+"] & \textrm{EI}_{\textrm{p}} + \textrm{I}_{\textrm{p}} \ar[l,shift left=.35ex, rightharpoonup,"k_4^-"]
\end{tikzcd}
\end{equation}
In \cite{shinar2009robustness,shinar2010structural} it was shown that if the CRN obeys mass action kinetics and the rate constants $k_2^-$ and $k_4^-$ are zero then the concentration robustness in the IDH enzyme I holds exactly.

For notational convenience, the abbreviations $X_1 = \textrm{E}, X_2 = \textrm{I}_{\textrm{p}}, X_3 = \textrm{EI}_{\textrm{p}}, X_4 = \textrm{I}, X_5 = \textrm{EI}_{\textrm{p}}\textrm{I}$ are introduced which gives the 
\begin{equation} \label{SM:CRN2}
\begin{tikzcd}
    X_1 + X_2  \ar[r, rightharpoonup, shift left=.35ex,"k_1^+"] & X_3 \ar[l,shift left=.35ex, rightharpoonup,"k_1^-"] \ar[r, rightharpoonup, shift left=.35ex,"k_2^+"] & X_1 + X_4 \ar[l,shift left=.35ex, rightharpoonup,"k_2^-"] \\
    X_3 + X_4 \ar[r, rightharpoonup, shift left=.35ex,"k_3^+"] & X_5 \ar[l,shift left=.35ex, rightharpoonup,"k_3^-"] \ar[r, rightharpoonup, shift left=.35ex,"k_4^+"] & X_3 + X_2. \ar[l,shift left=.35ex, rightharpoonup,"k_4^-"]
\end{tikzcd}
\end{equation}
The variables $x_i, i =1,\dotsc,5$ are used for the respective concentrations of the chemicals $X_i$.

\subsection{Computation of $\alpha_4$} \label{app:details}

The focus lies on the computation of the absolute sensitivity $\alpha_4$ as it is the sensitivity of the chemical exhibiting (approximate) concentration robustness.
In \cite{loutchko2024cramer}, Section 5, the absolute sensitivity of I was calculated for the quasi-thermostatic case, i.e., for $\V^{\mathrm{id}}$ resulting from the convex function $\phi^{\mathrm{id}}$, with nonvanishing $k_2^-$ and $k_4^-$.

It is assumed that $c_{ij} = c$ for all $i,j$ and thus the Hessian metric $g_X$ becomes
\begin{equation*}
    g_X =  {g_X^{\mathrm{id}}} + C = {g_X^{\mathrm{id}}} + c \begin{bmatrix}
        1 & 1 & 1 & 1 & 1 \\
        1 & 1 & 1 & 1 & 1\\
        1 & 1 & 1 & 1 & 1\\
        1 & 1 & 1 & 1 & 1\\
        1 & 1 & 1 & 1 & 1 
    \end{bmatrix}.
\end{equation*}
The first order correction in $c$ to $\alpha_4^{\mathrm{id}}$ is now calculated.

For the CRN in Scheme \ref{SM:CRN2}, the absolute sensitivity $\alpha_4$ is computed by using Theorem \ref{thm:abs_sens_final}.
As the first step, the tangent space $T_x \V$ is determined.
The stoichiometric matrix of the CRN is given by
\begin{equation*}
S=  \begin{bmatrix}
        -1 & 1 & 0 & 0 \\
        -1 & 0 & 0 & 1 \\
        1 & -1 & -1 & 1 \\
        0 & 1 & -1 & 0 \\
        0 & 0 & 1 & -1 
    \end{bmatrix}
\end{equation*}
and the kernel of $S^*$ is spanned by $v_1 := (-1,1,0,1,1)$ and $v_2 := (1,0,1,0,1)$.
According to Lemma \ref{lem:comp}, the tangent space at $x$ is given by $T_x \V \simeq g_Y \Ker[S^*]$.
To first order in $c$, the metric $g_Y$ is
\begin{equation*}
    g_Y = \left[{g_X^{\mathrm{id}}} + C \right]^{-1} =  g_Y^{\mathrm{id}} - g_Y^{\mathrm{id}} C g_Y^{\mathrm{id}},
\end{equation*}
where $g_Y^{\mathrm{id}}$ is represented by the diagonal matrix $\X$.
Then $T_x \V$ is spanned by $w_1 := g_Y v_1$ and $w_2 := g_Y v_2$.
The linear combination
\begin{equation*}
    w_2' := w_2 - \frac{\langle w_2, w_1 \rangle_{g_X}}{\langle w_1, w_1 \rangle_{g_X}} w_1
\end{equation*}
is by construction orthogonal to $w_1$ and the projection of any vector $v$ to $T_x \V$ is given by
\begin{equation*}
    \pi(v) = \frac{\langle v, w_1 \rangle_{g_X}}{\langle w_1, w_1 \rangle_{g_X}} w_1 +  \frac{\langle v, w_2 \rangle_{g_X}}{\langle w_2, w_2 \rangle_{g_X}} w_2
\end{equation*}
For $\alpha_4 = \left[\pi\left(\frac{\partial }{\partial x^4}\right)\right]_4$, a calculation in SageMath \cite{sagemath}, cf. Section \ref{sec:SMcalc} over the ring
\begin{equation*}
    \faktor{\mathbb{R}(x^1,\dotsc,x^5)[c]}{\left(c^2\right)}
\end{equation*}
yields
\begin{equation*}
    \alpha_4 = \alpha_4^{\mathrm{id}} (1 - c\beta),
\end{equation*}
where 
\begin{equation*}
    \alpha_4^{\mathrm{id}} = \frac{1}{1+r},
\end{equation*}
is the absolute sensitivity for the quasi-thermostatic case, i.e., for $\V^{\mathrm{id}}$ resulting from the convex function $\phi^{\mathrm{id}}$ and $r$ is the ratio
\begin{equation*}
    r = \frac{\left(x_2 + x_5 \right) \left(x_1 + x_3 \right) + x_1 \left( x_3 + 3x_5\right) +x_2x_5}{x_4 \left(x_1 + x_3 + x_5 \right)}.
\end{equation*}
The the correction term $\beta$ is given by
\begin{equation*}
    \beta = \frac{(x_2 + x_4) (x_1x_3 + 4 x_1x_5 + x_3 x_5) }{(x_1+x_3+x_5) (x_1 + x_2 + x_4 + x_5)- (x_5-x_1)^2}.
\end{equation*}
Notice that the denominator $(x_1+x_3+x_5) (x_1 + x_2 + x_4 + x_5)- (x_5-x_1)^2 = x_1x_2 + x_1x_3 + x_2x_3 + x_1x_4 + x_3x_4 + 4x_1x_5 + x_2x_5 + x_3x_5 + x_4x_5$ is always positive and hence
\begin{equation*}
    \beta > 0.
\end{equation*}

\newpage

\subsection{Additional figures} \label{SMsec:Figs}

The histograms from Figure \ref{fig:numerics} in the main text are shown here from different perspectives.
\begin{figure}[ht]
    \centering
    \includegraphics[scale=0.31]{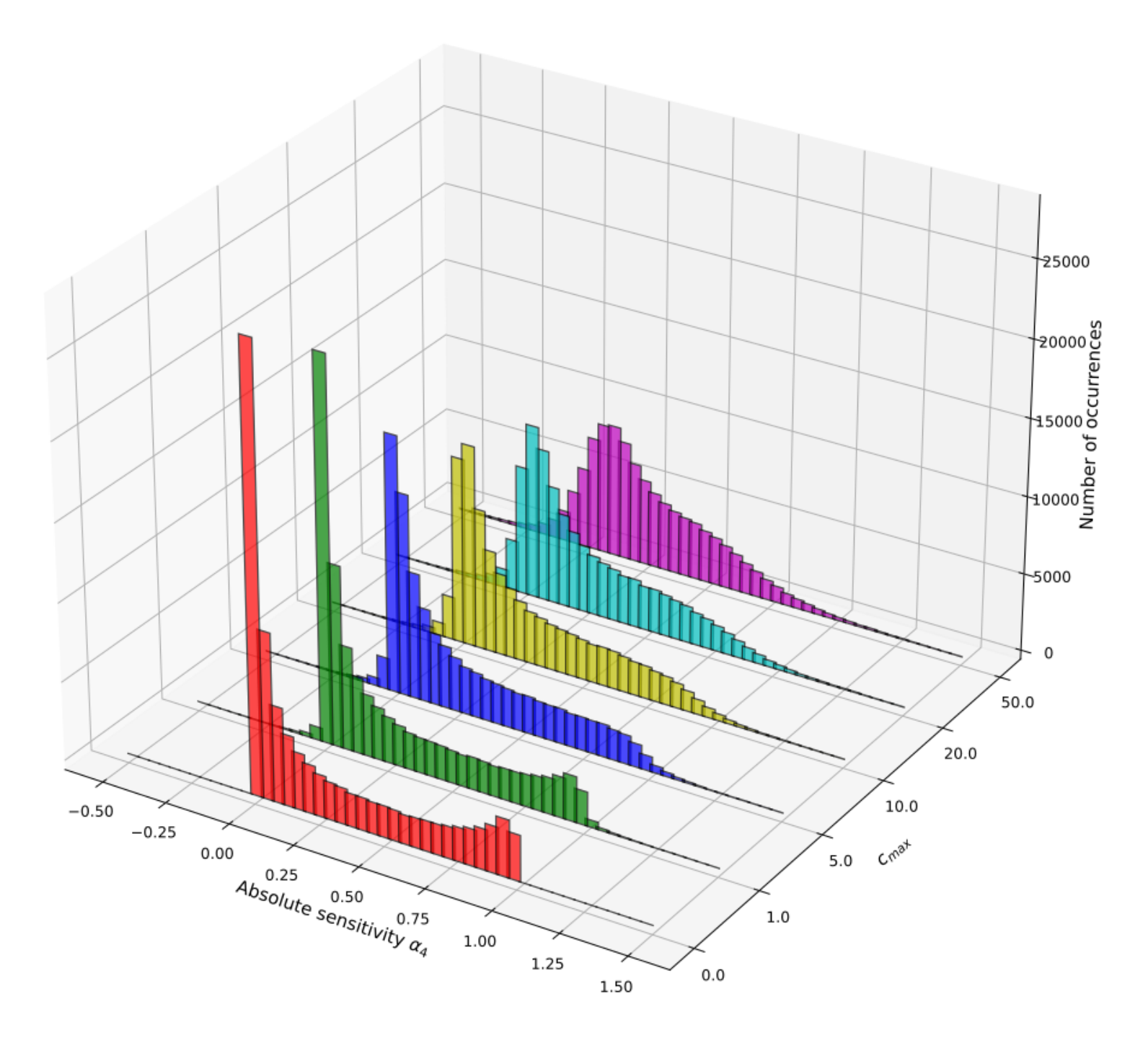}
    \caption{The histograms from Figure \ref{fig:numerics} from a different perspective.
    }
    \label{fig:numerics1}
\end{figure}
\begin{figure}[ht]
    \centering
    \includegraphics[scale=0.31]{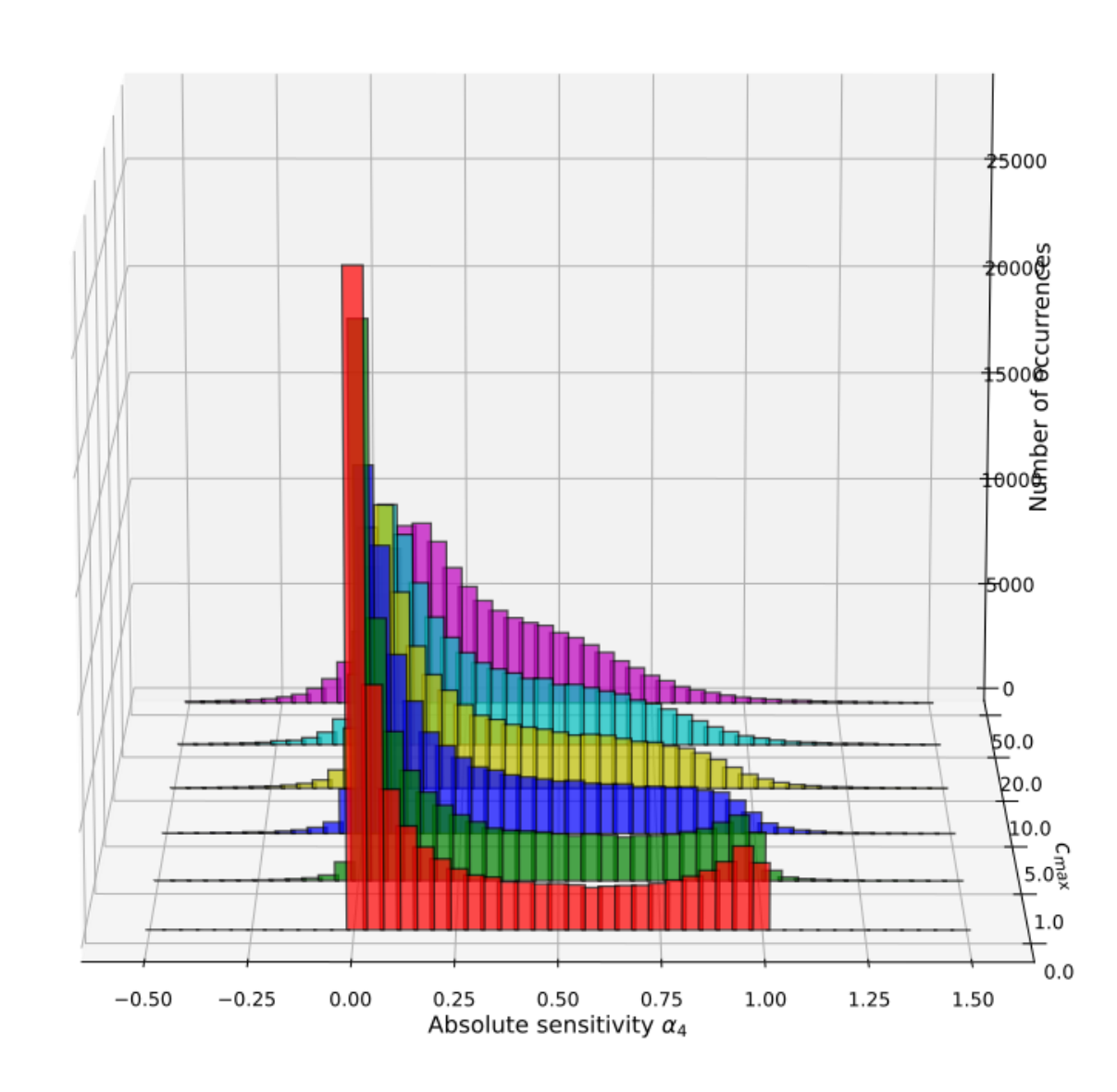}
    \caption{The histograms from Figure \ref{fig:numerics} from a different perspective.
    }
    \label{fig:numerics2}
\end{figure}

\newpage

\subsection{Extreme examples} \label{SMsec:extreme}

\paragraph{Hypersensitivity}
The following data leads to an absolute sensitivity value of $\alpha_4 \approx 4.05$:
\begin{equation*}
    x = (0.0330922,  6.89637913, 0.0206041,  2.94561781, 0.00851101)
\end{equation*}
and $C =$
\begin{equation*}
    \begin{pmatrix}
7.8317431 & -1.2979878 &  6.749366  & -4.08968516 & 16.77522956 \\
-1.2979878 & 19.37720745 &  7.56087486 &  12.93634691 & 14.89114232 \\
6.749366  &  7.56087486 & 10.06408641 & 8.62541839 & 15.93640615 \\
-4.08968516 & 12.93634691 & 8.62541839 & 10.22790791 & 19.4292935 \\
16.77522956  & 14.89114232 & 15.93640615 & 19.4292935 &  -5.86803736
\end{pmatrix}.
\end{equation*}

\paragraph{Negative self-feedback}
The following data results in a value of $\alpha_4 \approx -5.04$ for the absolute sensitivity:
\begin{equation*}
    x = (0.00361443, 0.05000115, 0.00515046, 0.19451776, 0.00579231)
\end{equation*}
and $C=$
\begin{equation*}
    \begin{pmatrix}
-2.32733249  & 16.92656819 &  -4.02679579 & -16.57677126 &  13.98898945\\
16.92656819  & -9.68791321 & -12.49816251 &  6.88887821 &  2.31959764\\
-4.02679579 & -12.49816251 & -14.96354673 &  12.10127076 & -11.29262967 \\
-16.57677126  &  6.88887821 & 12.10127076 &  5.70349432 & -0.75812926\\
13.98898945 &  2.31959764 & -11.29262967 &  -0.75812926 &  9.39908501
\end{pmatrix}.
\end{equation*}

\subsection{SageMath code} \label{sec:SMcalc}

The following code was used to compute $\alpha^{\mathrm{id}}$ and $\beta$ in SageMath \cite{sagemath}:

\begin{lstlisting}
In[1]:    
#work to first order in c, i.e., modulo c^2.

R.<x1,x2,x3,x4,x5> = PolynomialRing(QQ,5) ; R
F = R.fraction_field() ; F
G.<c> = F[]
Q = QuotientRing(G, G.ideal(c^2))
#R.<x1,x2,x3,x4,x5,c> = PolynomialRing(QQ,6) ; R
#Q = R.fraction_field() ; Q

#making the metrics
A = Matrix(Q,[[x1^(-1),0,0,0,0],[0,x2^(-1),0,0,0],[0,0,x3^(-1),0,0],[0,0,0,x4^(-1),0],[0,0,0,0,x5^(-1)]])
C = Matrix(Q,[[c,c,c,c,c],[c,c,c,c,c],[c,c,c,c,c],[c,c,c,c,c],[c,c,c,c,c]])
gX = A + C
AA = Matrix(Q,[[x1,0,0,0,0],[0,x2,0,0,0],[0,0,x3,0,0],[0,0,0,x4,0],[0,0,0,0,x5]])
gY = AA - AA * C * AA

#calculating the sensitivity
v1 = vector(Q,[-1,1,0,1,1])
v2 = vector(Q,[1,0,1,0,1])
e4 = vector(Q,[0,0,0,1,0])
#now making the orthogonal basis for T_xV = span(gY*v1,gY*v2)
vv1 = gY*v1
vv2 = gY*v2
w1 = vv1
w2 = vv2 - (vv2*gX*w1)/(w1*gX*w1)*w1
#calculating pi(e4)
pi = (e4*gX*w1)/(w1*gX*w1)*w1 + (e4*gX*w2)/(w2*gX*w2)*w2

#calculating $\alpha = \alpha^{id}(1 - c*\beta)$
alpha4 = (e4*pi)
f = alpha4.lift()
alphaid = f(c=0)
#g = -\alpha^{id}\beta
g = f.derivative(c)
#print(g)
beta = -g/alphaid
# -beta from the text
betaText = (x2 + x4)*(x1*x3 + 4*x1*x5 + x3*x5)/((x1 + x3 + x5)*(x1 + x2 + x4 + x5) - (x5 - x1)^2)
print(beta == betaText)
print(beta)

Out[1]:
True
(x1*x2*x3 + x1*x3*x4 + 4*x1*x2*x5 + x2*x3*x5 + 4*x1*x4*x5 + x3*x4*x5)/(x1*x2 + x1*x3 + x2*x3 + x1*x4 + x3*x4 + 4*x1*x5 + x2*x5 + x3*x5 + x4*x5)
\end{lstlisting}

\end{document}